\title{Quantum gradient estimation of Gevrey functions}
\author{Arjan Cornelissen\footnote{QuSoft / CWI / UvA, Science Park 123, 1098 XG Amsterdam, Netherlands. \texttt{arjan@cwi.nl}}}
\newcommand{\ket}[1]{\ensuremath{\left|#1\right\rangle}}
\newcommand{\bra}[1]{\ensuremath{\left\langle#1\right|}}
\newcommand{\braket}[2]{\ensuremath{\left\langle#1\middle|#2\right\rangle}}
\newcommand{\norm}[1]{\ensuremath{\left\|#1\right\|}}
\newcommand{\R}{\ensuremath{\mathbb{R}}}
\newcommand{\N}{\ensuremath{\mathbb{N}}}
\newcommand{\C}{\ensuremath{\mathbb{C}}}
\renewcommand{\P}{\ensuremath{\mathbb{P}}}
\newcommand{\E}{\ensuremath{\mathbb{E}}}
\renewcommand{\O}{\ensuremath{\mathcal{O}}}
\newcommand{\F}{\ensuremath{\mathcal{F}}}
\newcommand{\G}{\ensuremath{\mathcal{G}}}
\newcommand{\A}{\ensuremath{\mathcal{A}}}
\newcommand{\B}{\ensuremath{\mathcal{B}}}
\renewcommand{\d}{\ensuremath{\mathrm{d}}}
\renewcommand{\vec}[1]{\ensuremath{\mathbf{#1}}}
\DeclareMathOperator{\QFT}{QFT}
\DeclareMathOperator{\round}{round}
\DeclareMathOperator{\Var}{Var}
\newcounter{blockcounter}[section]
\newcounter{definitioncounter}
\newcounter{theoremcounter}
\newcounter{lemmacounter}
\newcounter{algorithmcounter}
\newenvironment{block}[3]{
	\begin{center}
	\begin{tcolorbox}[standard jigsaw, sharp corners, width=.97\textwidth, left=.5em, top=.1em, bottom = .1em, right=.5em, opacityback=0, boxrule=.03cm]
		\textbf{#1 #2 (#3):}
		
}{
	\end{tcolorbox}
	\end{center}
}
\newenvironment{definition}[1]{
\setcounter{definitioncounter}{\theblockcounter}
\refstepcounter{blockcounter}
\refstepcounter{definitioncounter}
\begin{block}{Definition}{\thedefinitioncounter}{#1}
}{\end{block}}
\newenvironment{theorem}[1]{
\setcounter{theoremcounter}{\theblockcounter}
\refstepcounter{blockcounter}
\refstepcounter{theoremcounter}
\begin{block}{Theorem}{\thetheoremcounter}{#1}
}{\end{block}}
\newenvironment{lemma}[1]{
\setcounter{lemmacounter}{\theblockcounter}
\refstepcounter{blockcounter}
\refstepcounter{lemmacounter}
\begin{block}{Lemma}{\thelemmacounter}{#1}
}{\end{block}}
\newenvironment{algorithm}[1]{
\setcounter{algorithmcounter}{\theblockcounter}
\refstepcounter{blockcounter}
\refstepcounter{algorithmcounter}
\begin{block}{Algorithm}{\thealgorithmcounter}{#1}
}{\end{block}}
\begin{document}
	\maketitle
	
	\section*{Abstract}
	
	Gradient-based numerical methods are ubiquitous in optimization techniques frequently applied in industry to solve practical problems. Often times, evaluating the objective function is a complicated process, so estimating the gradient of a function with as few function evaluations as possible is a natural problem.
	
	We investigate whether quantum computers can perform $\ell^{\infty}$-approximate gradient estimation of multivariate functions $f : \R^d \to \R$ with fewer function evaluations than classically. Following previous work by Jordan \cite{Jordan05} and Gily\'en et al.\ \cite{Gilyen18}, we prove that one can calculate an $\ell^{\infty}$-approximation of the gradient of $f$ with a query complexity that scales sublinearly with $d$ under weaker smoothness conditions than previously considered.
	
	Furthermore, for a particular subset of smoothness conditions, we prove a new lower bound on the query complexity of the gradient estimation problem, proving essential optimality of Gily\'en et al.'s gradient estimation algorithm in a broader range of parameter values, and affirming the validity of their conjecture~\cite{Gilyen18}. Moreover, we improve their lower bound qualitatively by showing that their algorithm is also optimal for functions that satisfy the imposed smoothness conditions globally instead of locally. Finally, we introduce new ideas to prove lower bounds on the query complexity of the $\ell^p$-approximate gradient estimation problem where $p \in [1,\infty)$, and prove that lifting Gily\'en et al.'s algorithm to this domain in the canonical manner is essentially optimal.
	
	\section{Introduction}
	
	Function optimization is a fundamental problem in mathematics and computer science. It finds many real-world applications and is typically used as a tool to tweak continuous parameters to maximize profit or minimize cost. As the field of quantum computing is progressing at a fast pace, the question whether quantum effects can be used to speed up the process of function optimization arises naturally.
	
	There exist many classical algorithms that perform function optimization. One of the most well-known is gradient ascent/descent. The algorithm first makes a random guess in the domain of the objective function, and then iteratively updates this guess in the direction in which the function changes fastest. This direction is determined by the \textit{gradient} of the function, and hence in every iteration this gradient is to be calculated. In this paper, we look at whether this gradient calculation step can be sped up using quantum effects.
	
	The functions that one tries to optimize are typically not given in closed form, and hence calculating the gradient can often not be done using analytical methods. Instead, one usually treats the function as a \textit{black box}, and then resorts to numerical methods that \textit{estimate} the gradient based on several function evaluations. The efficiency of these methods is typically measured in the number of function evaluations required. We use this black box model to evaluate the efficiency of our methods.
	
	To guarantee that the numerical methods employed yield an accurate estimate of the gradient, one often imposes some smoothness conditions on the objective function. Typically, one requires that its higher order (partial) derivatives are bounded or decaying. In this work, we consider an infinite family of smoothness conditions, and prove that the algorithm we construct produces accurate results under these restrictions.
	
	Finally, in the black box model, it is customary to investigate whether the number of queries to the black box can be lower bounded. In the classical setting, it is not hard to show that under any reasonable smoothness restrictions, one needs a number of queries linear in the dimension of the domain of the objective function. We obtain sublinear dependence on the dimension for a large part of the infinite family of smoothness conditions, and prove optimality for a considerable portion thereof.
	
	\subsection{Relation to earlier work on gradient estimation}
	
	This section covers how the results of this paper relate to earlier work. First, we elaborate on the statement of the problem, then discuss the input model, and subsequently consider the smoothness conditions. Finally, we state the results that we obtained, and how they compare with results obtained in previous works. The key ideas of the results mentioned in this section can be found in \autoref{subsec:algorithm_idea} and \autoref{subsec:lower_bound_idea}, and rigorous justification can be found in \autoref{sec:algorithm} and \autoref{sec:lower_bound}.
	
	The problem of gradient estimation was first considered by Jordan \cite{Jordan05}, and subsequently the results were generalized and improved by Gily\'en et al.~\cite{Gilyen18}. More specifically, Gily\'en et al.\ considered the the problem of estimating the gradient of a function $f : \R^d \to \R$ with high probability up to $\varepsilon$-precision coordinate-wise (i.e.\ up to $\ell^{\infty}$-norm). We will be looking at a slightly more general version of the problem, where one attempts to estimate the gradient of a function $f : \R^d \to \R$ with high probability up to $\varepsilon$-precision with respect to the $\ell^p$-norm, where $p \in [1,\infty]$. We refer to this problem as the \textit{gradient estimation problem w.r.t.\ the $\ell^p$-norm}. We can compare our results for $p = \infty$ to those found by Gily\'en et al.
	
	Gily\'en et al.\ motivated encoding the objective function $f : \R^d \to \R$ into a black box in the following manner. Let $G \subseteq \R^d$ be a set of points in the domain and let $\{\ket{\vec{x}} : \vec{x} \in G\}$ form an orthonormal set of states. Then, we assume to have access to the function $f$ via the following quantum operation:
	\begin{equation}
		\label{eq:phase_oracle}
		O_{f,G} : \ket{\vec{x}} \mapsto e^{if(\vec{x})}\ket{\vec{x}}.
	\end{equation}
	The quantum operation $O_{f,G}$ is referred to as a \textit{phase oracle}.\footnote{The formal definition can be found in \autoref{def:phase_oracle}.} Gily\'en et al.~showed that a variety of input models can be converted to this setting with an overhead that is at most logarithmic in the precision. For the details, we refer to \cite{Gilyen18}, especially to Section 4 and Appendix B. In this text, we will restrict our attention to this input model.
	
	Gily\'en et al.~considered the following smoothness condition on $f$.\footnote{This is not the exact smoothness condition that was investigated by Gily\'en et al., but it is easily shown that the results that are obtained using their bound are equal to the ones obtained with this bound, up to constant factors.} For some $c > 0$, $\sigma \in \R$, and all $\vec{x} \in \R^d$, $k \in \N_0$ and multi-indices $\alpha \in [d]^k$:\footnote{We will use the following notational convenience: for any $n \in \N$, $[n] = \{1,2, \dots, n\}$.}
	\begin{equation}
		\label{eq:smoothness_condition}
		|\partial_{\alpha}f(\vec{x})| \leq \frac12c^k(k!)^{\sigma}.
	\end{equation}
	Here we denote $\partial_{\alpha} = \partial_{\alpha_1}\partial_{\alpha_2} \cdots \partial_{\alpha_k}$, i.e., consecutive partial differentiation with respect to the coordinates $\alpha_1, \dots, \alpha_k$. A closely related smoothness condition has been studied before by Gevrey~\cite{Gevrey18}, so we will refer to this smoothness condition as the \textit{Gevrey condition}.
	
	Gily\'en et al.\ arrived at two results. First, they constructed an algorithm that solves the gradient estimation problem w.r.t.\ the $\ell^{\infty}$-norm under the promise that the function satisfies the Gevrey condition for some $\sigma \leq \frac12$. Second, they proved a lower bound on the query complexity of any algorithm that solves the gradient estimation problem w.r.t.\ the $\ell^{\infty}$-norm, whenever one restricts the allowed inputs to all functions that satisfy the Gevrey condition for some $\sigma \geq \frac12$. The results are shown in \autoref{tbl:results}.
	
	\begin{table}[h!]
		\centering
		\begin{tabular}{c|c|c|c|c}
			& \multicolumn{4}{c}{Query complexity to $O_{f,G}$} \\\cline{2-5}
			Smoothness parameter & \multicolumn{2}{c|}{Gily\'en et al.'s results} & \multicolumn{2}{c}{Our results} \\\cline{2-5}
			in the Gevrey condition & Algorithm & Lower bound & Algorithm & Lower bound \\\hline
			$\sigma \in \left[0,\frac12\right)$ & $\widetilde{\O}\left(\frac{c\sqrt{d}}{\varepsilon}\right)$ & $-$ & $\widetilde{O}\left(\frac{cd^{\frac12+\frac1p}}{\varepsilon}\right)$ & $\Omega\left(\frac{cd^{\frac12+\frac1p}}{\varepsilon}\right)$ \\
			$\sigma = \frac12$ & $\widetilde{\O}\left(\frac{c\sqrt{d}}{\varepsilon}\right)$ & $\Omega\left(\frac{c\sqrt{d}}{\varepsilon}\right)$ & $\widetilde{O}\left(\frac{cd^{\frac12+\frac1p}}{\varepsilon}\right)$ & $\Omega\left(\frac{cd^{\frac12+\frac1p}}{\varepsilon}\right)$ \\
			$\sigma \in \left(\frac12,1\right]$ & $-$ & $\Omega\left(\frac{c\sqrt{d}}{\varepsilon}\right)$ & $\widetilde{O}\left(\frac{cd^{\sigma+\frac1p}}{\varepsilon}\right)$ & $\Omega\left(\frac{cd^{\frac12+\frac1p}}{\varepsilon}\right)$
		\end{tabular}
		\caption{Comparison between Gily\'en et al.~\cite{Gilyen18} and our results.}
		\label{tbl:results}
	\end{table}

	Our results are also shown in \autoref{tbl:results}. For the gradient estimation problem w.r.t.\ the $\ell^{\infty}$-norm, we construct an algorithm for functions that satisfy the Gevrey condition with $\frac12 < \sigma \leq 1$ and we prove a query complexity lower bound for functions that satisfy the Gevrey condition with $0 \leq \sigma < \frac12$. Finally, we generalize all these results to estimating the gradient w.r.t.\ the $\ell^p$-norm.
	
	The algorithm we construct to solve the problem stated above is essentially the same as the one employed by Gily\'en et al., with minor tweaking of the parameters, and a slightly more direct proof of the lower bound on the success probability. The key ideas of this algorithm are described in \autoref{subsec:algorithm_idea}, and the algorithm is presented in full detail in \autoref{sec:algorithm}.
	
	We improve the lower bound proof of Gily\'en et al.\ in three ways. First, we use different objective functions, which satisfy the smoothness condition in \autoref{eq:smoothness_condition} with $\sigma = 0$, rather than $\sigma = \frac12$. Second, these new objective functions satisfy the Gevrey condition globally, which provides a qualitatively stronger result. Finally, we show that for any $p \in [1,\infty)$ we can reduce the argument to the case where $p = \infty$. The key details are elaborated upon in \autoref{subsec:lower_bound_idea}, and the full proof is presented in \autoref{sec:lower_bound}.
	
	We remark that Gily\'en et al.'s results show optimality in the case where $\sigma = \frac12$ and $p = \infty$. They also conjectured that their algorithm was optimal in the case where $\sigma \in [0,\frac12)$ and $p = \infty$. We increase the region of optimality to $\sigma \in [0,\frac12]$ and $p \in [1,\infty]$, and hence prove their conjecture.
	
	\subsection{Key ideas for the quantum gradient estimation algorithm}
	\label{subsec:algorithm_idea}
	
	In this section, we cover the key ideas that constitute the quantum gradient estimation algorithm. We start by introducing a very naive gradient estimation method. Then, we will improve this method using some more sophisticated numerical methods. Finally, we show how the quantum Fourier transform can speed up the algorithm even further.
	
	\subsubsection{Naive gradient estimation method}
	
	In this subsection, we analyze the most straightforward gradient estimation method. We first restrict to the one-dimensional case, i.e., $d = 1$, and then generalize to higher dimensions.
	
	Suppose we have a function $f : \R \to \R$, which satisfies the Gevrey condition in \autoref{eq:smoothness_condition} for some $c > 0$ and $\sigma \in \R$. We consider this function to be a black box, i.e., we can only access it by plugging in a $x$ and obtaining $f(x)$. The smoothness condition in the one-dimensional case can be rewritten as follows:
	\begin{equation}
		\label{eq:Gevrey_condition_1d}
		\forall x \in \R, \forall k \in \N_0, \qquad \left|f^{(k)}(x)\right| \leq \frac12c^k(k!)^{\sigma}.
	\end{equation}
	Now, suppose that we want to estimate the derivative of $f$ at $0$ up to precision $\varepsilon > 0$, i.e., we want to find a $g \in \R$ such that $|g - f'(0)| \leq \varepsilon$. One of the easiest methods to obtain estimates of $f'(0)$ is to choose some $r > 0$ and evaluate
	\begin{equation}
		\label{eq:derivative_estimation}
		g = \frac{f(r) - f(0)}{r}.
	\end{equation}
	When does this method yields an $\varepsilon$-approximate estimate of $f'(0)$? In \autoref{fig:jordan_approach}, we have drawn a function $f$ and the line tangent to $f$ at $x = 0$, given by $x \mapsto f(0) + f'(0)x$. Around this tangent line, we have drawn a cone whose sides have slopes that differ by exactly $\varepsilon$ from $f'(0)$.
	
	\begin{figure}[h!]
		\centering
		\begin{tikzpicture}
			\draw[->] (-2,0) -- (2,0) node[right] {$x$};
			\draw[->] (0,0) node[below] {$0$} -- (0,3) node[above] {$y$};
			\path[fill=black!20] (-2,.1) -- (2,2.9) -- (2,2.1) -- (-2,.9);
			\draw[domain=-2:2,thick] plot ({\x},{.5*sin(deg(\x))+1.5}) node[right] {$f(x)$};
			\draw[dashed] (-2,.5) -- (2,2.5) node[right] {$f(0) + f'(0)x$};
			\draw[dotted] (-1.25,3) -- (-1.25,0) node[below] {$-r$};
			\draw[dotted] (1.25,3) -- (1.25,0) node[below] {$r$};
		\end{tikzpicture}
		\caption{Whenever $f$ remains within the shaded cone at $x = r$, we can guarantee that the estimate of the derivative is $\varepsilon$-precise.}
		\label{fig:jordan_approach}
	\end{figure}
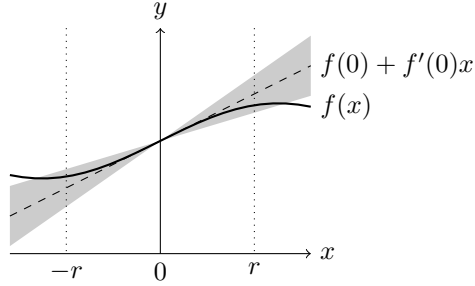

	For the derivative estimation method outlined in \autoref{eq:derivative_estimation} to yield an $\varepsilon$-precise estimate of $f'(0)$, we must choose $r > 0$ such that we can guarantee that $(r,f(r))$ is located in the cone. In other words, we must guarantee that
	\[|f(r) - (f(0) + f'(0)r)| \leq \varepsilon r.\]
	Let's bound the left-hand side using Taylor's theorem and the smoothness condition in \autoref{eq:Gevrey_condition_1d}. Observe that there exists a $\xi \in [0,r]$ such that
	\[|f(r) - (f(0) + f'(0)r)| \leq \frac12 |f''(\xi)| r^2 \leq \frac12 \cdot \left(\frac12 c^2 2^{\sigma}\right) \cdot r^2 = \frac{c^22^{\sigma}r}{4} \cdot r,\]
	so we require
	\[\frac{c^22^{\sigma}r}{4} \leq \varepsilon \qquad \Leftrightarrow \qquad r \leq \frac{4\varepsilon}{c^22^\sigma} = \Theta\left(\frac{\varepsilon}{c^2}\right).\]
	Let's choose $r = \Theta(\varepsilon/c^2)$. In \autoref{eq:derivative_estimation}, we divide by $r$, which means that our function evaluations must be at least $\Theta(\varepsilon^2/c^2)$-precise to ensure that we can calculate $g$ up to $\Theta(\varepsilon)$ precision. Using the phase estimation algorithm to perform $\Theta(\varepsilon^2/c^2)$-precise function evaluations, we must perform $\Theta(c^2/\varepsilon^2)$ queries to the phase oracle of $f$.
	
	Finally, if $d \in \N$ and $f : \R^d \to \R$ satisfies \autoref{eq:smoothness_condition} for some $c > 0$ and $\sigma \in \R$, then we can perform the above method in each dimension separately. So, we can estimate the gradient of $f$ evaluated at $\vec{0}$, i.e., $\nabla f(\vec{0})$, $\varepsilon$-precise coordinate-wise with high probability using $\Theta(c^2d/\varepsilon^2)$ queries to the phase oracle of $f$.
	
	This trivial approach has query complexity $\Theta(c^2d/\varepsilon^2)$. In the next subsection we will make a start with improving it.
	
	\subsubsection{Improvement using function smoothing}
	
	In this subsection, we use some more sophisticated numerical methods to improve the quantum gradient estimation algorithm outlined in the previous subsection. To that end, we again restrict our attention to the one-dimensional case first, and then generalize to higher dimensions.
	
	Suppose we are in the one-dimensional setting, i.e., we have a function $f : \R \to \R$ which for some $c > 0$ and $\sigma \in \R$ satisfies the one-dimensional Gevrey condition in \autoref{eq:Gevrey_condition_1d}.	The main problem with the above method was that we had to choose $r$ very small to ensure that $f$ remained in the shaded cone of \autoref{fig:jordan_approach}. The main idea in this section is to modify the function $f$ so that it stays in the cone for longer, and hence so that we can choose larger values for $r$.
	
	Whenever $\sigma \leq 1$, we can write $f$ in terms of its Taylor series:
	\[f(x) = \sum_{k=0}^{\infty} \frac{f^{(k)}(0)}{k!}x^k.\]
	If $\sigma < 1$, then this series is guaranteed to converge, i.e., the above relation holds for all $x \in \R$, and if $\sigma = 1$, then it converges at least for all $x$ in the interval $(-1/c,1/c)$.
	
	The key idea is that we can use linear combinations of $f$ to cancel the lowest order Taylor terms. To that end, we choose $m \in \N$ arbitrarily, and define, for all real finite sequences $a = (a_\ell)_{\ell=-m}^m$:
	\[f_{a,2m} : \R \to \R, \qquad f_{a,2m}(x) = \sum_{\ell=-m}^m a_{\ell}f(\ell x).\]
	We can now plug in the Taylor series and observe that for all $x \in (-1/(cm),1/(cm))$:\footnote{The Taylor series always converges absolutely on the interior of its region of convergence, so we can rearrange terms in any way we like. This justifies the exchange of the summation signs.}
	\[f_{a,2m}(x) = \sum_{\ell=-m}^m a_{\ell} \sum_{k=0}^{\infty} \frac{f^{(k)}(0)}{k!} (\ell x)^k = \sum_{k=0}^{\infty} \frac{f^{(k)}(0)}{k!} x^k \cdot \sum_{\ell=-m}^m a_{\ell}\ell^k.\]
	Now, we want to choose $a$ such that $x \mapsto f_{a,2m}(x)$ is close to $x \mapsto f(0) + f'(0)x$, i.e., that it stays in the shaded cone in \autoref{fig:jordan_approach} as long as possible. To that end, we require that as many of the lowest order Taylor terms as possible vanish, except for the constant and linear one. In other words, we require that for all $k \in \{0, \dots, 2m\}$ (here we use the convention that $0^0 = 1$)
	\begin{equation}
		\label{eq:coefficient_relation}
		\sum_{\ell=-m}^m a_{\ell}\ell^k = \begin{cases}
			1, & \text{if } k \in \{0,1\}, \\
			0, & \text{otherwise}.
		\end{cases}
	\end{equation}
	We denote the solution by $a^{(2m)}$, which can be given in closed form\footnote{For the exact values, see \autoref{def:coefficients}.}. We abbreviate the resulting function $f_{a^{(2m)},2m}$ to $f_{(2m)}$. These functions $f_{(2m)}$, we refer to as \textit{smoothings of $f$}. In \autoref{fig:smoothing}, we plot some smoothings of $f(x) = \sin(x)$. One can see that the region of approximate linearity is enlarged when $m$ is increased.
	
	\begin{figure}[h!]
		\centering
		\includegraphics[width=.5\textwidth]{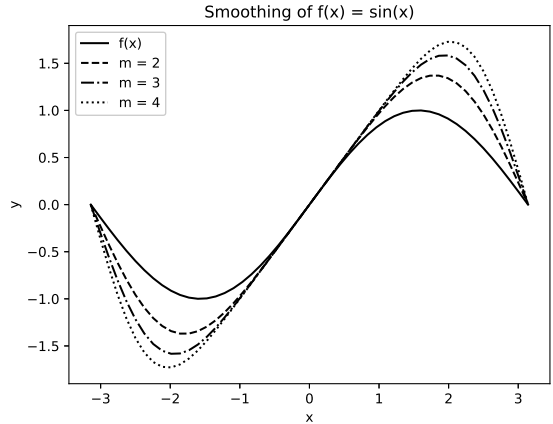}
		\caption{Smoothings of $f(x) = \sin(x)$.}
		\label{fig:smoothing}
	\end{figure}
	
	For $m = 1$, one can easily verify that $a^{(2)} = (-1/2,1,1/2)$ satisfies \autoref{eq:coefficient_relation}. Hence, when one applies \autoref{eq:derivative_estimation} to the function $f_{(2)}$ instead of $f$, one obtains
	\[g = \frac{f_{(2)}(r) - f_{(2)}(0)}{r} = \frac{-\frac12f(-r) + f(0) + \frac12f(r) - f(0)}{r} = \frac{f(r) - f(-r)}{2r}.\]
	So, by estimating the derivative of the smoothing $f_{(2)}$ of $f$, we recover the simple central difference scheme. Similarly, if we increase $m$, we recover the higher order central difference schemes.
	
	In the main body of this text, we quantify how much bigger this region of approximate linearity becomes upon increasing $m$. We obtain:\footnote{The multivariate version of this statement is proven in \autoref{lem:justification_linearity}.}
	\[|f_{(2m)}(r) - (f(0) + f'(0)r)| = \O(r^{2m+1}).\]
	Gily\'en et al.~\cite{Gilyen18} have shown that one can implement a phase oracle that accesses $f_{(2m)}$ using just $\widetilde{\O}(m)$ queries to the phase oracle that accesses $f$. Finally, we find that if we let $m$ scale logarithmically in $c/\varepsilon$, then it suffices to choose $r = \widetilde{O}(1/c)$ to ensure that $f_{(2m)}(r)$ is within the shaded cone in \autoref{fig:jordan_approach}.\footnote{The $1/c$ scaling is present in the definition of $r$ in \autoref{alg:QGE}.} Following the same arguments as outlined in the previous subsubsection, we find that we can estimate the derivative up to precision $\Theta(\varepsilon)$ using $\widetilde{\Theta}(\varepsilon/c)$-precise evaluations of $f$. Hence, the query complexity becomes $\widetilde{\Theta}(c/\varepsilon)$. In the multidimensional case, we can still do this procedure in each dimension separately, and hence the resulting query complexity becomes $\widetilde{\Theta}(cd/\varepsilon)$.
	
	So far, apart from the phase estimation procedure to obtain binary function evaluations from phase oracles, we have not yet used any techniques that are inherently quantum. We have, however, reduced the query complexity from quadratic in $c/\varepsilon$ to linear, but the dependence on $d$ remained unaffected. In the next subsubsection, we will investigate how we can use quantum effects to reduce the query complexity dependence on the dimension $d$.
	
	\subsubsection{Improvement using quantum Fourier transform}
	
	In this subsection, we elaborate on how the quantum Fourier transform can be used to speed up the methods described in the previous subsections. To that end, we first revise the definition of the $n$-qubit quantum Fourier transform, where $n \in \N$ and $j \in \{-2^{n-1}, \dots, 2^{n-1} - 1\}$:\footnote{We assume that the computational basis states $\ket{j}$ are labeled by signed $n$-bit integers $j \in \{-2^{n-1}, \dots, 2^{n-1}-1\}$.}
	\[\QFT_{2^n}\ket{j} = \frac{1}{\sqrt{2^n}} \sum_{k=-2^{n-1}}^{2^{n-1}-1} e^{\frac{2\pi i}{2^n} \cdot jk} \ket{k}.\]
	If we apply the quantum Fourier transform to a computational basis state $\ket{j}$, we obtain a state in which the complex angle of the amplitude of the $k$-th computational basis state depends linearly on $k$. Moreover, the slope of this linear dependence is proportional to $j$.
	
	The key idea is that this effect can be inverted. If we have a uniform superposition of computational basis states $\ket{k}$ with phases $e^{iak}$, for some $a \in \R$ and for each $k \in \{-2^{n-1}, \dots, 2^{n-1}-1\}$, we can employ the \textit{inverse quantum Fourier transform} to obtain an estimate of the real parameter $a$. In other words, the inverse quantum Fourier transform allows for recovering the slope of the phase as a function of $k$. We obtain the following relation, where the approximation symbol is justified by the robustness of the quantum Fourier transform (see for instance \cite{Nielsen00}, Equation 5.34). For all $n \in \N$ and $a \in (-2\pi/3,2\pi/3)$,
	\[\QFT^{\dagger}_{2^n} \left[\frac{1}{\sqrt{2^n}} \sum_{k=-2^{n-1}}^{2^{n-1}-1} e^{iak}\ket{k}\right] \approx \ket{\text{round}\left(\frac{2^na}{2\pi}\right)}.\]
	This idea generalizes well to higher dimensions. Suppose we have $d \in \N$ registers in a product state, each of which is in a state that has a linearly varying phase. Then, we can apply the inverse quantum Fourier transform on each of the registers individually, and recover each of the slopes. For all $n \in \N$ and vectors $\vec{a} \in (-2\pi/3,2\pi/3)^d$,
	\begin{align*}
		\left(\QFT_{2^n}^{\dagger}\right)^{\otimes d} \left[\frac{1}{\sqrt{2^{nd}}} \sum_{\vec{k} \in \{-2^{n-1}, \dots, 2^{n-1}-1\}^d} e^{i\vec{a} \cdot \vec{k}}\ket{\vec{k}}\right] &= \bigotimes_{j=1}^d \QFT_{2^n}^{\dagger} \left[\frac{1}{\sqrt{2^n}} \sum_{k_j=-2^{n-1}}^{2^{n-1}} e^{ia_jk_j}\ket{k_j}\right] \\
		&\approx \bigotimes_{j=1}^d \ket{\round\left(\frac{2^na_j}{2\pi}\right)} = \ket{\round\left(\frac{2^n}{2\pi}\vec{a}\right)}.
	\end{align*}
	The above relation motivates a surprisingly simple quantum algorithm that estimates the gradient. We define a uniform grid centered around the origin, with side length $r > 0$. The points of this grid are denoted by $\vec{x_k}$,\footnote{The addition of $\vec{\frac12}$ is to make sure that the grid is centered around the origin, as this maps elements from the set $\{-2^{n-1}, \dots, 2^{n-1}-1\}^d$ to the set $\{-2^{n-1}+\frac12, \dots, 2^{n-1}-\frac12\}^d$.}
	\[\forall \vec{k} \in \{-2^{n-1}, \dots, 2^{n-1}-1\}^d, \qquad \vec{x}_{\vec{k}} = \frac{r}{2^n}\left(\vec{k} + \vec{\frac12}\right),\]
	and the collection of all these points, we denote by $G$:
	\[G = \{\vec{x}_{\vec{k}} : \vec{k} \in \{-2^{n-1}, \dots, 2^{n-1}-1\}^d\} \subseteq \left[-\frac{r}{2},\frac{r}{2}\right]^d.\]
	We present a graphical depiction of the set $G$ in \autoref{fig:grid}.
	
	\begin{figure}[h!]
		\centering
		\begin{tikzpicture}[element/.style={draw,circle,fill,inner sep=.1em}]
			\draw[->] (-1.5,0) -- (1.5,0) node[right] {$x$};
			\draw[->] (0,-1.5) -- (0,1.5) node[above] {$y$};
			\foreach \x in {-.875,-.625,...,.875}
				\foreach \y in {-.875,-.625,...,.875}
					\node[element] at (\x,\y) {};
			\draw[dashed] (-1,-1) -- (1,-1) -- (1,1) -- (-1,1) -- cycle;
			\draw[<->] (-1,-1.1) -- node[below right] {$r$} (1,-1.1);
		\end{tikzpicture}
		\caption{Graphical depiction of the grid employed in the gradient estimation algorithm where $d = 2$ and $n = 3$. The dots denote the elements of $G$. The side length of the grid is $r$, and it is placed symmetrically around the origin. The number of points along each direction is $2^n$.}
		\label{fig:grid}
	\end{figure}
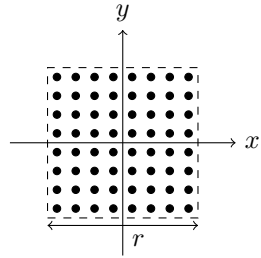
	
	Moreover, we associate computational basis states to the elements in this grid:
	\[\forall \vec{k} \in \{-2^{n-1}, \dots, 2^{n-1}-1\}^d, \qquad \ket{\vec{x}_\vec{k}} = \ket{\vec{k}} = \bigotimes_{j=1}^d \ket{k_j}.\]
	The algorithm uses ideas from the previous subsection. We define a smoothing of $f : \R^d \to \R$, similarly as in the previous section. For all $m \in \N$:
	\[f_{(2m)} : \R^d \to \R, \qquad f_{(2m)}(\vec{x}) = \sum_{\ell=-m}^m a_{\ell}^{(2m)}f(\ell\vec{x})\]
	Similarly as in the previous section, $f_{(2m)}$ is close to $f(\vec{0}) + \nabla f(\vec{0}) \cdot \vec{x}$. Moreover, Gily\'en et al.~\cite{Gilyen18} proved that a phase oracle $O_{f_{(2m)},G}$ can be implemented using $\widetilde{\O}(m)$ queries to $O_{f,G}$.
	
	We propose the following algorithm to estimate $\nabla f(\vec{0})$ up to $\varepsilon$-precision coordinate-wise. The parameters $S$, $m$ and $r$ will be chosen later.
	\begin{enumerate}
		\setlength\itemsep{-.6em}
		\item Prepare a uniform superposition over the grid $G$ with side length $r > 0$.
		\item Apply the phase oracle $O_{f_{(2m)},G}$ a total of $S \in \N$ times.
		\item Apply the inverse quantum Fourier transform on each register individually.
		\item Measure in the computational basis and denote the resulting vector $\vec{h} \in \{-2^{n-1}, \dots, 2^{n-1}-1\}^d$.
		\item Calculate
		\[\vec{g} = \frac{2\pi}{Sr}\vec{h}.\]
	\end{enumerate}
	After step 1, we have the following state:
	\[\ket{\psi_1} = \frac{1}{\sqrt{2^{nd}}} \sum_{\vec{x} \in G} \ket{\vec{x}}.\]
	After step 2:
	\[\ket{\psi_2} = \frac{1}{\sqrt{2^{nd}}} \sum_{\vec{x} \in G} e^{iSf_{(2m)}(\vec{x})}\ket{\vec{x}}.\]
	If $f_{(2m)}$ is close to linear, then $f_{(2m)}(\vec{x}) \approx f(\vec{0}) + \nabla f(\vec{0}) \cdot \vec{x}$. So, we obtain (throwing away an unimportant constant phase factor):
	\begin{equation}
		\label{eq:approximate_linearity}
		\ket{\psi_2} \approx \frac{1}{\sqrt{2^{nd}}}\sum_{\vec{x} \in G} e^{iS\nabla f(\vec{0}) \cdot \vec{x}} \ket{\vec{x}} = \frac{1}{\sqrt{2^{nd}}} \sum_{\vec{k} \in \{-2^{n-1}, \dots, 2^{n-1}-1\}^d} e^{\frac{iSr}{2^n}\nabla f(\vec{0}) \cdot \vec{k}} \ket{\vec{k}}.
	\end{equation}
	Applying the inverse quantum Fourier transform yields, approximately:
	\[\ket{\psi_3} \approx \ket{\round\left(\frac{Sr}{2\pi}\nabla f(\vec{0})\right)}.\]
	Upon measuring, we obtain $\vec{h} \approx Sr/(2\pi)\nabla f(\vec{0})$. Hence, we output $\vec{g} = 2\pi/(Sr)\vec{h} \approx \nabla f(\vec{0})$.
	
	The number of queries to the phase oracle $O_{f,G}$ in the above algorithm is given by $\widetilde{O}(mS)$. As we do not perform some procedure individually for each coordinate but rather have one procedure that determines the entries of the gradient simultaneously, there is a possibility that, after choosing the appropriate parameters $S$, $m$ and $r$, the query complexity of this algorithm scales sublinearly in $d$. To that end, we investigate what values of these parameters will ensure that the estimate of the gradient is sufficiently close to the actual value with high probability.
	
	As $\vec{h}$ is a vector of integers, it can differ from $Sr/(2\pi)\nabla f(\vec{0})$ by at least $1/2$ coordinate-wise. Hence, $\vec{g}$ can differ from $\nabla f(\vec{0})$ by at most $\pi/(Sr)$. If we want to approximate $\nabla f(\vec{0})$ up to precision $\varepsilon$, we must ensure that $\pi/(Sr) \leq \varepsilon$, i.e., $S \geq \pi/(\varepsilon r)$. But $S$ is proportional to the query complexity of the algorithm, so to minimize it, we want to choose $r$ as big as possible under the restriction that the approximate linearity used in \autoref{eq:approximate_linearity} is justified. Note that the region of approximate linearity must now contain the entire $d$-dimensional grid $G$ on which the function is evaluated, instead of the one-dimensional interval that we considered in the previous subsection.
	
	The key technique that is used in evaluating how far $f_{(2m)}$ is from being linear, is the method of bounding the \textit{second moments of higher order bounded tensors}, as first described by Gily\'en et al.~\cite{Gilyen18}. Intuitively, one can imagine that the function is most likely to be far from linear in the corners of the grid $G$, as these are furthest from the origin. However, this method exploits the fact that if $f$ is sufficiently smooth, it cannot be far from linear in all corners of $G$ at the same time. One can see this by looking at all second order terms in the two-dimensional case: $(x,y) \mapsto x^2$ and $(x,y) \mapsto y^2$ are positive everywhere, hence also in the corners of any grid $G$. However, $(x,y) \mapsto xy$ is only positive in two of the four corners, and is negative in the other two corners, so it can only amplify the deviation from linear in half of the corners, and will cancel this deviation in the other two corners. It is this effect that is very carefully exploited in the higher dimensional and higher order case, using the method proposed by Gily\'en et al.
	
	To justify the approximation symbol in \autoref{eq:approximate_linearity}, we show that it suffices to choose:\footnote{\autoref{lem:justification_linearity} justifies these choices.}
	\[m = \begin{cases}
		\Theta\left(\log(\frac{c\sqrt{d}}{\varepsilon})\right), & \text{if } \sigma \leq \frac12, \\
		\Theta\left(\log(\frac{cd^{\sigma}}{\varepsilon})\right), & \text{if } \sigma \in \left(\frac12,1\right],
	\end{cases} \qquad r = \begin{cases}
		\widetilde{\Theta}\left(\frac{1}{c\sqrt{d}}\right), & \text{if } \sigma \leq \frac12, \\
		\widetilde{\Theta}\left(\frac{1}{cd^{\sigma}}\right), & \text{if } \sigma \in \left(\frac12,1\right].
	\end{cases}\]
	Choosing $S = \Theta(1/(r\varepsilon))$, the query complexity becomes
	\begin{equation}
		\label{eq:query_complexities}
		\widetilde{\O}(mS) = \begin{cases}
			\widetilde{\O}\left(\frac{c\sqrt{d}}{\varepsilon}\right), & \text{if } \sigma \leq \frac12, \\
			\widetilde{\O}\left(\frac{cd^{\sigma}}{\varepsilon}\right), & \text{if } \sigma \in \left(\frac12,1\right].
		\end{cases}
	\end{equation}
	Note that for $\sigma < 1$, we obtain an improvement over the query complexity achieved in the previous subsection. In the case where $\sigma \leq \frac12$, we even obtain a quadratic speed-up in $d$.
	
	As a final note, if we want to estimate the gradient $\varepsilon$-precisely w.r.t.\ the $\ell^p$-norm, for some $p \in [1,\infty]$, then we can simply run this algorithm with the accuracy parameter $\varepsilon' = \varepsilon/d^{1/p}$. Plugging $\varepsilon'$ in the query complexities in \autoref{eq:query_complexities}, we obtain the query complexities in \autoref{tbl:results}.
	
	This completes the informal description of the quantum gradient estimation algorithm that we constructed. All the missing details can be found in \autoref{sec:algorithm}, and the exact statement of the algorithm, with the precise choice of all the parameters, can be found in \autoref{alg:QGE}.
	
	\subsection{Key ideas for the lower bound of quantum gradient estimation}
	\label{subsec:lower_bound_idea}
	
	In this section, we elaborate on the key ideas that improve on Gily\'en et al.'s proof of the lower bound on the query complexity of the gradient estimation problem. There are three main improvements, each of which we cover individually in \autoref{subsec:instance_selection}, \autoref{subsec:claw_selection} and \autoref{subsec:median_trick_substitution}.
	
	To describe how one proves lower bounds on the query complexity of the gradient estimation problem, let's first consider a toy example. Suppose we take the functions $f_0(x) = 0$ and $f_{\varepsilon}(x) = 2\varepsilon xe^{-\frac12x^2}$, where $\varepsilon > 0$ is some small positive number. These functions are close to each other:
	\[\norm{f_{\varepsilon} - f_0}_{\infty} = \sup_{x \in \R} |f_{\varepsilon}(x) - f_0(x)| = f_{\varepsilon}(1) = \frac{2\varepsilon}{\sqrt{e}}.\]
	However, their derivatives are not equal:
	\[f'(0) = 0 \qquad \text{and} \qquad f'_{\varepsilon}(0) = 2\varepsilon.\]
	Moreover, as the derivatives differ by $2\varepsilon$, any algorithm that finds approximations of the derivative with precision $\varepsilon$ must yield different outputs when run on these two instances.
	
	Because the function values are close, the corresponding phase oracles $O_{f_0,G}$ and $O_{f_\varepsilon,G}$, as introduced in \autoref{eq:phase_oracle}, act in an almost identical manner. However, any algorithm that determines the derivative up to precision $\varepsilon$ must be able to determine whether it is querying $O_{f_0,G}$ or $O_{f_{\varepsilon},G}$. Intuitively, we argue that the algorithm must therefore query these oracles many times to obtain the required \textit{discriminating power} and be able to distinguish between these two instances.
	
	One can come up with more sophisticated methods if one considers more than just two functions that the algorithm must distinguish. One of these methods is the hybrid method, and this is the method that Gily\'en et al.\ employed to prove their lower bound on the query complexity of the gradient estimation problem.
	
	In the hybrid method, one considers one \textit{central instance} of the problem. Next, one picks $N \in \N$ \textit{peripheral instances}, each of which any algorithm that solves the problem must distinguish from the central instance. Pictorially, one can think of a \textit{claw}, in which the center of the claw is formed by the central instance and the endpoints of the legs of the claw form the peripheral instances. This claw is displayed in \autoref{fig:claw}.\footnote{For those that are familiar with the quantum adversary method, the hybrid method is just the quantum adversary method without negative weights, where there is just one instance on one side of the relation.}
	
	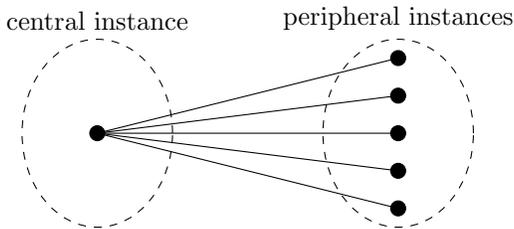
\begin{figure}[h!]
		\centering
		\begin{tikzpicture}[vertex/.style={circle, draw, fill=black, inner sep = .2em},xscale=4,yscale=.5]
			\node[vertex] at (0,0) {};
			\foreach \x in {-2,...,2}
				\draw (0,0) -- (1,\x) node[vertex] {};
			\draw[dashed] (0,0) ellipse (.25 and 2.5);
			\node[above] at (0,2.5) {central instance};
			\draw[dashed] (1,0) ellipse (.25 and 2.5);
			\node[above] at (1,2.5) {peripheral instances};
		\end{tikzpicture}
		\caption{This is a claw, as employed by the hybrid method. The single instance on the left side is referred to as the central instance, and the instances on the right side are referred to as the peripheral instances. The solid lines pair up instances that any algorithm should be able to distinguish from one another. These lines are referred to as the legs of the claw.}
		\label{fig:claw}
	\end{figure}
	
	\subsubsection{Instance selection}
	\label{subsec:instance_selection}
	
	The first improvement over the lower bound of Gily\'en et al.\ lies in the choice of peripheral instances in the hybrid method. We first describe Gily\'en et al.'s choices for the instances in the hybrid method, and then we elaborate on which ones we chose.
	
	The central instance that Gily\'en et al.\ use in the hybrid method is the following function $f_0$:
	\begin{equation}
		\label{eq:Gilyen_central_instance}
		f_0 : \R^d \to \R, \qquad f_0(\vec{x}) = 0.
	\end{equation}
	Furthermore, they choose the following $d$ peripheral instances. For some $c > 0$ and all $j \in [d]$:
	\begin{equation}
		\label{eq:Gilyen_peripheral_instances}
		f_j : \R^d \to \R, \qquad f_j(\vec{x}) = 2\varepsilon x_je^{-\frac12 c^2\norm{\vec{x}}^2}.
	\end{equation}
	These functions satisfy the Gevrey smoothness condition, \autoref{eq:smoothness_condition}, at $\vec{x} = \vec{0}$, with $c > 0$ and $\sigma \geq \frac12$. However, whether these functions satisfy the same conditions in an open region around $\vec{0}$ is not clear.
	
	The gradients of these functions are as follows:
	\[\nabla f_0(\vec{0}) = \vec{0} \qquad \text{and} \qquad \forall j \in [d], \qquad \nabla f_j(\vec{0}) = 2\varepsilon\vec{e}_j.\]
	Any quantum algorithm that estimates gradients up to precision $\varepsilon$ w.r.t.\ the $\ell^{\infty}$-norm must be able to distinguish between the central instance and any of the peripheral instances. Using these choices for central and peripheral instances, Gily\'en et al.\ obtained the lower bound results shown in \autoref{tbl:results}.
	
	Let's picture the gradients of these functions as vertices in the \textit{gradient space} $\R^d$. Moreover, let's connect the vertex corresponding to the central instance with the vertices corresponding to the peripheral instances, just like in \autoref{fig:claw}. The resulting picture is shown in \autoref{fig:gradient_claw}. Observe that the vertices form a claw, where any pair of legs is orthogonal to one another. Moreover, the length of the legs is $2\varepsilon$.
	
	\begin{figure}[h!]
		\centering
		\begin{tikzpicture}[x = {(-0.5cm,-0.5cm)},
							y = {(0.9659cm,-0.25882cm)},
							z = {(0cm,1cm)},
							scale = 2,
							vertex/.style={circle, draw, fill=black, inner sep = .2em}]
			\draw[thin,->] (-1,0,0) -- (1.5,0,0) node[below] {$x$};
			\draw[thin,->] (0,-1,0) -- (0,1.5,0) node[right] {$y$};
			\draw[thin,->] (0,0,-1) -- (0,0,1.5) node[above] {$z$};
			\node[vertex] (v0) at (0,0,0) {};
			\node[vertex] (v1) at (1,0,0) {};
			\node[vertex] (v2) at (0,1,0) {};
			\node[vertex] (v3) at (0,0,1) {};
			\draw[ultra thick] (v0) node[left] {$\nabla f_0(\vec{0})$} -- (v1) node[left] {$\nabla f_1(\vec{0})$};
			\draw[ultra thick] (v0) -- (v2) node[above] {$\nabla f_2(\vec{0})$};
			\draw[ultra thick] (v0) -- (v3) node[right] {$\nabla f_3(\vec{0})$};
		\end{tikzpicture}
		\caption{Graphical depiction of the gradients of the instances defined in \autoref{eq:Gilyen_central_instance} and \autoref{eq:Gilyen_peripheral_instances}, with $d = 3$. The central instance, $f_0$, is in the center of the claw, and the three peripheral instances are at the endpoints of the legs of the claw.}
		\label{fig:gradient_claw}
	\end{figure}
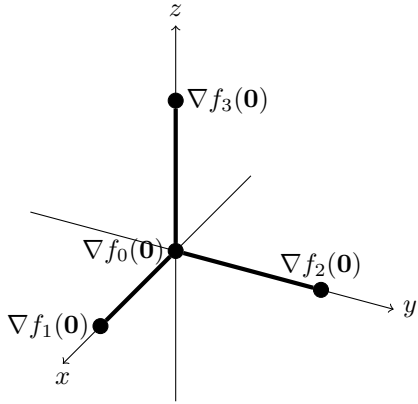

	Our first key observation is as follows: the peripheral instances can be changed, without influencing their gradients. The exact same claw as displayed in \autoref{fig:gradient_claw} can be obtained using more regular instances, i.e., functions that satisfy \autoref{eq:smoothness_condition} with $\sigma \geq 0$ rather than just $\sigma \geq \frac12$. Specifically, the peripheral instances that we select are, for all $j \in [d]$:
	\begin{equation}
		\label{eq:peripheral_instances}
		f_j : \R^d \to \R, \qquad f_j(\vec{x}) = \frac{2\varepsilon}{c}\sin(cx_j)\prod_{\underset{k\neq j}{k=1}}^d \cos(cx_k).
	\end{equation}
	We show that these functions satisfy the Gevrey smoothness condition for $\sigma = 0$ on all of $\R^d$, instead of just the point $\vec{x} = \vec{0}$. This makes the result a lot more useful in practical settings, as we now know that even if we consider objective functions that satisfy the smoothness condition on an open domain with $\sigma \geq 0$, we cannot obtain a gradient estimation algorithm with query complexity smaller than the lower bounds shown in \autoref{tbl:results}. This was not at all obvious from Gily\'en et al.'s results, hence we improve on them not only quantitatively, but also qualitatively.
	
	Using Gily\'en et al.'s hybrid method argument with the peripheral instances in \autoref{eq:peripheral_instances} allows us to conclude our results in \autoref{tbl:results} in the case where $\sigma \geq 0$ and $p = \infty$.
	
	\subsubsection{Claw selection in Hamming cube}
	\label{subsec:claw_selection}
	
	To prove the results for $p \in [1,\infty)$ shown in \autoref{tbl:results}, we need some new ideas. In particular, there are two fundamental observations that we need to develop a better lower bound for $p < \infty$.
	
	Let's revisit the claw shown in \autoref{fig:gradient_claw}. Intuitively, the shorter the legs, the closer the gradients are, so the more similar the objective functions can be, and hence the higher the query complexity we obtain. Hence, we want to minimize the length of the legs of the claw in the gradient space to maximize the lower bound on the query complexity of gradient estimation, but we cannot make the legs too short, because then we the algorithm would not be able to distinguish between the instances. The main idea is to construct a claw that has legs shorter than $2\varepsilon$, for which we can prove that the algorithm can distinguish the peripheral instances from the central instance.
	
	To that end, let's set $p = 1$ for the time being. In \autoref{fig:Hamming_cube}, we draw the unit ball in gradient space with respect to the $\ell^1$-norm. Alongside unit ball we have also drawn the Hamming cube, centered around the origin, with radius $2/d$. The vertices of the Hamming cube are gradients of particular instances of the gradient estimation problem. The precise definition of these functions is not relevant for now, but they can be found in \autoref{def:test_functions} and their properties are proven in \autoref{lem:test_functions_smoothness} and \autoref{lem:gradient_test_functions}.\footnote{In the precise definition, we use a Hamming cube of radius $73/d$, instead of $2/d$. This implies that interesting effects do not happen until $d > 73$, though, but this is very hard to visualize. Hence, for conceptual simplicity, we explain the method using a Hamming cube with radius $2/d$.}
	
	\begin{figure}[h!]
		\centering
		\begin{tikzpicture}[x = {(-0.5cm,-0.5cm)},
							y = {(0.9659cm,-0.25882cm)},
							z = {(0cm,1cm)},
							scale = .5,
							vertex/.style={circle, draw, fill=black, inner sep = .2em},
							face/.style={fill=black!20}]
			\draw (-6,0,0) -- (-5,0,0);
			\draw[->] (0,-6,0) -- (0,6,0) node[right] {$y$};
			\draw[->] (0,0,-6) -- (0,0,6) node[above] {$z$};
			\draw[thick] (3,-3,-3) -- (-3,-3,-3); % Bottom left ridge
			\draw (-3,3,-3) -- (-3,-3,-3) -- (-3,-3,3); % Back ridges
			\fill[black!10, opacity=.8] (3,-3,-3) -- (3,-3,0) -- (3,-2,0) -- (3,0,-2) -- (3,0,-3) -- cycle; % Front face, bottom left part
			\fill[black!10, opacity=.8] (-3,3,-3) -- (-3,3,0) -- (-2,3,0) -- (0,3,-2) -- (0,3,-3) -- cycle; % Right face, bottom back part
			\fill[black!10, opacity=.8] (-3,-3,3) -- (0,-3,3) -- (0,-2,3) -- (-2,0,3) -- (-3,0,3) -- cycle; % Top face, back left part
			\draw[thick] (-3,3,0) -- (-3,3,-3); % Back right ridge, bottom part
			\draw[face] (-5,0,0) -- (0,-5,0) -- (0,0,-5) -- cycle;
			\draw[face] (-5,0,0) -- (0,5,0) -- (0,0,-5) -- cycle;
			\draw[face] (5,0,0) -- (0,-5,0) -- (0,0,-5) -- cycle;
			\draw[face] (5,0,0) -- (0,5,0) -- (0,0,-5) -- cycle;
			\draw[face] (-5,0,0) -- (0,-5,0) -- (0,0,5) -- cycle;
			\draw[face] (-5,0,0) -- (0,5,0) -- (0,0,5) -- cycle;
			\draw[face] (5,0,0) -- (0,-5,0) -- (0,0,5) -- cycle;
			\draw[face] (5,0,0) -- (0,5,0) -- (0,0,5) -- cycle;
			\fill[black!10, opacity=.8] (3,3,3) -- (3,3,0) -- (3,2,0) -- (3,0,2) -- (3,0,3) -- cycle;
			\fill[black!10, opacity=.8] (3,3,-3) -- (3,3,0) -- (3,2,0) -- (3,0,-2) -- (3,0,-3) -- cycle;
			\fill[black!10, opacity=.8] (3,-3,3) -- (3,-3,0) -- (3,-2,0) -- (3,0,2) -- (3,0,3) -- cycle;
			\fill[black!10, opacity=.8] (3,3,3) -- (3,3,0) -- (2,3,0) -- (0,3,2) -- (0,3,3) -- cycle;
			\fill[black!10, opacity=.8] (3,3,-3) -- (3,3,0) -- (2,3,0) -- (0,3,-2) -- (0,3,-3) -- cycle;
			\fill[black!10, opacity=.8] (-3,3,3) -- (-3,3,0) -- (-2,3,0) -- (0,3,2) -- (0,3,3) -- cycle;
			\fill[black!10, opacity=.8] (3,3,3) -- (0,3,3) -- (0,2,3) -- (2,0,3) -- (3,0,3) -- cycle;
			\fill[black!10, opacity=.8] (-3,3,3) -- (0,3,3) -- (0,2,3) -- (-2,0,3) -- (-3,0,3) -- cycle;
			\fill[black!10, opacity=.8] (3,-3,3) -- (0,-3,3) -- (0,-2,3) -- (2,0,3) -- (3,0,3) -- cycle;
			\draw[thick] (3,3,3) -- (3,3,-3) -- (3,-3,-3) -- (3,-3,3) -- cycle; % Front ridges
			\draw[thick] (3,3,3) -- (-3,3,3); %Right top ridge
			\draw[thick] (3,3,-3) -- (-3,3,-3); % Right bottom ridge
			\draw[thick] (3,-3,3) -- (-3,-3,3); % Top left ridge
			\draw[thick] (-3,3,3) -- (-3,-3,3); % Top back ridge
			\draw[thick] (-3,3,3) -- (-3,3,0); % Back right ridge, top part
			\draw[->] (5,0,0) -- (6,0,0) node[below] {$x$};
			\node[vertex] (111) at (3,3,3) {};
			\node[below left] at (111) {$\nabla f_{(1,1,1)}(\vec{0})$};
			\node[vertex] (011) at (-3,3,3) {};
			\node[above right] at (011) {$\nabla f_{(-1,1,1)}(\vec{0})$};
			\node[vertex] (101) at (3,-3,3) {};
			\node[left] at (101) {$\nabla f_{(1,-1,1)}(\vec{0})$};
			\node[vertex] (001) at (-3,-3,3) {};
			\node[above left] at (001) {$\nabla f_{(-1,-1,1)}(\vec{0})$};
			\node[vertex] (110) at (3,3,-3) {};
			\node[below right] at (110) {$\nabla f_{(1,1,-1)}(\vec{0})$};
			\node[vertex] (010) at (-3,3,-3) {};
			\node[below right] at (010) {$\nabla f_{(-1,1,-1)}(\vec{0})$};
			\node[vertex] (100) at (3,-3,-3) {};
			\node[below left] at (100) {$\nabla f_{(1,-1,-1)}(\vec{0})$};
		\end{tikzpicture}
		\caption{Two shapes are shown. The non-transparent object is the unit ball w.r.t.\ the $\ell^1$-norm. The transparent cube is the Hamming cube with side length $4/3$. All vertices of the Hamming cube are outside the $\ell^1$ unit ball. However, the edges are shorter than $2$, which would be the length of the edges of the $\ell^{\infty}$ unit ball.}
		\label{fig:Hamming_cube}
	\end{figure}
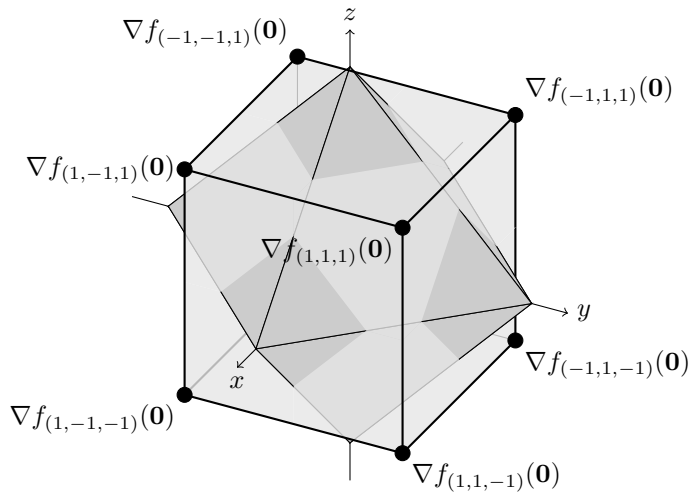

	The first key observation we make is that for all $d \in \N$, all vertices of the Hamming cube are outside the unit ball, even though the side length of the Hamming cube becomes ever smaller as $d$ increases. If we had drawn the unit ball with respect to the $\ell^{\infty}$-norm instead of the $\ell^1$-norm, then the Hamming cube would have been completely contained in this unit ball for $d > 2$.
	
	The second key observation is that every vertex of the Hamming cube can be considered as the center of a claw similar to the one shown in \autoref{fig:gradient_claw}, which provides us with $2^d$ different locations where we could perform the hybrid method. Moreover,the legs of this claw become ever shorter when $d$ increases, meaning that if we can show that any algorithm that solves the gradient estimation problem can distinguish between neighboring vertices on this Hamming cube, then we can prove a better lower bound than in the $\ell^{\infty}$-case.
	
	The main idea of the improved lower bound is a bit more complicated that what we mentioned in the previous paragraph. We show that there is at least one vertex in this Hamming cube, which any algorithm that solves the gradient estimation problem must distinguish from at least a quarter of its neighbors. With this vertex as the central instance, and the $d/4$ neighbors as peripheral instances, we perform the hybrid method. This argument then gives the lower bound results displayed in \autoref{tbl:results} for $p < \infty$. The precise details of the arguments mentioned in this section are presented in \autoref{lem:vertex_approximation}, \autoref{lem:edge_separation} and \autoref{thm:lower_bound_proof}.
	
	\subsubsection{Modified ``median trick"}
	\label{subsec:median_trick_substitution}
	
	One technical issue arises with the proof method presented above. Proving that there exists a vertex that is distinguished from at least a constant fraction of its neighbors only seems to work when we require the success probability of the algorithm to be pretty high (e.g., at least $17/18$). Hence, we need some extra work to prove that there are no quantum gradient estimation algorithms that achieve a slightly smaller success probability with a significantly smaller query complexity.
	
	To that end, we modify the median trick proposed by Gily\'en et al. This trick works as follows. Suppose we have a quantum algorithm $\A$ that estimates the gradient of a function up to precision $\varepsilon$ w.r.t.\ the $\ell^{\infty}$-norm, with probability at least $P > \frac12$. Suppose we run this algorithm several times and obtain estimates in the gradient space as displayed in \autoref{fig:median_trick}.
	
	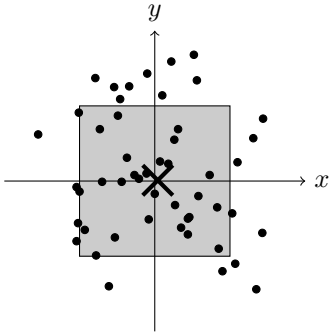
\begin{figure}[h!]
		\centering
		\begin{tikzpicture}[sample/.style={draw,circle,fill,inner sep = .1em}]
			\draw[fill=black!20] (1,1) -- (1,-1) -- (-1,-1) -- (-1,1) -- cycle;
			\node[sample] at (0.56,1.34) {};
			\node[sample] at (0.22,1.59) {};
			\node[sample] at (1.44,0.83) {};
			\node[sample] at (-0.93,-0.65) {};
			\node[sample] at (0.44,-0.5) {};
			\node[sample] at (-0.08,-0.51) {};
			\node[sample] at (-0.7,-0.01) {};
			\node[sample] at (0.73,0.08) {};
			\node[sample] at (1.31,0.57) {};
			\node[sample] at (-0.0,-0.17) {};
			\node[sample] at (-0.53,-0.75) {};
			\node[sample] at (-1.55,0.62) {};
			\node[sample] at (-1.04,-0.08) {};
			\node[sample] at (0.1,1.14) {};
			\node[sample] at (1.43,-0.69) {};
			\node[sample] at (-0.73,0.69) {};
			\node[sample] at (-0.46,1.09) {};
			\node[sample] at (-0.78,-0.99) {};
			\node[sample] at (-1.02,-0.56) {};
			\node[sample] at (-0.49,0.87) {};
			\node[sample] at (0.44,-0.71) {};
			\node[sample] at (1.35,-1.44) {};
			\node[sample] at (1.1,0.25) {};
			\node[sample] at (0.83,-0.35) {};
			\node[sample] at (-0.54,1.25) {};
			\node[sample] at (-0.1,1.43) {};
			\node[sample] at (-0.11,0.1) {};
			\node[sample] at (0.58,-0.2) {};
			\node[sample] at (0.07,0.26) {};
			\node[sample] at (0.26,0.55) {};
			\node[sample] at (-0.61,-1.4) {};
			\node[sample] at (0.31,0.69) {};
			\node[sample] at (0.27,-0.32) {};
			\node[sample] at (-0.27,0.08) {};
			\node[sample] at (-0.37,0.31) {};
			\node[sample] at (0.52,1.68) {};
			\node[sample] at (1.07,-1.1) {};
			\node[sample] at (0.9,-1.2) {};
			\node[sample] at (-1.0,-0.14) {};
			\node[sample] at (-0.44,-0.01) {};
			\node[sample] at (0.85,-0.9) {};
			\node[sample] at (0.18,0.23) {};
			\node[sample] at (0.46,-0.48) {};
			\node[sample] at (-1.04,-0.8) {};
			\node[sample] at (1.03,-0.43) {};
			\node[sample] at (0.35,-0.62) {};
			\node[sample] at (-0.21,0.03) {};
			\node[sample] at (-0.34,1.26) {};
			\node[sample] at (-1.01,0.91) {};
			\node[sample] at (-0.79,1.37) {};
			\node (med) at (0.04, 0.01) {};
			\draw[ultra thick] (med) ++(-.2,.2) -- ++(.4,-.4);
			\draw[ultra thick] (med) ++(-.2,-.2) -- ++(.4,.4);
			\draw[->] (-2,0) -- (2,0) node[right] {$x$};
			\draw[->] (0,-2) -- (0,2) node[above] {$y$};
		\end{tikzpicture}
		\caption{$50$ samples of a probability distribution that hits the gray square with probability $2/3$. The center of the $\times$-symbol is the coordinate-wise median of all these samples. The probability of this symbol not being in the gray square decreases exponentially with the number of samples.}
		\label{fig:median_trick}
	\end{figure}

	One can now build a quantum algorithm $\B$ with a much higher success probability than $\A$, by taking the coordinate-wise median of the results from $\A$. The resulting vector in $\R^d$ can be shown to be $\varepsilon$-close to the true gradient with probability at least $1 - de^{-2N(P-1/2)^2}$, where $N$ is the number of samples. Choosing $N$ logarithmic in $d/\delta$ is sufficient to obtain a success probability of $1 - \delta$, and hence we can use the median trick to boost the success probability of a gradient estimation algorithm to arbitrary height without incurring more than a logarithmic overhead on the query complexity of the algorithm.
	
	The median trick as presented above only works when estimating the gradient in the $\ell^{\infty}$-norm, though. Consider the samples in \autoref{fig:ell1_median_trick}. Suppose that they were obtained by some quantum algorithm $\A$ that estimates the gradient to precision $\varepsilon$ in the $\ell^1$-norm. In this example, all points except one lie in the successful region around $\nabla f(\vec{0})$, i.e., all samples but one are within $\ell^1$-distance $\varepsilon$ of the gradient that is to be estimated. However, if we take the coordinate-wise median of these samples, the resulting vector is located outside the $\ell^1$-ball with radius $\varepsilon$ around the true gradient. Hence, the median trick can make matters worse when trying to estimate the gradient accurately w.r.t.\ $\ell^1$-distance.
	
	\begin{figure}[h!]
		\centering
		\begin{tikzpicture}[sample/.style={draw,circle,fill,inner sep=.1em}, scale=1.5]
			\draw[fill=black!20] (1,0) -- (0,1) -- (-1,0) -- (0,-1) -- cycle;
			\node[sample] at (0.77,0.09) {};
			\node[sample] at (0.9,-0.05) {};
			\node[sample] at (0.85,-0.05) {};
			\node[sample] at (0.81,-0.03) {};
			\node[sample] at (0.7,-0.01) {};	
			\node[sample] at (-0.02,0.74) {};
			\node[sample] at (0.06,0.8) {};
			\node[sample] at (0.02,0.83) {};
			\node[sample] at (-0.05,0.77) {};
			\node[sample] at (0.05,0.93) {};
			\node[sample, inner sep = .2em] at (.6,.6) {};
			\node (med) at (.6,.6) {};
			\draw[ultra thick] (med) ++(-.2,.2) -- ++(.4,-.4);
			\draw[ultra thick] (med) ++(-.2,-.2) -- ++(.4,.4);
			\draw[dashed] (-1.5,.65) -- (1.5,.65);
			\draw[dashed] (.65,-1.5) -- (.65,1.5);
			\draw[->] (-1.5,0) -- (1.5,0) node[right] {$x$};
			\draw[->] (0,-1.5) -- (0,1.5) node[above] {$y$};
		\end{tikzpicture}
		\caption{The gray diamond denotes the unit ball w.r.t.\ the $\ell^1$-norm. Out of $11$ samples that are drawn, $5$ are concentrated at the top of the figure, $5$ are concentrated at the right, and one is at the cross. is the coordinate-wise mean of all the samples. Even though all but one of the samples are within the $\ell^1$-unit ball, the coordinate-wise mean is not. Hence, taking the coordinate-wise mean may make matters worse when approximating vectors w.r.t.\ the $\ell^1$-distance.}
		\label{fig:ell1_median_trick}
	\end{figure}
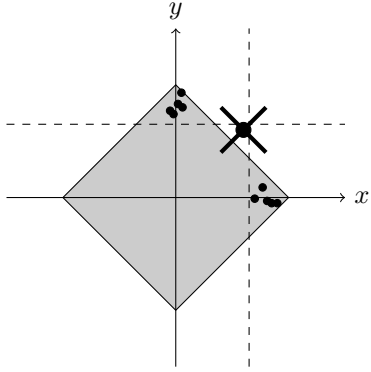

	Generally speaking, for any $n \in \N$, if the success probability is merely lower bounded by $\frac12 + \frac{1}{2n}$, then the median trick can yield a result that is off by $n\varepsilon$. To circumvent the additional error introduced by the median trick, we substitute it altogether, and introduce a new idea to boost the success probability of any $\ell^1$-approximate gradient estimation algorithm that succeeds with a probability strictly larger than $\frac12$.
	
	Consider the following setting. We draw balls of radius $\varepsilon$ around all samples obtained from running the $\varepsilon$-precise $\ell^1$-approximate gradient estimation algorithm $\A$ that worked with success probability $P > \frac12$. See \autoref{fig:ell1_balls_trick}.
	
	\begin{figure}[h!]
		\centering
		\begin{tikzpicture}[sample/.style={draw,fill,circle,inner sep=.1em}]
			\draw[fill=black!20] (1,0) -- (0,1) -- (-1,0) -- (0,-1) -- cycle;
			\draw[dotted] (2,0) -- (0,2) -- (-2,0) -- (0,-2) -- cycle;
			\draw[->] (-2.5,0) -- (2.5,0) node[right] {$x$};
			\draw[->] (0,-2.5) -- (0,2.5) node[above] {$y$};
			\node[sample, inner sep = .2em] at (0,0) {};
			\foreach \x in {(.2,.3), (.8,.1), (.1,.65),(-1.3,-.6),(1.2,-1.2)}
			{
				\node[sample] at \x {};
				\draw[dashed] \x ++(1,0) -- ++(-1,1) -- ++(-1,-1) -- ++(1,-1) -- cycle;
			}
			\draw[pattern=horizontal lines] (.6,.9) -- (0.98,0.52) -- (0.18,-0.28) -- (-.2,.1) -- cycle;
		\end{tikzpicture}
		\caption{We want to approximate the big dot at the origin accurately w.r.t.\ the $\ell^1$-norm. We have an algorithm that terminates successfully if it outputs an estimate that is in the gray region. Suppose we obtained $5$ samples from this algorithm, represented by the smaller black dots. We draw around these $5$ samples shapes with dashed borders identical to the gray region. The hatched region is the intersection of at least half of these and we return an arbitrary point in this region. The main constituents of the proof for boosting the success probability is showing that, with high probability, this hatched region is non-empty, it contains the origin, and is contained within the dotted region, which represented the $\ell^1$-ball around the origin with twice the radius as the gray region.}
		\label{fig:ell1_balls_trick}
	\end{figure}
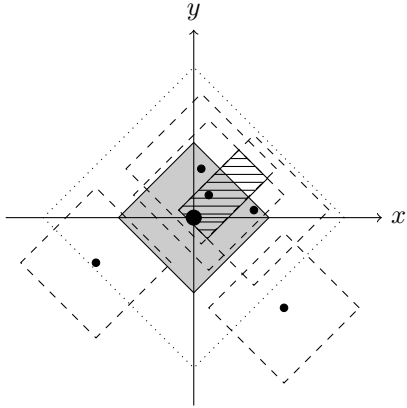

	As long as we have enough samples, we can expect that more than half of them are within $\ell^1$-distance $\varepsilon$ from the true gradient. Hence, with high probability, the gradient is contained in the intersection of at least half of the $\ell^1$-balls with radius $\varepsilon$ that we drew around the samples. This region is the hatched region in \autoref{fig:ell1_balls_trick}. Moreover, this intersection cannot have a diameter that is bigger than $2\varepsilon$, and so if we return any point in this region, we will return a $2\varepsilon$-approximation of the true gradient.
	
	Similar to the median trick in the $\ell^{\infty}$-setting, we find that for the above method to work with success probability $1 - \delta$, we need a number of samples that scales logarithmically in $d/\delta$. So, we have devised a way to boost the success probability of $\ell^1$-approximate gradient estimation algorithms, showing that there cannot exist a quantum algorithm that solves the $\ell^1$-approximate gradient estimation problem significantly faster once we lower the success probability slightly from $17/18$. This argument is presented in full detail in \autoref{thm:general_lower_bound_proof}.
	
	This concludes our high level description of the lower bound. All the details can be found in \autoref{sec:lower_bound}.
	
	\subsection{Applications}
	
	In this subsection, we briefly elaborate on the applications that we envision for the algorithm developed in this paper. As gradient estimation algorithms are frequently used subroutines in classical computations, we expect that their quantum counterpart will find many applications as well, hence we do not expect that the list of applications presented in this subsection is exhaustive.
	
	\subsubsection{Speeding up classical gradient descent methods}
	
	We first explain how our results can be used if we know how to classically evaluate the objective function $f : \R^d \to \R$. As this function can be evaluated using a classical circuit, there also exists a quantum circuit, $B_f$, that acts in the following manner,
	\[B_f : \ket{\vec{x}}\ket{0} \mapsto \ket{\vec{x}}\ket{f(\vec{x})},\]
	i.e., the function value of $f$ at $\vec{x}$ is evaluated and returned in a binary representation in the last register. Using the \textit{phase kickback trick}, one can now construct a phase oracle $O_f$, which acts as follows,
	\[O_f : \ket{\vec{x}} \mapsto e^{if(\vec{x})}\ket{\vec{x}},\]
	using just one call to the binary circuit $B_f$. Hence, if one has access to a classical circuit that evaluates $f$, then one can use our results to perform gradient estimation on a quantum computer. If in addition, $f$ satisfies the Gevrey smoothness condition for $\sigma < 1$, then one readily obtains a speed-up over classical gradient estimation routines.
	
	This can for instance be useful in algorithms that use gradient estimation as a subroutine. The simplest example of such an algorithm is gradient descent, which attempts to find the minimum of the objective function by updating the guess in every iteration in the direction opposite to the gradient. Hence, in every iteration the gradient has to be estimated, so under appropriate smoothness conditions the quantum gradient estimation algorithm as presented in this paper can speed up every iteration individually.
	
	\subsubsection{Optimizing the success probability of a variational quantum circuit}
	
	Our results can also be used in settings that are more inherently quantum, for instance when we are using variational quantum circuits. Such a circuit consists of a fixed number of gates, but the action of some of them is influenced by a global parameter vector $\theta \in \R^d$. In \autoref{fig:variational_circuit}, we present an example of a variational circuit where $d = 3$.
	
	\begin{figure}[h!]
		\centering
		\input{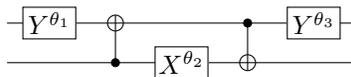}
		\caption{An example of a variational circuit acting on two qubits where the parameter vector $\theta$ is three-dimensional.}
		\label{fig:variational_circuit}
	\end{figure}
	
	Suppose that one has a variational quantum circuit $C(\theta)$, where $\theta \in \R^d$, which acts on $n \in \N$ qubits as
	\begin{equation}
		\label{eq:variational_circuit}
		C(\theta) : \ket{0}^{\otimes n} \mapsto \sqrt{p_1(\theta)}\ket{\psi(\theta)}\ket{1} + \sqrt{1 - p_1(\theta)}\ket{\phi(\theta)}\ket{0},
	\end{equation}
	where $\ket{\psi(\theta)}$ and $\ket{\phi(\theta)}$ are unknown $(n-1)$-qubit states. One can think of the final qubit as an indicator whether the circuit succeeded or failed in performing a task. In this setting, one wants to maximize the value of $p_1(\theta)$. Given controlled access to the following circuit,
	\[C : \ket{\theta}\ket{0}^{\otimes n} \mapsto \ket{\theta}\left(C(\theta)\ket{0}^{\otimes n}\right),\]
	Gily\'en et al.\ have shown, in Theorem 14 in \cite{Gilyen18}, that one can construct the phase oracle
	\[O_{p_1} : \ket{\vec{\theta}} \mapsto e^{ip_1(\theta)}\ket{\vec{\theta}}\]
	using a number of controlled calls to $C$ that is just logarithmic in the precision of the phase oracle. Using the quantum gradient estimation algorithm we can now perform gradient ascent to find the value $\theta \in \R^d$ that maximizes the success probability of the circuit $C(\theta)$.
	
	Gily\'en et al.\ already explained in Section 7 of \cite{Gilyen18} how this construction can be used to speed up variational quantum eigensolvers as proposed in \cite{Kandala17}, QAOA \cite{Farhi14}, and quantum auto-encoders \cite{Romero17}. We want to add that this particular construction might also yield speed-ups in the field of quantum reinforcement learning. In Chapter 6 of \cite{Cornelissen18}, we develop a quantum circuit $C(\theta)$ as in \autoref{eq:variational_circuit}, where $p_1(\theta)$ is approximately proportional to the value function, and $\theta \in \R^d$ are the parameters of the policy. Whether this construction yields any speed-ups depends on the smoothness of the value function, which is something we plan on investigating in the future.
	
	\subsubsection{Analog computation}
	
	Finally, we want to stress that the methods that were used to develop this gradient estimation algorithm are also inherently interesting. To illustrate, suppose that for some $f : \R \to \R$, $O_f$ is a quantum oracle that acts in the following manner,
	\[O_f : \ket{x} \mapsto e^{if(x)}\ket{x}.\]
	We easily observe that composing two such oracles allows for creating an oracle that adds the two functions,
	\[O_{f_1 + f_2} = O_{f_1}O_{f_2} : \ket{x} \mapsto e^{i(f_1(x) + f_2(x))}\ket{x}.\]
	Similarly, suppose that for some function $f : \R \to [0,1]$, we have the following oracle,
	\[U_f : \ket{x}\ket{0} \mapsto \ket{x}\left(\sqrt{f(x)}\ket{0} + \sqrt{1-f(x)}\ket{1}\right)\]
	Then, by composing two such oracles, where it is understood that both act on the index register, but the $U_{f_1}$ acts on the first auxiliary qubit and $U_{f_2}$ on the second, we construct a product oracle:
	\[U_{f_1f_2} = U_{f_1}U_{f_2} : \ket{x}\ket{00} \mapsto \ket{x}\left(\sqrt{f_1(x)f_2(x)}\ket{00} + \ket{\text{states orthogonal to 00}}\right)\]
	Moreover, Gily\'en et al.\ \cite{Gilyen18} have shown how one can interconvert between these types of oracles, using only logarithmic overhead in the precision. Hence, if a function $f : \R \to \R$ can be calculated using just addition and multiplication, then one can construct an oracle evaluating $f$ without ever performing any classical arithmetic circuit. As in the process we never store any digital representation of the function values, we refer to this type of computation as \textit{analog computation}.
	
	Within the construction of the quantum gradient estimation algorithm, we use this technique of analog computation to efficiently build up the oracle that evaluates the smoothings of the objective function, e.g., in \autoref{thm:smoothing_phase_oracle}. We also use it in Chapter 6 in \cite{Cornelissen18} to build up the oracle that evaluates the value function in reinforcement learning. Using block-encodings, one can also perform this type of analog computations in a more general linear algebra setting, as discovered by Gily\'en et al.\ in \cite{Gilyen18b}. We expect that these techniques will find many more applications.
	
	\subsection{Paper outline}
	
	This paper contains two main results: an improved quantum gradient estimation algorithm, described in \autoref{sec:algorithm}, and an improved lower bound on the query complexity of the gradient estimation problem, presented in \autoref{sec:lower_bound}. We organize the paper so that the reader can read either of these results without having to understand the other.
	
	More specifically, after the introduction we proceed with the preliminaries in \autoref{sec:preliminaries}. \autoref{sec:algorithm} is devoted to the new quantum gradient estimation algorithm, whereas \autoref{sec:lower_bound} is dedicated to proving the new lower bound on the quantum gradient estimation problem. These two sections have no cross-references, so they can both be read independently. After that, there is a generic \autoref{sec:conclusion}, which elaborates on the current state of research, and lists some interesting topics for further research. Finally, \autoref{sec:misc_results} lists some results that are used in the proofs of this text, but are not explicitly proven here. Pointers are provided to where these results are proven.
	
	\section{Preliminaries}
	\label{sec:preliminaries}
	
	In this section, we provide rigorous definitions of the mathematical objects used in subsequent sections, and we elaborate on some of their properties. Specifically, in \autoref{subsec:notation} we elaborate on the notation, in \autoref{subsec:gevrey_functions} we formally introduce the smoothness condition imposed on our functions, in \autoref{subsec:phase_oracles} we formally introduce the input oracle model, and in \autoref{subsec:formal_problem_statement} we provide a formal statement of the problem.
	
	\subsection{Notation}
	\label{subsec:notation}
	
	In this subsection, we formally introduce the terminology that we use throughout the remainder of this text. We use the convention that $\N$ contains all \textit{positive} integers and $\N_0$ contains all \textit{non-negative} integers. For all $d \in \N$, we define $[d] = \{1, 2, \dots, d\}$. We write elements from $\R^d$, where $d \in \N$, in boldface, e.g., $\vec{x} \in \R^d$. When numbers appear in bold, like $\vec{0}$ or $\vec{1}$, we denote the vector in $\R^d$ with all entries equal to this number.
	
	Let $d \in \N$ and $f : \R^d \to \R$ be smooth, i.e., all (higher order) partial derivatives of $f$ exist. For all $j \in [d]$ and $\vec{x} \in \R^d$, we let $\partial_jf(\vec{x})$ denote the partial derivative of $f$ with respect to the $j$th coordinate, evaluated at $\vec{x}$. Let $k \in \N$ and $\alpha = (\alpha_1, \dots, \alpha_k) \in [d]^k$. We define $\partial_{\alpha}f(\vec{x}) = \partial_{\alpha_1} \cdots \partial_{\alpha_k}f(\vec{x})$. In particular, note that $\partial_jf(\vec{x}) = \partial_{(j)}f(\vec{x})$. Furthermore, we let $\vec{x}^{\alpha} = x_{\alpha_1}x_{\alpha_2} \cdots x_{\alpha_k}$, i.e., the product of the entries of $\vec{x}$ specified by the multi-index $\alpha$.
	
	Let $n \in \N$. Throughout this text, we identify $n$-qubit states with unit vectors in $\C^{2^n}$ and denote them with a ket-symbol $\ket{\cdot}$. We use this symbol for unit vectors only.
	
	Finally, unless stated otherwise, we use $\log$ when referring to $\log_2$, and $\log_e$ is denoted by $\ln$.
	
	\subsection{Gevrey functions}
	\label{subsec:gevrey_functions}
	
	In this subsection, we formally introduce the class of functions that satisfy the smoothness condition that we impose. A similar class of functions was first considered by Gevrey \cite{Gevrey18}, so we refer to it as the Gevrey class.
	
	\begin{definition}{Gevrey functions}
		\label{def:gevrey_functions}
		Let $d \in \N$, $\sigma \in \R$, $c > 0$, $\Omega \subseteq \R^d$ open and $f : \R^d \to \R$. We say that $f$ is a \textit{Gevrey function on $\Omega$ with parameters $c$ and $\sigma$} if $f$ is smooth, i.e., all its (higher order) partial derivatives exist, and the following upper bound on its partial derivatives is satisfied for all $\vec{x} \in \Omega$, $k \in \N_0$ and $\alpha \in [d]^k$:
		\[|\partial_{\alpha}f(\vec{x})| \leq \frac12c^k(k!)^{\sigma}.\]
		The collection of all Gevrey functions $f : \R^d \to \R$ on $\Omega$ with parameters $c$ and $\sigma$ is referred to as the \textit{Gevrey class with parameters $c$ and $\sigma$}, and is denoted by $\G_{d,c,\sigma,\Omega}$.
	\end{definition}

	There are a few properties of Gevrey functions that are immediately clear. We list a few of them in the theorem below.
	
	\begin{theorem}{Properties of Gevrey functions}
		\label{thm:gevrey_functions_properties}
		Let $d \in \N$. $\sigma \in \R$, $c > 0$ and $\Omega \subseteq \R^d$ open. The following properties of Gevrey functions hold:
		\begin{enumerate}
			\item Whenever $0 < c_1 < c_2$, we have $\G_{d,c_1,\sigma,\Omega} \subseteq \G_{d,c_2,\sigma,\Omega}$.
			\item Whenever $\sigma_1 < \sigma_2$, we have $\G_{d,c,\sigma_1,\Omega} \subseteq \G_{d,c,\sigma_2,\Omega}$.
			\item Whenever $\Omega_1 \subseteq \Omega_2 \subseteq \R^d$ are both open, we have:
			\[f \in \G_{d,c,\sigma,\Omega_2} \qquad \Rightarrow \qquad f|_{\Omega_1} \in \G_{d,c,\sigma,\Omega_1}\]
		\end{enumerate}
	\end{theorem}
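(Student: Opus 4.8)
The plan is to verify each of the three inclusions directly from \autoref{def:gevrey_functions}, observing that in every case the defining inequality for the target Gevrey class is weaker than (or identical to) the one that is assumed, and that smoothness is inherited trivially: it is part of the hypothesis and is unaffected by enlarging $c$, enlarging $\sigma$, or restricting the domain. So in each part the only thing to check is the pointwise derivative bound, and the argument will be a short chain of inequalities applied pointwise and for every multi-index.

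For the first item I would fix $\vec{x} \in \Omega$, $k \in \N_0$ and $\alpha \in [d]^k$, and use that $0 < c_1 < c_2$ implies $c_1^k \le c_2^k$ (with equality at $k = 0$), so that
\[|\partial_{\alpha}f(\vec{x})| \le \tfrac12 c_1^k (k!)^{\sigma} \le \tfrac12 c_2^k (k!)^{\sigma},\]
which is exactly the bound defining $\G_{d,c_2,\sigma,\Omega}$. For the second item the fact to record is that $k! \ge 1$ for every $k \in \N_0$ (since $0! = 1! = 1$ and $k! \ge 2$ for $k \ge 2$), so the map $\sigma \mapsto (k!)^{\sigma}$ is non-decreasing; hence $\sigma_1 < \sigma_2$ gives $(k!)^{\sigma_1} \le (k!)^{\sigma_2}$, and the analogous chain of inequalities, now with $c^k$ held fixed, yields $\G_{d,c,\sigma_1,\Omega} \subseteq \G_{d,c,\sigma_2,\Omega}$. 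For the third item I would simply note that a bound holding for all $\vec{x} \in \Omega_2$ in particular holds for all $\vec{x} \in \Omega_1 \subseteq \Omega_2$, and that $f|_{\Omega_1}$ is smooth on $\Omega_1$ with the same partial derivatives there as $f$; hence $f|_{\Omega_1} \in \G_{d,c,\sigma,\Omega_1}$.

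There is no genuine obstacle here; if anything merits a moment's care it is the boundary behaviour in $k$. In the first item one should observe that at $k = 0$ the two bounds coincide rather than being strictly ordered, and in the second item the monotonicity of $(k!)^{\sigma}$ in $\sigma$ relies precisely on $k! \ge 1$ — it would fail for a base in $(0,1)$ — which is exactly why the inclusion runs in the stated direction.
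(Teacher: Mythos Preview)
Your proposal is correct and follows essentially the same approach as the paper: each item is verified directly from \autoref{def:gevrey_functions} by the obvious chain of inequalities $|\partial_{\alpha}f(\vec{x})| \le \tfrac12 c_1^k(k!)^\sigma \le \tfrac12 c_2^k(k!)^\sigma$, etc. Your version is, if anything, slightly more explicit in justifying the monotonicity of $(k!)^\sigma$ in $\sigma$ via $k! \ge 1$.
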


	\begin{proof}
		We simply cover every property individually.
		\begin{enumerate}
			\setlength\itemsep{-.5em}
			\item Let $f \in \G_{d,c_1,\sigma,\Omega}$, $\vec{x} \in \Omega$, $k \in \N_0$ and $\alpha \in [d]^k$ arbitrarily. We find $|\partial_{\alpha}f(\vec{x})| \leq \frac12c_1^k(k!)^{\sigma} \leq \frac12c_2^k(k!)^{\sigma}$, and so $f \in \G_{d,c_2,\sigma,\Omega}$. As this holds for any $f \in \G_{d,c_1,\sigma,\Omega}$, we find $\G_{d,c_1,\sigma,\Omega} \subseteq \G_{d,c_2,\sigma,\Omega}$.
			\item Let $f \in \G_{d,c,\sigma_1,\Omega}$, $\vec{x} \in \Omega$, $k \in \N_0$ and $\alpha \in [d]^k$ arbitrarily. We find $|\partial_{\alpha}f(\vec{x})| \leq \frac12c^k(k!)^{\sigma_1} \leq \frac12c^k(k!)^{\sigma_2}$, and so $f \in \G_{d,c,\sigma_2,\Omega}$. As this holds for any $f \in \G_{d,c,\sigma_1,\Omega}$, we find $\G_{d,c,\sigma_1,\Omega} \subseteq \G_{d,c,\sigma_2,\Omega}$.
			\item Let $f \in \G_{d,c,\sigma,\Omega_2}$, $\vec{x} \in \Omega_1$, $k \in \N_0$ and $\alpha \in [d]^k$ arbitrarily. We find: $|\partial_{\alpha}(f|_{\Omega_1})(\vec{x})| = |\partial_{\alpha}f(\vec{x})| \leq \frac12c^k(k!)^{\sigma}$ and so $f|_{\Omega_1} \in \G_{d,c,\sigma,\Omega_1}$. As this holds for any $f \in \G_{d,c,\sigma,\Omega_2}$, the statement follows.
		\end{enumerate}
		This completes the proof.
	\end{proof}

	It is not easy to obtain a very good understanding of how the different Gevrey classes relate. For instance, it is not clear to the author whether $\G_{1,1,1,\R} \setminus \G_{1,1,0,\R}$ is non-empty. Investigating this would be an interesting topic of further research. The best we can do at this point to help the reader develop some intuition for some concrete the Gevrey classes is to list some functions, and indicate to which classes they belong. Some example functions can be found in \autoref{tbl:Gevrey_functions}.
	
	\begin{table}[h!]
		\centering
		\begin{tabular}{l|c|c|c|c|c}
			Function & $\G_{1,\cdot,0,(-1,1)}$ & $\G_{1,\cdot,0,(-1,\infty)}$ & $\G_{1,\cdot,0,\R}$ & $\G_{1,\cdot,\frac12,\R}$ & $\G_{1,\cdot,1,\R}$ \\\hline
			$x \mapsto \frac12\sin(cx)$ & $c$ & $c$ & $c$ & $c$ & $c$ \\
			$x \mapsto \frac12\cos(cx)$ & $c$ & $c$ & $c$ & $c$ & $c$ \\
			$x \mapsto \frac12\exp(-c(x+1))$ & $c$ & $c$ & & & \\\hline
			$x \mapsto \frac12\exp(-\frac12(cx)^2)$ & & & & $c^*$ & $c^*$ \\\hline
			$x \mapsto \frac{1}{2(1+(cx)^2)}$ & & & & & $c^*$ \\
			$x \mapsto \frac12\arctan(cx)$ & & & & & $c^*$ \\
			$x \mapsto \frac{1}{2(1 + e^{-cx})}$ & & & & & $c^*$
		\end{tabular}
		\parbox{.85\textwidth}{\small * As these values are mainly included for illustrative purposes, we have not tried hard to find formal proofs to justify these entries. However, they are strongly supported by numerical evidence.}
		\caption{Table of Gevrey functions. The entries in the table indicate the minimal value that one can plug into the second argument in the subscript of $\G$, given the other arguments as displayed in the top row, such that the function in the leftmost column is a member of the specified Gevrey class. An empty cell means that no such value exists.}
		\label{tbl:Gevrey_functions}
	\end{table}

	Gevrey classes are in general not closed under composition, but some interesting results about composite Gevrey functions can be obtained whenever $\sigma \geq 1$. The interested reader is referred to \cite{Cornelissen18}, Theorem 6.4.2.
	
	\subsection{(Fractional) phase oracles}
	\label{subsec:phase_oracles}
	
	In this subsection, we introduce the input model to the gradient estimation problem, that is, we define how we assume to have access to the objective function $f$ whose gradient we want to estimate. We first introduce the concept of phase oracles in the definition below.
	
	\begin{definition}{Phase oracle}
		\label{def:phase_oracle}
		Let $d,n \in \N$, $f : \R^d \to \R$, $G \subseteq \R^d$, with $|G| \leq 2^{nd}$, and let $\{\ket{\vec{x}} : \vec{x} \in G\}$ be an orthonormal set of $nd$-qubit states. A \textit{phase oracle evaluating $f$ on $G$}, denoted by $O_{f,G}$ is an operator on the $nd$-qubit state space that acts as follows:
		\[O_{f,G} : \ket{\vec{x}} \mapsto e^{if(\vec{x})}\ket{\vec{x}}.\]
		For simplicity, we assume that it acts as the identity operator on the orthogonal complement of the space spanned by the states $\ket{\vec{x}}$ where $\vec{x} \in G$. Moreover, we introduce shorthand notation for the \textit{controlled phase oracle}, which performs the following action on one more qubit which we refer to as the \textit{control qubit}:
		\[C(O_{f,G}) = \ket{0}\bra{0} \otimes I_{2^n} + \ket{1}\bra{1} \otimes O_{f,G}.\]
	\end{definition}

	Gily\'en et al.~\cite{Gilyen18} motivated the use of this model in their paper. They also considered different input models, namely the \textit{probability oracle}, which appears naturally in the context of quantum variational circuits \cite{Kandala17}, and the \textit{binary oracle}, which arises naturally when emulating function evaluations performed with a classical circuit. Both of these input models can be efficiently converted to the phase oracle model, i.e., phase oracles can be constructed from probability and binary oracles with at most polylogarithmic overhead. Moreover, Gily\'en et al.\ showed that phase oracles can be efficiently converted to probability oracles as well. The details can be found in \cite{Gilyen18}, Chapter 4 and Appendix B, or in \cite{Cornelissen18}, Circuit 6.2.5 and Circuit 6.4.10.
	
	It is interesting to note that the phase oracle is in a sense \textit{analog}, meaning that it does not require a binary representation of the function value. As a consequence, one cannot recover one function value using just one call to the phase oracle, like one could classically. In that sense, the input model is weaker than the classical model, which makes it more surprising that speed-ups over the classical setting can be attained.

	The fact that this input model is analog inherently complicates performing calculations. For instance, suppose we want to multiply our function value with a number $\xi$ that is between $-1$ and $1$. In the digital setting, one could first obtain the function value as a bitstring, and subsequently implement the multiplication by $\xi$ by manipulating it. In the analog model, though, it is not at all clear how one would achieve this, but it is clear that different methods are required. Recently, some very promising techniques were developed by Gily\'en et al.\ \cite{Gilyen18b}. We use and further elaborate on some of these results in \autoref{subsec:fractional_phase_oracles} and \autoref{subsec:central_difference_schemes}.
	
	\subsection{Formal problem statement}
	\label{subsec:formal_problem_statement}
	
	Now that we have established notational conventions and considered the smoothness conditions and input model, we have covered all necessary prerequisites to formally introduce the problem. First, we formally define what a quantum gradient estimation algorithm is.
	
	\begin{definition}{Quantum gradient estimation algorithms}
		\label{def:quantum_gradient_estimation_algorithms}
		Let $d \in \N$, $\varepsilon > 0$, $p \in [1,\infty]$, $P \in [0,1]$, $G \subseteq \Omega \subseteq \R^d$, with $\Omega$ open, and $\F$ a class of smooth functions mapping $\Omega$ into $\R$. Let $\A$ be a quantum algorithm that can access a function $f \in \F$ through queries to the controlled phase oracle $C(O_{f,G})$, and outputs a vector in $\R^d$. We define the following objects:
		\begin{enumerate}
			\item $\A(f)$ is the random variable with values in $\R^d$ describing the outcome of the algorithm.
			\item $T_{\A}(f)$ is the random variable denoting the total number of queries to the controlled phase oracle that the algorithm performs when it is supplied with an oracle for $f \in \F$.
			\item $T_{\A}$ is the infimum of all $M \geq 0$ such that almost surely, i.e., with probability $1$:
			\[\max_{f \in \F} T_{\A}(f) \leq M.\]
			The value $T_{\A}$ is referred to as the \textit{worst-case query complexity of $\A$ to the phase oracle}.
			\item $P_{\A}(f)$ is the probability that the algorithm, upon receiving input $f \in \F$, outputs a vector $\vec{g} \in \R^d$ that satisfies
			\[\norm{\vec{g} - \nabla f(\vec{0})}_p \leq \varepsilon.\]
		\end{enumerate}
		We say that $\A$ is an \textit{$\varepsilon$-precise $\ell^p$-approximate quantum gradient estimation algorithm for $\F$ on $G$ with success probability lower bounded by $P$} if, for every input function $f \in \F$, we have $P_{\A}(f) \geq P$.
	\end{definition}

	A few remarks about the above definition are in place. First, we will allow \textit{quantum algorithms} to have classical postprocessing steps. This means that after the outcome of a quantum measurement is obtained, the algorithm can modify the resulting bit string before returning the output.
	
	Secondly, note that we defined $T_{\A}(f)$ to be a random variable. The reason for this is that the number of oracle calls might not be constant across several runs of the algorithm. For instance, one can perform some intermediate measurement and based on the outcome of this measurement decide whether the algorithm should terminate or not. As the measurement outcomes can be described by random variables, defining $T_{\A}(f)$ to be a random variable as well comes about naturally.
	
	Finally, note that the success probability $P_{\A}(f)$ need not be constant across different inputs. This is why we require the function $P_{\A} : \F \to [0,1]$ to be lower bounded by $P$ globally. Hence, we want the \textit{worst-case success probability} to be at least $P$. We have not looked into quantum gradient estimation algorithms where the \textit{average-case success probability} is lower bounded by some constant, but this would be an interesting topic of further research.
	
	Now, we are ready to introduce the formal description of the gradient estimation problem.
	
	\begin{definition}{Quantum gradient estimation of Gevrey functions problem}
		\label{def:quantum_gradient_estimation_problem}
		Let $d \in \N$, $\varepsilon, c > 0$, $\sigma \in \R$, $p \in [1,\infty]$, $P \in [0,1]$, $G \subseteq \Omega \subseteq \R^d$, with $\Omega$ open. The \textit{$\varepsilon$-precise $\ell^p$-approximate quantum gradient estimation problem of $\G_{d,c,\sigma,\Omega}$ on $G$ with probability lower bounded by $P$} is the following question: \textit{What is the $\varepsilon$-precise $\ell^p$-approximate quantum gradient estimation algorithm $\A$ for $\G_{d,c,\sigma,\Omega}$ on $G$ with success probability lower bounded by $P$ that minimizes $T_{\A}$?} We refer to the optimal value of $T_{\A}$ as the \textit{query complexity of this problem}.
	\end{definition}
	
	Just like we only considered the worst-case success probability, we also solely consider the worst-case query complexity. It would be interesting to investigate if one can achieve fundamentally different results by minimizing the \textit{average-case} or \textit{expected query complexity} instead, so this would be an interesting topic of further research too.
	
	This completes our discussion of the preliminaries. The reader can now proceed with \autoref{sec:algorithm}, where explicit constructions of quantum gradient estimation algorithms are considered. Alternatively, the reader can skip ahead to \autoref{sec:lower_bound}, where lower bounds on the worst-case query complexity of quantum gradient estimation algorithms are proved.
	
	\section{Generalization of Gily\'en et al.'s quantum gradient estimation algorithm}
	\label{sec:algorithm}
	
	In this chapter, we describe the quantum gradient estimation algorithm that we constructed. In \autoref{subsec:fractional_phase_oracles}, we start by revising some of the techniques developed by Gily\'en et al.\ in \cite{Gilyen18b}. Afterwards, in \autoref{subsec:central_difference_schemes}, we introduce the numerical methods that we employ in the quantum gradient estimation algorithm, and prove some of their properties. Then, we state the algorithm in \autoref{subsec:algorithm}. There are two main things that need to be proved. First, we prove in \autoref{subsec:query_complexity} that the query complexity is as claimed. Subsequently,  in \autoref{subsec:success_probability}, we prove that the success probability of the algorithm is lower bounded by $2/3$.
	
	\subsection{Fractional phase oracles}
	\label{subsec:fractional_phase_oracles}
	
	In \autoref{subsec:phase_oracles}, we briefly mentioned that performing calculations with analog oracles requires fundamentally different methods than the ones employed to perform digital calculations. In this section we elaborate on one such method, the fractional phase oracle.
	
	\begin{definition}{Fractional phase oracle}
		\label{def:fractional_phase_oracle}
		Let $d,n \in \N$, $f : \R^d \to \R$, $G \subseteq \R^d$, with $|G| \leq 2^{nd}$, $\{\ket{\vec{x}} : \vec{x} \in G\}$ an orthonormal set of $nd$-qubit states, and $-1 < \xi < 1$. Then the \textit{fractional phase oracle evaluating $f$ on $G$ with power $\xi$}, denoted by $O_{f,G}^{\xi}$, is an operator acting on the $n$-qubit state space in the following manner:
		\[O_{f,G}^{\xi} : \ket{\vec{x}} \mapsto e^{i\xi f(\vec{x})}\ket{\vec{x}}.\]
		Again, for simplicity, we assume that this operator acts as the identity operator on the orthogonal complement of the subspace spanned by the vectors $\ket{\vec{x}}$ where $\vec{x} \in G$.
	\end{definition}
	
	The immediate question that arises is whether one can implement a fractional phase oracle from \autoref{def:fractional_phase_oracle} using a few queries to the normal phase oracle, see \autoref{def:phase_oracle}. A construction that achieves this was first introduced by Gily\'en et al. \cite{Gilyen18}, using techniques from \cite{Gilyen18b}. In the theorem below, we restate their result for completeness. The precise construction can be found in \cite{Gilyen18b}, or alternatively in \cite{Cornelissen18}.
	
	\begin{theorem}{Fractional phase oracles (Gily\'en et al.'s construction)}
		\label{thm:fraction_phase_oracle}
		Let $d,n \in \N$, $f : \R^d \to \R$ such that $\norm{f}_{\infty} \leq \frac12$, $G \subseteq \R^d$, with $|G| \leq 2^{nd}$, and $\{\ket{\vec{x}} : \vec{x} \in G\}$ be an orthonormal set of $nd$-qubit states. Let $-1 < \xi < 1$ and $\delta > 0$. Then, one can construct a unitary operator $U$ acting on $nd$ qubits as well as $N' = \Theta(nd)$ ancillary qubits using $\widetilde{\O}\left(\log\left(\frac{1}{\delta}\right)\right)$ queries to a controlled version of $O_{f,G}$, such that
		\[\norm{\left(\bra{0}^{\otimes N'} \otimes I_{2^{nd}}\right)U\left(\ket{0}^{\otimes N'} \otimes I_{2^{nd}}\right) - O_{f,G}^{\xi}} \leq \delta\]
		Using Gily\'en et al.'s terminology as introduced in \cite{Gilyen18b}, Definition 43, $U$ is a $(1,N',\delta)$-block-encoding of $O_{f,G}^{\xi}$.
	\end{theorem}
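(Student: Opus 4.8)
The plan is to reduce the construction to quantum singular value transformation (QSVT) applied to a Hermitian block-encoding distilled from the controlled phase oracle. Put $H = \sum_{\vec{x} \in G} f(\vec{x})\ket{\vec{x}}\bra{\vec{x}}$, so that $O_{f,G} = e^{iH}$ and the target operator is $O_{f,G}^{\xi} = e^{i\xi H}$ (both act as the identity on the orthogonal complement of $\Span\{\ket{\vec{x}} : \vec{x} \in G\}$, consistently with \autoref{def:phase_oracle} and \autoref{def:fractional_phase_oracle}). The one quantitative fact we exploit is that $\norm{f}_{\infty} \le \frac12$ confines the spectrum of $H$ to $[-\frac12,\frac12]$.

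First I would build a block-encoding of $\sin(H)$. A controlled version of $O_{f,G}^{\dagger} = O_{-f,G}$ is available for free, being the adjoint of $C(O_{f,G})$, so a one-ancilla linear-combination-of-unitaries circuit using $\O(1)$ controlled calls realises $\frac12(O_{f,G} - O_{f,G}^{\dagger}) = i\sin(H)$ as the $\ket{0}$-block of a unitary on one extra qubit, and a single-qubit phase gate removes the $i$. (I use $\sin$ rather than the analogously obtainable $\cos$, because $\arccos$ has a square-root singularity at $1$, whereas $\arcsin$ will be analytic on the interval that matters.)

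Then I would apply QSVT, in its Hermitian eigenvalue-transformation form, to this block-encoding. The function to implement is $g(y) = e^{i\xi\arcsin(y)}$, since $g(\sin(H)) = e^{i\xi H} = O_{f,G}^{\xi}$. The interval $[-\sin(\frac12),\sin(\frac12)]$ is bounded away from $\pm 1$ — this is exactly where $\norm{f}_{\infty} \le \frac12$ is used — so $\arcsin$ is analytic on a complex neighbourhood of it whose width does not depend on $\delta$; since $\exp$ is entire and $|\xi| < 1$, the composite $g$ is analytic on such a neighbourhood and bounded by $1$ in modulus on the interval. Writing $g = g_1 + ig_2$ with $g_1, g_2$ real and truncating the Chebyshev series of each yields polynomials of degree $\O(\log(1/\delta))$ that are $\delta$-close to $g_1, g_2$ on $[-\sin(\frac12),\sin(\frac12)]$ and have sup-norm at most $1$ on $[-1,1]$. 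Running QSVT with these two polynomials and recombining them with one further LCU qubit produces a unitary whose $\ket{0}$-block is $\O(\delta)$-close in operator norm to $g_1(\sin H) + ig_2(\sin H) = O_{f,G}^{\xi}$; taking the QSVT accuracy to be $\Theta(\delta)$ then gives the error claimed in the statement. Each QSVT makes $\O(\log(1/\delta))$ queries to the $\sin(H)$-block-encoding, hence $\widetilde{\O}(\log(1/\delta))$ queries to $C(O_{f,G})$ overall, the tilde absorbing the poly-loglog cost of computing the QSVT phase angles and of the real/imaginary recombination. The workspace is the $\Theta(nd)$-qubit input register together with an $\O(1)$-sized signal-processing and LCU scratch space, which can be packaged as $N' = \Theta(nd)$ ancillas as in \cite{Gilyen18b}; unit subnormalisation follows from the sup-norm-$1$ bound on the polynomials, up to the factor $\frac12$ incurred by the final LCU, which is removed by one round of oblivious amplitude amplification or by a harmless tightening of constants.

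The hard part is the degree estimate in the QSVT step: everything hinges on $y \mapsto e^{i\xi\arcsin(y)}$ being analytic on a $\delta$-independent strip around $[-\sin(\frac12),\sin(\frac12)]$, which is precisely the content of the hypotheses $\norm{f}_{\infty} \le \frac12$ and $|\xi| < 1$; without the norm bound the strip would pinch off at $\pm 1$ and the degree would blow up. The remaining work — propagating subnormalisations cleanly through the LCU and QSVT layers so that the output is a genuine $(1, N', \delta)$-block-encoding, and confirming the ancilla count — is routine, and is carried out in full in \cite{Gilyen18b} (and in \cite{Cornelissen18}); as the statement is an exact restatement of their result, the proof may ultimately be cited from there.
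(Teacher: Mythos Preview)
Your proposal is correct, and in fact goes substantially further than the paper's own proof, which is merely a citation: it points to Section~4.3 of \cite{Gilyen18}, to Corollary~72 of \cite{Gilyen18b}, and to Circuit~4.2.18 of \cite{Cornelissen18}, without reproducing any of the construction. The sketch you give --- LCU to extract $\sin(H)$ from $C(O_{f,G})$, then QSVT with a polynomial approximant of $y \mapsto e^{i\xi\arcsin(y)}$, using that $\norm{f}_\infty \le \frac12$ keeps the spectrum of $\sin(H)$ inside $[-\sin(\frac12),\sin(\frac12)]$ and hence away from the branch points of $\arcsin$ --- is precisely the mechanism underlying those references, and your identification of the analyticity strip as the source of the $\O(\log(1/\delta))$ degree bound is the right diagnosis. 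Your closing remark that the result may ultimately be cited is exactly what the paper does.
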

	
	\begin{proof}
		The implementation of fractional phase oracles is shortly discussed in \cite{Gilyen18} at the beginning of Section~4.3. The theorem they are referring to is Corollary~72 in \cite{Gilyen18b}. Essentially the same statement is proven in Circuit~4.2.18 in \cite{Cornelissen18}, using similar but subtly different techniques compared to those used in \cite{Gilyen18b}.
	\end{proof}
	
	Consider the setting as described in \autoref{thm:fraction_phase_oracle}. If one prepares an auxiliary register of $N - nd$ qubits in the all-zeros state and applies $U$ to all $N$ qubits, then it approximately performs $O_{f,G}^{\xi}$ on the last $nd$ qubits, and simultaneously approximately returns the auxiliary qubits to the all-zeros state. This is different from the usual way in which auxiliary registers are used, because normally it is guaranteed that the auxiliary register is returned to the all-zeros state exactly. However, the operator $U$ will make sure that the cumulative amplitude in the subspace orthogonal to the all-zeros state of the auxiliary register is at most $\delta$, which is sufficient for our purposes.
	
	\subsection{Central difference schemes}
	\label{subsec:central_difference_schemes}
	
	In this subsection, we elaborate on the numerical methods that are employed in the quantum gradient estimation algorithm.
	
	The numerical methods employed in the algorithm described in \autoref{sec:algorithm} are commonly referred to as \textit{central difference schemes}. We first introduce the coefficients in \autoref{def:coefficients} and subsequently prove the properties we need in \autoref{thm:central_difference_scheme_properties}. A more comprehensive derivation of these coefficients may be achieved using Taylor series and explicit inverses of Vandermonde matrices, which under the hood use the techniques that we employ in \autoref{thm:central_difference_scheme_properties}. For a digression in this area, we refer the reader to standard texts on numerical differentiation.
	
	% Do not put anything in between the footnote and the box below.
	\addtocounter{footnote}{1}
	\begin{definition}{Central difference scheme\footnotemark[\thefootnote]}
		\label{def:coefficients}
		Let $m \in \N$. For all $\ell \in \{-m,-m+1, \dots, m-1,m\}$, let
		\[a_{\ell}^{(2m)} = \begin{cases}
		1, & \text{if } \ell = 0, \\
		\frac{(-1)^{\ell+1}(m!)^2}{\ell(m+\ell)!(m-\ell)!}, & \text{otherwise}.
		\end{cases}\]
	\end{definition}
	\footnotetext{\label{Gilyendiff}We digress slightly from the definition used in \cite{Gilyen18}. Whenever $\ell \neq 0$, the definitions agree, however when $\ell = 0$, Gily\'en et al.\ define $a_{\ell}^{(2m)}$ to be $0$, whereas we define it to be $1$. The discrepancy originates from the fact that Gily\'en et al. attempt to approximate the function $f(\vec{x}) = \nabla f(\vec{0}) \cdot \vec{x}$, whereas we attempt to approximate the function $f(\vec{x}) = f(\vec{0}) + \nabla f(\vec{0}) \cdot \vec{x}$.}
	% End of the box/footnote block
	
	The key elementary properties of these coefficients are stated in the following theorem.
	
	\begin{theorem}{Properties of the central difference scheme}
		\label{thm:central_difference_scheme_properties}
		Let $2 \leq m \in \N$. Then, for all $k \in \{0,1, \dots, 2m\}$,
		\[\sum_{\ell=-m}^m a_{\ell}^{(2m)} \ell^k = \begin{cases}
		1, & \text{if } k \in \{0,1\}, \\
		0, & \text{otherwise}.
		\end{cases}\]
		We also find that for all $\ell \in \{-m, \dots, m\} \setminus \{0\}$,
		\[\left|a_{\ell}^{(2m)}\right| < \frac{1}{|\ell|}.\]
		Furthermore, for all integer $k \geq 2m+1$,
		\[\left|\sum_{\ell=-m}^m a_{\ell}^{(2m)} \ell^k\right| \leq 2m^k.\]
	\end{theorem}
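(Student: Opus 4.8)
The statement bundles three claims, and I would establish them in the order: first the pointwise bound $|a_\ell^{(2m)}| < 1/|\ell|$, then the vanishing/normalisation identity, and finally the crude bound for $k \ge 2m+1$, since the last follows almost immediately from the first. For the pointwise bound, rewrite, for $\ell \ne 0$,
\[
\left|a_\ell^{(2m)}\right| = \frac{(m!)^2}{|\ell|\,(m+\ell)!\,(m-\ell)!} = \frac{1}{|\ell|}\cdot\frac{\binom{2m}{m+\ell}}{\binom{2m}{m}},
\]
using $\binom{2m}{m+\ell}=\frac{(2m)!}{(m+\ell)!(m-\ell)!}$ and $\binom{2m}{m}=\frac{(2m)!}{(m!)^2}$. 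Since the central binomial coefficient $\binom{2m}{m}$ is the \emph{strict} maximum of $j\mapsto\binom{2m}{j}$ over $j\in\{0,\dots,2m\}$, we have $\binom{2m}{m+\ell}<\binom{2m}{m}$ for every $\ell$ with $1\le|\ell|\le m$, and the bound follows.

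For the identity, I would first record the antisymmetry $a_{-\ell}^{(2m)}=-a_\ell^{(2m)}$, which is immediate from the closed form (the factorials are symmetric in $\pm\ell$, the $1/\ell$ flips sign, and $(-1)^{-\ell+1}=(-1)^{\ell+1}$). For even $k$ this makes $\sum_{\ell\ne0}a_\ell^{(2m)}\ell^k$ vanish termwise, so the total equals $a_0^{(2m)}\cdot 0^k$, which is $1$ if $k=0$ and $0$ if $k$ is even and positive. For odd $k$ with $3\le k\le 2m$ the $\ell=0$ term vanishes and, substituting the closed form and then reindexing $j=m+\ell$,
\[
\sum_{\ell=-m}^m a_\ell^{(2m)}\ell^k
= (m!)^2\sum_{\ell=-m}^m\frac{(-1)^{\ell+1}\ell^{k-1}}{(m+\ell)!(m-\ell)!}
= -\frac{(m!)^2(-1)^m}{(2m)!}\sum_{j=0}^{2m}(-1)^j\binom{2m}{j}(j-m)^{k-1}.
\]
Since $(j-m)^{k-1}$ is a polynomial in $j$ of degree $k-1\le 2m-1<2m$, the alternating binomial sum is (up to sign) the $2m$-th forward difference at $0$ of a polynomial of degree $<2m$, hence zero. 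The remaining case is $k=1$: writing $b_\ell=\frac{(-1)^{\ell+1}(m!)^2}{(m+\ell)!(m-\ell)!}$ so that $a_\ell^{(2m)}\ell = b_\ell$ for $\ell\ne0$ and $b_0=-1$, the same reindexing gives $\sum_{\ell=-m}^m b_\ell=-\frac{(m!)^2(-1)^m}{(2m)!}(1-1)^{2m}=0$, whence $\sum_{\ell\ne0}a_\ell^{(2m)}\ell=\sum_{\ell=-m}^m b_\ell-b_0=1$.

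For integer $k\ge 2m+1$, the triangle inequality and the pointwise bound give
\[
\left|\sum_{\ell=-m}^m a_\ell^{(2m)}\ell^k\right|\le\sum_{\ell\ne0}\left|a_\ell^{(2m)}\right|\,|\ell|^k<\sum_{\ell\ne0}|\ell|^{k-1}\le 2m\cdot m^{k-1}=2m^k,
\]
using that the $\ell=0$ term vanishes, that there are exactly $2m$ indices $\ell\in\{-m,\dots,m\}\setminus\{0\}$, and that $|\ell|\le m$ with $k-1\ge 0$.

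The only genuinely delicate point is the odd-$k$ case of the identity: carrying out the reindexing $j=m+\ell$ with the correct sign $(-1)^m$ and normalisation $(m!)^2/(2m)!$, and recognising the resulting alternating binomial sum as $\Delta^{2m}$ of a polynomial of degree $<2m$. The case $k=1$ has to be separated out because there $0^{k-1}=1$ rather than $0$, so one cannot silently reabsorb the $\ell=0$ term into the sum; everything else is bookkeeping.
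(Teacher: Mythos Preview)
Your proof is correct, but the route for the first identity is genuinely different from the paper's. The paper starts from Lagrange's interpolation formula $\sum_{\ell}\ell^k\prod_{j\ne\ell}\frac{x-j}{\ell-j}=x^k$, differentiates at $x=0$, and then checks that the resulting coefficients coincide with $a_\ell^{(2m)}$. You instead exploit the antisymmetry $a_{-\ell}^{(2m)}=-a_\ell^{(2m)}$ to kill all even $k>0$ immediately, and for odd $k$ reindex to an alternating binomial sum, recognising it as $\Delta^{2m}$ of a polynomial of degree $k-1<2m$. Both arguments are short; yours makes the parity structure more visible and avoids the case analysis the paper does when simplifying the differentiated Lagrange product, at the cost of having to single out $k=1$ (which you do correctly). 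For the pointwise bound your binomial-ratio formulation is essentially the paper's telescoping product $\prod_{j=1}^{|\ell|}\frac{m-|\ell|+j}{m+j}<1$, just repackaged. For the tail bound $k\ge 2m+1$ your estimate $\sum_{\ell\ne0}|\ell|^{k-1}\le 2m\cdot m^{k-1}$ is cleaner than the paper's integral comparison, which first reaches $3m^{k-1}$ and then invokes $m\ge2$ to get $2m^k$; your version in fact never uses the hypothesis $m\ge2$.
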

	
	\begin{proof}
		We begin by proving the first equality. Our proof is based on Lagrange's interpolation formula. Let $k \in \{0,1, \dots, 2m\}$ and observe that for all $x \in \R$,
		\[\sum_{\ell=-m}^m \ell^k \prod_{\underset{j \neq \ell}{j = -m}}^m \frac{x - j}{\ell - j} = x^k.\]
		Differentiating both sides with respect to $x$ yields:
		\[\sum_{\ell=-m}^m \ell^k \sum_{\underset{n \neq \ell}{n = -m}}^m \frac{1}{\ell - n} \prod_{\underset{j \not\in \{n,\ell\}}{j=-m}}^m \frac{x - j}{\ell - j} = kx^{k-1}.\]
		Plugging in $x = 0$ yields
		\[\sum_{\ell=-m}^m \ell^k \sum_{\underset{n \neq \ell}{n=-m}}^m \frac{1}{\ell - n} \prod_{\underset{j \not\in \{n,\ell\}}{j=-m}}^m \frac{-j}{\ell - j} = \delta_{k1},\]
		where $\delta$ is the Kronecker delta. We can rewrite the left hand side as follows:
		\[\sum_{\ell=-m}^m \ell^k \cdot \left(-\sum_{\underset{n \neq \ell}{n=-m}}^m \frac{1}{n - \ell} \prod_{\underset{j \not\in \{n,\ell\}}{j=-m}}^m \frac{j}{j - \ell}\right) = \sum_{\ell=-m}^m \ell^k \cdot \left(-\sum_{\underset{n \neq \ell}{n=-m}}^m \frac{\displaystyle \prod_{\underset{j \not\in \{n,\ell\}}{j=-m}}^m j}{\displaystyle \prod_{\underset{j \neq \ell}{j=-m}}^m (j - \ell)}\right) = \delta_{k1}.\]
		Now, we rewrite the expression in the parentheses. For all $\ell \in \{-m, \dots, m\} \setminus \{0\}$, all terms where $n \neq 0$ drop out as there is a factor of $0$ present, so we obtain:
		\[-\sum_{\underset{n \neq \ell}{n = -m}}^m \frac{\displaystyle \prod_{\underset{j \not\in \{n,\ell\}}{j=-m}}^m j}{\displaystyle \prod_{\underset{j \neq \ell}{j=-m}}^m (j - \ell)} = -\frac{\displaystyle \prod_{\underset{j \neq 0,\ell}{j=-m}}^m j}{\displaystyle \prod_{\underset{j \neq \ell}{j=-m}}^m (j-\ell)} = -\frac{(-1)^mm! \cdot m! \cdot \frac{1}{\ell}}{(-1)^{m+\ell}(m+\ell)! \cdot (m-\ell)!} = \frac{(-1)^{\ell+1}(m!)^2}{\ell(m+\ell)!(m-\ell)!} = a_{\ell}^{(2m)}.\]
		On the other hand, when $\ell = 0$, we obtain
		\[-\sum_{\underset{n \neq \ell}{n=-m}}^m \frac{\displaystyle \prod_{\underset{j \not\in \{n,\ell\}}{j=-m}}^m j}{\displaystyle \prod_{\underset{j \neq \ell}{j=-m}}^m (j - \ell)} = -\sum_{\underset{n \neq 0}{n=-m}}^m \frac{\displaystyle \prod_{\underset{j \not\in \{n,\ell\}}{j=-m}}^m j}{\displaystyle \prod_{\underset{j \neq \ell}{j=-m}}^m j} = \sum_{\underset{n \neq 0}{n=-m}}^m \frac1n = 0.\]
		Hence, we find that
		\[\sum_{\ell=-m}^m a_{\ell}^{(2m)} \ell^k = \delta_{k0} + \sum_{\underset{\ell \neq 0}{\ell=-m}}^m a_{\ell}^{(2m)}\ell^k = \delta_{k0} + \delta_{k1} = \begin{cases}
		1, & \text{if } k \in \{0,1\}, \\
		0, & \text{otherwise}.
		\end{cases}\]
		This completes the proof of the first equality. For the first inequality, let $\ell \in \{-m, \dots, m\} \setminus \{0\}$. We observe that
		\[\left|a_{\ell}^{(2m)}\right| = \frac{(m!)^2}{|\ell|(m+|\ell|)!(m-|\ell|)!} = \frac{1}{|\ell|} \prod_{j=1}^{|\ell|} \frac{m-|\ell|+j}{m+j} < \frac{1}{|\ell|}.\]
		This completes the proof of the second statement. Finally, let $k$ be an integer such that $k \geq 2m+1$. Then, we find that
		\begin{align*}
		\left|\sum_{\ell=-m}^m a_{\ell}^{(2m)} \ell^k\right| &\leq \sum_{\underset{\ell\neq0}{\ell=-m}}^m \left|a_{\ell}^{(2m)}\right| \cdot |\ell|^k \leq \sum_{\ell=-m}^m |\ell|^{k-1} = 2\sum_{\ell=1}^m \ell^{k-1} \leq 2 \int_1^m \ell^{k-1} \;\d\ell + 2m^{k-1} \\
		&= 2\left[\frac{\ell^k}{k}\right]_1^m + 2m^{k-1} = \frac{2m^k}{k} - \frac{2}{k} + 2m^{k-1} \leq \frac{2m^k}{2m} + 2m^{k-1} = 3m^{k-1} \leq 2m^k.
		\end{align*}
		In the final inequality we used that $m \geq 2$. This completes the proof.
	\end{proof}
	
	Using the coefficients from \autoref{def:coefficients}, we can now formally define smoothings of functions.
	
	\begin{definition}{Smoothings of functions}
		Let $f : \R^d \to \R$ and $m \in \N$. We let the \textit{$2m$-th order smoothing of $f$} be the function $f_{(2m)} : \R^d \to \R$, defined as follows:
		\[f_{(2m)}(\vec{x}) = \sum_{\ell=-m}^m a_{\ell}^{(2m)} f(\ell\vec{x}).\]
	\end{definition}
	
	There is a very nice intuitive way to think about the smoothings of some function $f$. Suppose that a rope lying on a table and you want to straighten it out. One way to achieve this is to grab the rope on both ends, and pull. At the points where you grab the rope, far from the middle, you might distort the linearity of the rope to get a good grip on it. However, in the middle, the tension in the rope straightens it out. The same thing happens with the smoothings of $f$. They might be far from linear at points far from the origin, but around the origin they become approximately linear over a longer interval. This intuitive picture is quantified in \autoref{lem:justification_linearity} where we calculate how well $f$ approximates its linearization around the origin.
	
	The final question that remains is how one can implement a phase oracle to smoothings of $f$, when one has access to the phase oracle of $f$. The following theorem elaborates on how one can achieve this.
	
	\begin{theorem}{Implementation of phase oracle evaluating smoothings of functions}
		\label{thm:smoothing_phase_oracle}
		Let $d,n,n' \in \N$, $f : \R^d \to \R$, $m \in \N$, $G \subseteq \R^d$, with $|G| \leq 2^{nd}$, and $\{\ket{\vec{x}}_G : \vec{x} \in G\}$ be an orthonormal set of $nd$-qubit states. Define
		\[\overline{G} = \{\ell\vec{x} : \vec{x} \in G, \ell \in \{-m, \dots, m\}\},\]
		and let $\{\ket{\vec{x}}_{\overline{G}} : \vec{x} \in \overline{G}\}$ be an orthogonal set of $n'd$-qubit states.
		Let $\delta > 0$. Then, using $\widetilde{\O}\left(m\log\left(\frac{m}{\delta}\right)\right)$ queries to a phase oracle $O_{f,\overline{G}}$ evaluating $f$ on $\overline{G}$, we can implement an operator $U$ acting on $nd$ qubits as well we $N' = \Theta(nd)$ qubits, such that
		\[\norm{\left(\bra{0}^{\otimes N'} \otimes I_{2^{nd}}\right) U \left(\ket{0}^{\otimes N'} \otimes I_{2^{nd}}\right) - O_{f_{(2m)},G}} \leq \delta,\]
		where $O_{f_{(2m)},G}$ is a phase oracle evaluating the $2m$-th order smoothing of $f$, $f_{(2m)}$, on $G$. Using the terminology introduced by Gily\'en et al.\ in \cite{Gilyen18b}, Definition 43, $U$ is a $(1,N-nd,\delta)$-block-encoding of $O_{f_{(2m)},G}$.
	\end{theorem}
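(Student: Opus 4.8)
The plan is to express the phase that $O_{f_{(2m)},G}$ applies as a product of fractional phases of \emph{dilates} of $f$, to realise each factor as an approximate block-encoding using \autoref{thm:fraction_phase_oracle}, and then to compose these block-encodings. Concretely, since $a_0^{(2m)}=1$ in \autoref{def:coefficients}, we may write
\[
f_{(2m)}(\vec x)=\sum_{\ell=-m}^{m}a_\ell^{(2m)}f(\ell\vec x)=f(\vec 0)+\sum_{\underset{\ell\neq 0}{\ell=-m}}^{m}a_\ell^{(2m)}f(\ell\vec x),
\]
so $O_{f_{(2m)},G}$ sends $\ket{\vec x}_G$ to $e^{if(\vec 0)}\bigl(\prod_{\ell\neq 0}e^{ia_\ell^{(2m)}f(\ell\vec x)}\bigr)\ket{\vec x}_G$, and the leading factor $e^{if(\vec 0)}$ is a global phase that costs nothing. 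For each $\ell\in\{-m,\dots,m\}\setminus\{0\}$ I would first introduce the relabelling isometry $V_\ell:\ket{\vec x}_G\mapsto\ket{\ell\vec x}_{\overline G}$. This is well defined, since $\ell\vec x\in\overline G$ for every $\vec x\in G$ by the definition of $\overline G$ and $\vec x\mapsto\ell\vec x$ is injective for $\ell\neq 0$, and — because the basis states carry a binary encoding of the grid coordinates — it is implementable by reversible arithmetic using \emph{no} query to $O_{f,\overline G}$. Conjugating a fractional phase oracle $O_{f,\overline G}^{\xi}$ (\autoref{def:fractional_phase_oracle}) by $V_\ell$ then realises exactly $\ket{\vec x}_G\mapsto e^{i\xi f(\ell\vec x)}\ket{\vec x}_G$.

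The step that makes this go through is that one may legitimately take $\xi=a_\ell^{(2m)}$: by \autoref{thm:central_difference_scheme_properties} we have $|a_\ell^{(2m)}|<1/|\ell|\le 1$, so $a_\ell^{(2m)}$ lies strictly inside $(-1,1)$ and \autoref{thm:fraction_phase_oracle} applies (I use it in the regime $\norm{f}_\infty\le\tfrac12$, which holds for every Gevrey function this theorem is applied to). Fixing a sub-error $\delta'>0$, this yields, with $\widetilde{\O}(\log(1/\delta'))$ controlled queries to $O_{f,\overline G}$ and $\Theta(n'd)$ ancillary qubits, a $(1,\Theta(n'd),\delta')$-block-encoding $U_\ell$ of the map $\ket{\vec x}_G\mapsto e^{ia_\ell^{(2m)}f(\ell\vec x)}\ket{\vec x}_G$. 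Applying $U_m,U_{m-1},\dots,U_{-m}$ in sequence on a common ancilla register, followed by the free phase $e^{if(\vec 0)}$, gives a unitary $U$ whose compression to the all-zeros ancilla state is the desired approximation of $O_{f_{(2m)},G}$.

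What then remains is to control the accumulated error and to choose $\delta'$. Each $U_\ell$ is an \emph{approximate block-encoding of a unitary}, so the amplitude it leaks out of the all-zeros ancilla subspace is $\O(\sqrt{\delta'})$; propagating this bound through the length-$2m$ product shows $U$ is, up to error $\O(m\sqrt{\delta'})$, a block-encoding of $O_{f_{(2m)},G}$. (It is precisely the unitarity of the compressed factors that permits re-using one ancilla register of size $\Theta(n'd)$ instead of a fresh one per factor; a fresh register per factor would instead give the cleaner error $\O(m\delta')$ at a cost of $\Theta(mn'd)$ ancillae, and the register sizes $n'$ and $n$ differ by only $O(\log m)$ per coordinate, which is what reconciles $\Theta(n'd)$ with the $N'$ of the statement.) Taking $\delta'=\Theta\!\left((\delta/m)^2\right)$ makes the total error at most $\delta$, and each factor then costs $\widetilde{\O}(\log(m/\delta))$ queries, for a grand total of $2m\cdot\widetilde{\O}(\log(m/\delta))=\widetilde{\O}(m\log(m/\delta))$ queries to $O_{f,\overline G}$, as claimed.

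The main obstacle will not be any single estimate but the block-encoding bookkeeping: verifying that the relabellings $V_\ell$ are genuinely query-free and well defined, tracking how the sub-errors $\delta'$ and the ancilla leakage compound over the $2m$-fold product (and hence whether a single ancilla register is enough), and checking that the $\operatorname{poly}(m)$ overheads — for instance from $|a_{\pm 1}^{(2m)}|=m/(m+1)$ being close to $1$, should the cost in \autoref{thm:fraction_phase_oracle} secretly depend on $1-|\xi|$ — all stay absorbed inside the claimed $\widetilde{\O}(m\log(m/\delta))$.
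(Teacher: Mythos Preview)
Your approach is essentially the same as the paper's: decompose $e^{if_{(2m)}(\vec x)}$ as a product of fractional phases at the dilated points $\ell\vec x$, realise the dilation by a query-free relabelling (the paper uses the compute--uncompute unitary $M_\ell:\ket{\vec x}_G\ket{0}\mapsto\ket{\vec x}_G\ket{\ell\vec x}_{\overline G}$ rather than your isometry $V_\ell$), invoke \autoref{thm:fraction_phase_oracle} for each $\ell\neq 0$ using $|a_\ell^{(2m)}|<1$, and sum the per-factor errors. One small slip: the phase $e^{if(\vec 0)}$ is not actually free since $f(\vec 0)$ is unknown---the paper handles the $\ell=0$ term with a single ordinary call to $O_{f,\overline G}$ (as $a_0^{(2m)}=1$), and also sets the per-factor precision to $\delta/(2m)$ with linear error accumulation rather than your more conservative $(\delta/m)^2$; neither difference affects the claimed $\widetilde{\O}(m\log(m/\delta))$.
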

	
	\begin{proof}
		Let $\ell \in \{-m, \dots, m\}$. The first ingredient we need is a multiplication circuit, $M_{\ell}$, that will perform the following mapping:
		\[M_{\ell} : \ket{\vec{x}}_G\ket{0}^{\otimes(n'd)} \mapsto \ket{\vec{x}}_G\ket{\ell\vec{x}}_{\overline{G}}.\]
		How exactly one constructs this circuit is at this point not relevant, as one would first have to describe how the states $\vec{x}_G$ and $\ket{\vec{x}}_{\overline{G}}$, for $\vec{x} \in G$ or $\vec{x} \in \overline{G}$, respectively, look like. The point is that the circuit $M_{\ell}$ can be implemented without querying the phase oracle $O_{f,\overline{G}}$. For a more thorough description of what this circuit \textit{could} look like, see Circuit 4.3.7 in \cite{Cornelissen18}.
		
		Now, note that
		\[\left(O_{f_{(2m)},G} \otimes I_{2^{n'd}}\right)\ket{\vec{x}}_G\ket{0}^{\otimes(n'd)} = \left(\prod_{\ell=-m}^m M_{\ell}^{\dagger}\left(I_{2^{nd}} \otimes O_{f,\overline{G}}^{a_{\ell}^{(2m)}}\right)M_{\ell}\right)\ket{\vec{x}}_G\ket{0}^{\otimes(n'd)}.\]
		Indeed, for any $\vec{x} \in G$ and $\ell \in \{-m, \dots, m\}$, we find that
		\begin{align*}
		M_{\ell}^{\dagger}\left(I_{2^{nd}} \otimes O_{f,\overline{G}}^{a_{\ell}^{(2m)}}\right)M_{\ell}\ket{\vec{x}}_G\ket{0}^{\otimes(n'd)} &= M_{\ell}^{\dagger}\left(I_{2^{nd}} \otimes O_{f,\overline{G}}^{a_{\ell}^{(2m)}}\right)\ket{\vec{x}}_G\ket{\ell\vec{x}}_{\overline{G}} = e^{ia_{\ell}^{(2m)}f(\ell\vec{x})}M_{\ell}^{\dagger}\ket{\vec{x}}_G\ket{\ell\vec{x}}_{\overline{G}} \\
		&= e^{ia_{\ell}^{(2m)}f(\ell\vec{x})}\ket{\vec{x}}_G\ket{0}^{\otimes(n'd)}.
		\end{align*}
		Hence,
		\[\left[\prod_{\ell=-m}^m M_{\ell}^{\dagger}\left(I_{2^{nd}} \otimes O_{f,\overline{G}}^{a_{\ell}^{(2m)}}\right)M_{\ell}\right]\ket{\vec{x}}_G\ket{0}^{\otimes(n'd)} = \prod_{\ell=-m}^m e^{ia_{\ell}^{(2m)}f(\ell\vec{x})}\ket{\vec{x}}_G\ket{0}^{\otimes(n'd)} = e^{if_{(2m)}(\vec{x})}\ket{\vec{x}}_G\ket{0}^{\otimes(n'd)}.\]
		Furthermore, recall from \autoref{thm:central_difference_scheme_properties} that, for all non-zero $\ell$, the coefficient $a_{\ell}^{(2m)}$ is strictly smaller than $1$. Moreover, $a_0^{(2m)} = 1$. Thus we can implement the phase oracle evaluating $f_{(2m)}$ on $G$ with $2m$ calls to fractional phase oracles and one call to the normal phase oracle of $f$ on $\overline{G}$. If we implement the fractional phase oracles with precision $\delta/(2m)$, we obtain $O_{f_{(2m)},G}$ up to precision $\delta$ in the operator norm, and according to \autoref{thm:fraction_phase_oracle} the total number of queries to $O_{f,\overline{G}}$ is
		\[2m \cdot \widetilde{\O}\left(\log\left(\frac{2m}{\delta}\right)\right) + 1 = \widetilde{\O}\left(m\log\left(\frac{m}{\delta}\right)\right).\]
		This completes the proof.
	\end{proof}
	
	This completes our discussion on the numerical methods employed in the quantum gradient estimation algorithm. In the next section, we elaborate on other results needed in the remainder of this text.
	
	\subsection{Quantum gradient estimation algorithm}
	\label{subsec:algorithm}
	
	In this section, we arrive at the first main result of this paper, \autoref{alg:QGE}. We make a distinction between parameters, for which the user can choose suitable values depending on the application, and derived constants, which are calculated from these parameters. We provide explicit formulas for all derived constants, such that one could in principle implement this algorithm without going through the proofs presented in subsequent sections.
	
	% Don't put the following footnote somewhere else in the text,
	% as the counter is used in the box below.
	\addtocounter{footnote}{1}
	\begin{algorithm}{Quantum gradient estimation}
		\label{alg:QGE}
		\textbf{Description:} Given phase oracle access to a function $f \in \G_{d,c,\sigma,[-mr,mr]^d}$, this algorithm calculates an $\varepsilon$-precise $\ell^p$-approximate estimate of $\nabla f(\vec{0})$ with success probability at least $2/3$.
		
		\textbf{Parameters:}
		\begin{enumerate}
			\item $\sigma \in [1/2,1]$: the first parameter that determines the smoothness of the objective function.
			\item $c > 0$: the second parameter that determines the smoothness of the objective function.
			\item $p \in [1,\infty]$: the norm w.r.t.\ which we measure the precision of our estimate of the gradient.
			\item $d \in \N$: the dimension of the domain of the objective function.
			\item $\varepsilon \in (0,c)$: the precision with which we want to determine the gradient.
		\end{enumerate}
	
		\textbf{Derived constants:}
		\begin{enumerate}
			\item $\varepsilon' = \frac{\varepsilon}{d^{\frac1p}}$.
			\item $m = \max\left\{\left\lceil \log\left(\frac{cd^{\sigma}}{\varepsilon'}\right)\right\rceil, 2\right\}$.
			\item $r = \frac{2^{\sigma}}{2emcd^{\sigma}} \cdot \left(\frac{2^{\sigma}\varepsilon'}{272\pi emcd^{\sigma}}\right)^{\frac{1}{2m}}$.
			\item $S = \left\lceil \frac{8\pi}{r\varepsilon'} \right\rceil$.
			\item $n = \left\lceil \log\left(\frac{12c}{\varepsilon'}\right)\right\rceil$.
			\item $\vec{x}_{\vec{k}} = \frac{r}{2^n}\left(\vec{k} + \vec{\frac12}\right)$, for all $\vec{k} \in \{-2^{n-1}, \dots, 2^{n-1}-1\}^d$.
			\item $G = \{\vec{x}_{\vec{k}} : \vec{k} \in \{-2^{n-1}, \dots, 2^{n-1}-1\}^d\}$.
			\item $\overline{G} = \{\ell\vec{x} : \vec{x} \in G, \ell \in \{-m, \dots, m\}\}$.
			\item $N = \left\lceil18\log(3d)\right\rceil$.
		\end{enumerate}
		
		\textbf{Input oracle:} A controlled phase oracle $C(O_{f,\overline{G}})$, i.e., a quantum operation that performs the following action: $C(O_{f,\overline{G}}) = \ket{0}\bra{0} \otimes I + \ket{1}\bra{1} \otimes O_{f,\overline{G}}$.
		
		\textbf{Output:} A vector $\vec{v} \in \R^d$ that satisfies $\norm{\vec{v} - \nabla f(\vec{0})}_p \leq \varepsilon$.
		
		\textbf{Worst-case success probability (\autoref{thm:success_probability}):} $2/3$.

		\textbf{Worst-case query complexity (\autoref{thm:query_complexity}):}
		\begin{equation}
			\label{eq:query_complexity}
			\widetilde{\O}\left(\frac{cd^{\sigma+\frac1p}}{\varepsilon}\right).
		\end{equation}
		
		\textbf{Algorithm:}
		\begin{enumerate}
			\item Repeat $N$ times the following quantum circuit on $nd + N'$ qubits:
			\begin{enumerate}
				\item Construct a uniform superposition over all points in $G$ in a register with $nd$ qubits.
				\item Apply $O_{f_{(2m)},G}$ a total of $S$ times on this register, using the construction outlined in \autoref{thm:smoothing_phase_oracle} with precision $\delta = 1/(12\sqrt{2}S)$. This requires the use of $N' = \Theta(nd)$ auxiliary qubits.
				\item Apply the inverse quantum Fourier transform in each of the $d$ directions separately. That is, implement the unitary $\widetilde{\QFT}^{\dagger}$, which acts as follows for all $\vec{k} \in \{-2^{n-1}, \dots, 2^{n-1}-1\}^d$,\footnotemark[\thefootnote]
				\begin{equation}
					\widetilde{\QFT}^{\dagger} : \ket{\vec{x}_{\vec{k}}}_G \mapsto \frac{1}{\sqrt{2^{nd}}}\sum_{\vec{h} \in \{-2^{n-1}, \dots, 2^{n-1}-1\}^d} e^{-\frac{2\pi i\vec{k} \cdot \vec{h}}{2^n}}\ket{\vec{h}}.
					\label{eq:inverseQFT}
				\end{equation}
				\item Measure in the computational basis and denote the result by $\vec{h} \in \{-2^{n-1}, \dots, 2^{n-1}-1\}^d$.
				\item Calculate $\vec{g} = \frac{2\pi}{Sr}\vec{h}$.
			\end{enumerate}
			\item Let $\vec{v}$ be the coordinate-wise mean of the $N$ vectors $\vec{g}$ that were obtained in the previous step.
			\item Return $\vec{v}$.
		\end{enumerate}
	\end{algorithm}
	\footnotetext{How one implements this unitary is not of particular relevance at this stage, since this can be implemented without making any oracle calls. The implementation depends on how the states $\ket{\vec{x}}_G$, where $\vec{x} \in G$, are embedded in the $nd$-qubit state space. If one chooses to embed the state $\ket{\vec{x}_{\vec{k}}}_G$ as $\ket{\vec{k}} = \ket{k_1} \cdots \ket{k_d}$, for all $\vec{k} \in \{-2^{n-1}, \dots, 2^{n-1}-1\}^d$, then this operation reduces to $d$ parallel application of the $n$-qubit Fourier transform, which takes $\widetilde{\O}\left(nd\right)$ gates.}

	Two claims in the box above need a proof. First, we prove that the query complexity is as claimed in \autoref{subsec:query_complexity}. Afterwards, in \autoref{subsec:success_probability}, we prove that the success probability is indeed lower bounded by $2/3$, see \autoref{thm:success_probability}.
	
	\subsection{Query complexity}
	\label{subsec:query_complexity}
	
	From \autoref{alg:QGE}, it is not directly clear that the number of calls to $O_{f,G}$ is indeed asymptotically given by the formula in \autoref{eq:query_complexity}. We prove this below.
	
	\begin{theorem}{Query complexity of \autoref{alg:QGE}}
		\label{thm:query_complexity}
		\autoref{alg:QGE} has query complexity:
		\[\widetilde{\O}\left(\frac{cd^{\sigma+\frac1p}}{\varepsilon}\right).\]
	\end{theorem}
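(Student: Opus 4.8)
The plan is to trace the oracle calls through the three nested layers of \autoref{alg:QGE} and then substitute the derived constants. The outer loop runs $N = \lceil 18\log(3d)\rceil$ times; inside each iteration, step 1(b) invokes the smoothing phase oracle $O_{f_{(2m)},G}$ exactly $S$ times; and each such invocation is realized, via \autoref{thm:smoothing_phase_oracle} with precision $\delta = 1/(12\sqrt 2\,S)$, using $\widetilde{\O}(m\log(m/\delta))$ queries to $C(O_{f,\overline{G}})$. All other steps (preparing the uniform superposition over $G$, applying $\widetilde{\QFT}^{\dagger}$, and the classical postprocessing) make no oracle calls. Hence the total query complexity is
\[
N \cdot S \cdot \widetilde{\O}\!\left(m\log\frac{m}{\delta}\right) = \widetilde{\O}\!\left(\log(d) \cdot S \cdot m\log(Sm)\right),
\]
and it remains to estimate $S$ and $m$ in terms of the input parameters.

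Write $Q := cd^{\sigma}/\varepsilon' = cd^{\sigma+1/p}/\varepsilon$, the quantity appearing (up to constants) in the target bound. First, $m = \max\{\lceil\log Q\rceil,2\} = \Theta(\log Q)$, so $m = \widetilde{\O}(1)$ relative to $Q$. The delicate point is the factor
\[
\left(\frac{2^{\sigma}\varepsilon'}{272\pi e m c d^{\sigma}}\right)^{\frac{1}{2m}} = \left(\frac{2^{\sigma}}{272\pi e m Q}\right)^{\frac{1}{2m}}
\]
in the definition of $r$: since $\sigma\in[1/2,1]$ the base lies between $c_1/(mQ)$ and $c_2/(mQ)$ for absolute constants $c_1,c_2>0$, and because $2m \ge 2\log Q$ the base raised to the power $1/(2m)$ is bounded above and below by absolute constants — concretely, for any fixed polynomial bound one has $\mathrm{poly}(Q)^{-1/(2\log Q)} = \Theta(1)$, and the extra logarithmic factor $m$ in the denominator only strengthens this. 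Consequently $r = \widetilde{\Theta}(1/(cd^{\sigma}))$; more precisely $r = \Theta\!\left(1/(m c d^{\sigma})\right)$.

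It follows that $S = \lceil 8\pi/(r\varepsilon')\rceil = \Theta(1/(r\varepsilon')) = \Theta(m c d^{\sigma}/\varepsilon') = \widetilde{\Theta}(Q) = \widetilde{\Theta}(cd^{\sigma+1/p}/\varepsilon)$. In particular $S$ is polynomially bounded in $Q$ and $m$ is logarithmic in $Q$, so $\log(Sm) = \O(\log Q) = \widetilde{\O}(1)$. Substituting back,
\[
N \cdot S \cdot \widetilde{\O}(m\log(Sm)) = \widetilde{\O}(1)\cdot\widetilde{\O}(Q)\cdot\widetilde{\O}(1) = \widetilde{\O}\!\left(\frac{cd^{\sigma+\frac1p}}{\varepsilon}\right),
\]
which is the claimed bound; the parameters $n$ and $N'$ affect only the gate count, not the number of oracle queries, so they play no role. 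I expect the main obstacle to be the careful verification that the $1/(2m)$-th power in $r$ contributes only a constant (rather than a $Q^{o(1)}$) factor, which is precisely where the choice $m \ge \lceil\log(cd^{\sigma}/\varepsilon')\rceil$ is used; once that is in hand, the remainder is bookkeeping of logarithmic factors absorbed into the $\widetilde{\O}$.
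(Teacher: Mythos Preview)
Your proposal is correct and follows essentially the same route as the paper: decompose the total query count as $N\cdot S\cdot\widetilde{\O}(m\log(m/\delta))$, observe that $N$ and $m$ are logarithmic in the target quantity, and reduce to showing $S=\widetilde{\Theta}(cd^{\sigma}/\varepsilon')$ by verifying that the $1/(2m)$-th power in $r$ is $\Theta(1)$ thanks to $m\ge\lceil\log(cd^{\sigma}/\varepsilon')\rceil$. The paper carries out the last step slightly more explicitly (splitting off $m^{1/(2m)}\to 1$, the constant factor, and $(cd^{\sigma}/\varepsilon')^{1/(2m)}\le 2^{1/2}$), but the argument is the same.
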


	\begin{proof}
		Note from \autoref{thm:smoothing_phase_oracle} that the number of queries to $O_{f,\overline{G}}$ for implementing $O_{f_{(2m)},G}$ scales as $\widetilde{\O}(m)$. Hence, the total query complexity of \autoref{alg:QGE} to $O_{f,\overline{G}}$ is $\widetilde{\O}(mNS)$. As in \autoref{alg:QGE} we let $m$ and $N$ scale logarithmically in all parameters, they are absorbed in the tilde, and so we just have to show that
		\[S = \widetilde{O}\left(\frac{cd^{\sigma}}{\varepsilon'}\right).\]
		Using $\varepsilon' \leq \varepsilon \leq c$, $m \geq 2$, $d \geq 1$ and $\sigma \in [\frac12,1]$, observe that
		\begin{equation}
			\label{eq:Sgeq1}
			\frac{8\pi}{r\varepsilon'} = \frac{8\pi}{\varepsilon'} \cdot \frac{2emcd^{\sigma}}{2^{\sigma}} \cdot \left(\frac{2^{\sigma}\varepsilon'}{272\pi emcd^{\sigma}}\right)^{-\frac{1}{2m}} \geq \frac{16\pi mcd^{\sigma}}{2^{\sigma}\varepsilon'} \geq 8\pi cmd^{\sigma} \geq 1.
		\end{equation}
		Hence, using $\lceil x \rceil \leq 2x$ whenever $x \geq 1$,
		\[\frac{8\pi}{r\varepsilon'} \leq S = \left\lceil \frac{8\pi}{r\varepsilon'} \right\rceil \leq 2 \cdot \frac{8\pi}{r\varepsilon'} \qquad \Rightarrow \qquad S = \Theta\left(\frac{1}{r\varepsilon'}\right).\]
		It remains to show that
		\[\frac1r = \widetilde{\O}\left(cd^{\sigma}\right).\]
		To that end, we have
		\[\frac{1}{r} = \frac{2emcd^{\sigma}}{2^{\sigma}} \cdot \left(\frac{2^{\sigma}\varepsilon'}{272\pi emcd^{\sigma}}\right)^{-\frac{1}{2m}} = \frac{2em}{2^{\sigma}} \cdot cd^{\sigma} \cdot \left(\frac{272\pi emcd^{\sigma}}{2^{\sigma}\varepsilon'}\right)^{\frac{1}{2m}},\]
		so we just have to show that
		\[\left(\frac{272\pi emcd^{\sigma}}{2^{\sigma}\varepsilon'}\right)^{\frac{1}{2m}} = \widetilde{\O}(1).\]
		We find that
		\[\lim_{m \to \infty} m^{\frac{1}{2m}} = \sqrt{\lim_{m \to \infty} m^{\frac1m}} = \sqrt{1} = 1 \qquad \text{and} \qquad \lim_{m \to \infty} \left(\frac{272\pi e}{2^{\sigma}}\right)^{\frac{1}{2m}} = \left(\frac{272\pi e}{2^{\sigma}}\right)^0 = 1,\]
		so it remains to show that
		\[\left(\frac{cd^{\sigma}}{\varepsilon'}\right)^{\frac{1}{2m}} = \widetilde{\O}\left(1\right).\]
		To that end, observe that
		\[\left(\frac{cd^{\sigma}}{\varepsilon'}\right)^{\frac{1}{2m}} = 2^{\frac{\log\left(\frac{cd^{\sigma}}{\varepsilon'}\right)}{2\left\lceil\log\left( \frac{cd^{\sigma}}{\varepsilon'} \right)\right\rceil}} \leq 2^{\frac12} = \O(1).\]
		This completes the proof.
	\end{proof}

	We have now proved one of the two claims that in \autoref{alg:QGE}. The second claim is the topic of the next section.

	\subsection{Success probability}
	\label{subsec:success_probability}
	
	The main result of this section is \autoref{thm:success_probability}, in which we prove that the success probability of \autoref{alg:QGE} is indeed lower bounded by $2/3$, as claimed in the box. The proof is rather long, so we have divided it into more manageable chunks.
	
	The core lemma used in the proof is the \textit{method of bounding the second moments of higher order bounded tensors}, as introduced by Gily\'en et al.\ in \cite{Gilyen18}, Lemma 36. We generalize their result by noting that one can also obtain a trivial bound for the same expression, which can be geometrically averaged with the Gily\'en bound to obtain a non-trivial result.
	
	\begin{lemma}{Method of bounding the second moments of higher order bounded tensors}
		\label{lem:bounding_second_moment_tensors}
		Let $d \in \N$ and let $x_1, \dots, x_d$ be independent identically distributed random variables over the interval $[-\frac12,\frac12]$, such that for all $j \in [d]$,
		\[\E\left[x_j\right] = 0.\]
		Then, for all $k \in \N$ and $q \in [0,1]$,
		\[\E\left[\left(\sum_{j=1}^d x_j\right)^{2k}\right] \leq \left[2\left(\frac{d}{2}\right)^kk!\right]^q \cdot \left[\left(\frac{d}{2}\right)^{2k}\right]^{1-q}.\]
	\end{lemma}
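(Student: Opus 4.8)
The plan is to observe that the claimed bound is simply the weighted geometric mean, with weights $q$ and $1-q$, of two separate upper bounds on $M_k := \E\bigl[\bigl(\sum_{j=1}^d x_j\bigr)^{2k}\bigr]$, namely
\[
A := 2\left(\tfrac{d}{2}\right)^{k} k! \qquad\text{and}\qquad B := \left(\tfrac{d}{2}\right)^{2k}.
\]
Thus it suffices to prove $M_k \le A$ and $M_k \le B$ individually and then interpolate: since $M_k \ge 0$, $A,B \ge 0$, $q \in [0,1]$, and $t \mapsto t^{q}$, $t \mapsto t^{1-q}$ are nondecreasing on $[0,\infty)$, we get $M_k = M_k^{q} M_k^{1-q} \le A^{q} B^{1-q}$, which is exactly the inequality in the statement. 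So the entire argument consists of the two endpoint bounds plus this one-line interpolation.

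The endpoint $B$ (the case $q=0$) is immediate from boundedness: each $x_j \in [-\tfrac12,\tfrac12]$, so $\bigl|\sum_{j=1}^d x_j\bigr| \le d/2$ surely, hence $\bigl(\sum_j x_j\bigr)^{2k} \le (d/2)^{2k}$ surely, and taking expectations gives $M_k \le B$; the mean-zero hypothesis is not used here. The endpoint $A$ (the case $q=1$) is precisely the ``second-moment of higher-order bounded tensors'' estimate of Gily\'en et al.\ (\cite{Gilyen18}, Lemma~36), which I would invoke as a black box. For orientation I would recall the idea of its proof: expanding the power, $M_k = \sum_{\phi:[2k]\to[d]} \E\bigl[\prod_{i=1}^{2k} x_{\phi(i)}\bigr]$, and by independence each summand factors over the fibres of $\phi$; a fibre of size exactly $1$ contributes $\E[x_j]=0$, so only the $\phi$ whose nonempty fibres all have size at least $2$ survive, and since $|x_j|\le\tfrac12$ each such summand has absolute value at most $\prod_v(\tfrac12)^{m_v}=2^{-2k}=4^{-k}$ (writing $m_v$ for the fibre sizes, which satisfy $\sum_v m_v = 2k$). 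It remains to count the surviving $\phi$ and show their number is at most $2\cdot 2^{k} d^{k} k!$, which then yields $M_k \le 4^{-k}\cdot 2\cdot 2^k d^k k! = A$.

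That counting step is the only genuinely delicate part, and it is where the main difficulty lies if one wishes to reprove $A$ rather than cite it. A surviving $\phi$ corresponds to a set partition of $[2k]$ into $\ell \le k$ parts of size at least $2$ together with an injection of those parts into $[d]$, of which there are $(d)_\ell := d(d-1)\cdots(d-\ell+1) \le d^{\ell}$. One cannot afford the crude estimate $(d)_\ell \le d^{k}$ uniformly in $\ell$ followed by summing over all partitions of $[2k]$ into parts of size $\ge 2$, because the total number of such partitions already outgrows $2\cdot 2^k k!$; instead one must retain the falling factorial and use that partitions with fewer, larger parts are suppressed by lower powers of $d$, the leading term being the perfect matchings ($\ell=k$), whose count $(2k-1)!! = (2k)!/(2^k k!)$ is at most $2^k k!$ since $\binom{2k}{k}\le 4^k$. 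As Gily\'en et al.\ carry this out in their Lemma~36, I would simply cite it; the new content of the present lemma is then the trivial bound $M_k \le B$ together with the geometric-mean interpolation, both of which are elementary.
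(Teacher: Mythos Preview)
Your proposal is correct and follows the same overall architecture as the paper: prove the two endpoint bounds $M_k \le A$ and $M_k \le B$ separately, then interpolate by the weighted geometric mean $M_k = M_k^q M_k^{1-q} \le A^q B^{1-q}$. Your treatment of the $q=0$ endpoint matches the paper exactly.

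One minor remark: the proof of the $q=1$ endpoint in \cite{Gilyen18}, Lemma~36 (which the paper reproduces and attributes to Gily\'en et al.) is not the combinatorial moment expansion you sketch, but rather uses the layer-cake representation $\E[Y]=\int_0^\infty \P[Y\ge t]\,dt$ applied to $Y=(\sum_j x_j)^{2k}$, followed by Hoeffding's inequality and a Gamma-function integral after the substitution $y=2t^{1/k}/d$. Since you invoke the result as a black box this does not affect the validity of your proof, but your ``for orientation'' paragraph describes a different (and, as you note, more delicate) route than the one actually in the cited reference.
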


	\begin{proof}
		We prove the $q = 0$ and $q = 1$ cases. The result then follows from geometric averaging.
		
		First, we focus on the $q = 0$ case. Note that all $x_j$'s are bounded in absolute value by $\frac12$. Hence,
		\[\E\left[\left(\sum_{j=1}^d x_j\right)^{2k}\right] \leq \E\left[\left(\sum_{j=1}^d |x_j|\right)^{2k}\right] \leq \E\left[\left(d \cdot \frac12\right)^{2k}\right] = \left(\frac{d}{2}\right)^{2k},\]
		which completes the proof for the case $q = 0$.
		
		That leaves the $q = 1$ case. To that end, observe (this idea is due to Gily\'en et al.) that
		\[\E\left[\left(\sum_{j=1}^d x_j\right)^{2k}\right] = \int_0^\infty \P\left[\left(\sum_{j=1}^d x_j\right)^{2k} \geq t\right] \;\d t = \int_0^\infty \P\left[\left|\sum_{j=1}^d x_j\right| \geq t^{\frac{1}{2k}}\right] \;\d t \leq 2\int_0^{\infty} e^{-\frac{2t^{\frac1k}}{d}} \;\d t.\]
		In the last inequality, we used Hoeffding's inequality (\autoref{thm:statistics}). Performing variable substitution
		\[y = \frac{2t^{\frac1k}}{d} \qquad \Leftrightarrow \qquad t = \left(\frac{dy}{2}\right)^k \qquad \Rightarrow \qquad \d t = \left(\frac{d}{2}\right)^kky^{k-1}\;\d y,\]
		we obtain
		\[\E\left[\left(\sum_{j=1}^d x_j\right)^{2k}\right] \leq 2\left(\frac{d}{2}\right)^k \int_0^{\infty}ky^{k-1}e^{-y} \;\d y = 2\left(\frac{d}{2}\right)^kk\Gamma(k) = 2\left(\frac{d}{2}\right)^kk!.\]
		This completes the proof of the $q = 1$ case, and hence finishes the entire proof.
	\end{proof}

	Note that when $k$ is fixed and $d$ tends to infinity, then the $q = 1$ bound is tighter, whereas when $d$ is fixed and $k$ tends to infinity, the $q = 0$ bound is tighter.
	
	Next, we show how the result from \autoref{lem:bounding_second_moment_tensors} can be used to show that the smoothing of $f$ is on average close to its linearization. Here we significantly clean up the derivation compared to the one used in \cite{Gilyen18}, Theorems 24 and 25, as we substitute the need for Chebyshev's inequality and a small region on which the bound might not hold with a clever use of the relation $\Var[X] = \E[X^2] - \E[X]^2$.

	\begin{lemma}{Justification of approximate linearity}
		\label{lem:justification_linearity}
		Let $G$ and $S$ be as in \autoref{alg:QGE}, and let $f \in \G_{d,c,\sigma,[-mr,mr]^d}$. Let $U(G)$ be the uniform distribution over $G$. Then:
		\[\underset{\vec{x} \sim U(G)}{\E}\left[\left|f_{(2m)}(\vec{x}) - f(\vec{0}) - \nabla f(\vec{0}) \cdot \vec{x}\right|^2\right] \leq \frac{1}{144S^2}\]
	\end{lemma}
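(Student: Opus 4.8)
The plan is to write $f_{(2m)}(\vec{x})$ via the Taylor expansion of $f$ about $\vec{0}$, annihilate the low-order terms using the central-difference identities, and then control the surviving tail in $L^2(U(G))$ using the second-moment bound of \autoref{lem:bounding_second_moment_tensors}.

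First I would reduce the claim to a tail of homogeneous polynomials. For $\vec{x}\in G\subseteq[-r/2,r/2]^d$ and $|\ell|\le m$ the point $\ell\vec{x}$ lies in $[-mr,mr]^d$, so the Gevrey bound $|\partial_{\alpha}f(\vec{0})|\le\frac12 c^k(k!)^{\sigma}$ applies; combined with $\sigma\le 1$ and the fact that $r<\frac{1}{emcd^{\sigma}}$ (read off from the formula for $r$ in \autoref{alg:QGE}, using $\varepsilon'<c$), the Taylor series $f(\vec{y})=\sum_{k\ge 0}\frac{1}{k!}T_k(\vec{y})$, with $T_k(\vec{y}):=\sum_{\alpha\in[d]^k}\partial_{\alpha}f(\vec{0})\,\vec{y}^{\alpha}$, converges absolutely on the relevant region. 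Using homogeneity $T_k(\ell\vec{x})=\ell^k T_k(\vec{x})$ and swapping the absolutely convergent sums gives $f_{(2m)}(\vec{x})=\sum_{k\ge 0}\frac{c_k}{k!}T_k(\vec{x})$ with $c_k:=\sum_{\ell=-m}^m a_{\ell}^{(2m)}\ell^k$. By \autoref{thm:central_difference_scheme_properties}, $c_0=c_1=1$, $c_k=0$ for $2\le k\le 2m$, and $|c_k|\le 2m^k$ for $k\ge 2m+1$; since $T_0(\vec{x})=f(\vec{0})$ and $T_1(\vec{x})=\nabla f(\vec{0})\cdot\vec{x}$, this yields the exact identity $f_{(2m)}(\vec{x})-f(\vec{0})-\nabla f(\vec{0})\cdot\vec{x}=\sum_{k\ge 2m+1}\frac{c_k}{k!}T_k(\vec{x})$.

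Next I would bound $\E_{\vec{x}\sim U(G)}\bigl[T_k(\vec{x})^2\bigr]$, which is the heart of the argument. Writing $\vec{x}=r\vec{u}$, the coordinates $u_j$ are i.i.d., symmetric about $0$, and bounded by $\frac12$ in absolute value; expanding the square and bounding each coefficient by the Gevrey bound gives $\E[T_k(\vec{x})^2]\le\frac14 r^{2k}c^{2k}(k!)^{2\sigma}\sum_{\alpha,\alpha'\in[d]^k}\bigl|\E[\vec{u}^{\alpha}\vec{u}^{\alpha'}]\bigr|$. The crucial observation is that $\E[\vec{u}^{\alpha}\vec{u}^{\alpha'}]=\prod_j\E[u_j^{n_j}]\ge 0$ for every pair of multi-indices (it vanishes unless all the exponents $n_j$ are even), so the sum of absolute values collapses to $\E\bigl[(\sum_{\alpha}\vec{u}^{\alpha})^2\bigr]=\E\bigl[(\sum_j u_j)^{2k}\bigr]$, precisely the quantity \autoref{lem:bounding_second_moment_tensors} controls. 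Taking $q=2(1-\sigma)\in[0,1]$ there makes the factorial powers cancel, leaving $\|T_k\|_{L^2(U(G))}\le 2^{-\sigma}k!\,\bigl(rc(d/2)^{\sigma}\bigr)^k$, hence $\frac{|c_k|}{k!}\|T_k\|_{L^2}\le 2^{1-\sigma}\rho^k$ with $\rho:=mrc(d/2)^{\sigma}$. Then by Minkowski's inequality $\bigl\|f_{(2m)}-f(\vec{0})-\nabla f(\vec{0})\cdot(\cdot)\bigr\|_{L^2}\le 2^{1-\sigma}\sum_{k\ge 2m+1}\rho^k$; from the explicit $r$ one gets $\rho=\frac{1}{2e}\bigl(\frac{2^{\sigma}\varepsilon'}{272\pi emcd^{\sigma}}\bigr)^{1/2m}<\frac{1}{2e}$, so the series converges and the bound is $2^{1-\sigma}\frac{\rho^{2m+1}}{1-\rho}$. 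Finally, using $r\varepsilon'=136\pi\bigl(\frac{2^{\sigma}\varepsilon'}{272\pi emcd^{\sigma}}\bigr)^{1+1/2m}$ and $S\le 16\pi/(r\varepsilon')$ (established in the proof of \autoref{thm:query_complexity}) reduces everything to the numerical inequality $\frac{2^{1-\sigma}}{(2e)^{2m}(2e-1)}\le\frac{17}{24}$, which holds with enormous slack for $\sigma\ge\frac12$, $m\ge 2$; this gives $\|\cdots\|_{L^2}\le\frac{1}{12S}$ and therefore $\E[|\cdots|^2]\le\frac{1}{144 S^2}$.

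The main obstacle is the second-moment step: one must resist bounding $|T_k(\vec{x})|$ pointwise (that destroys the sign cancellations and reintroduces an unwanted factor growing like $(d/2)^k$ when $\sigma<1$), and instead expand $\E[T_k^2]$ and use non-negativity of $\E[\vec{u}^{\alpha}\vec{u}^{\alpha'}]$ to turn the double sum into $\E[(\sum_j u_j)^{2k}]$. A secondary technical point is justifying the convergence and rearrangement of the Taylor series on $[-mr,mr]^d$ in the borderline case $\sigma=1$, which is exactly where the small value of $r$ chosen in \autoref{alg:QGE} is used. The reformulation via $\Var[X]=\E[X^2]-\E[X]^2$ mentioned in the surrounding text amounts to the same estimates organized so as to separate the mean-zero odd-order contributions from the even-order ones, and bypasses the need for a Chebyshev bound together with an exceptional "bad region" as in \cite{Gilyen18}.
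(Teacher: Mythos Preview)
Your proposal is correct and follows essentially the same approach as the paper: Taylor-expand, use \autoref{thm:central_difference_scheme_properties} to kill the low-order terms, exploit non-negativity of the mixed moments $\E[\vec{u}^{\alpha}\vec{u}^{\alpha'}]$ to collapse the double sum to $\E\bigl[(\sum_j u_j)^{2k}\bigr]$, and apply \autoref{lem:bounding_second_moment_tensors} with $q=2(1-\sigma)$. Your organization via Minkowski's inequality---bounding each $\|T_k\|_{L^2(U(G))}$ separately and summing, rather than squaring the whole tail first and controlling the resulting cross-terms in $k,\ell$---is a mild simplification over the paper's version, since it sidesteps the Stirling estimate $\binom{k+\ell}{k}\le e^{k+\ell}$ that the paper needs to decouple the double sum (and correspondingly yields $\rho=mrc d^{\sigma}/2^{\sigma}$ in place of the paper's $e\rho$).
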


	\begin{proof}
		Let $\vec{x} \in G$ be arbitrary and we define the function $f_{\vec{x}} : t \mapsto f(t\vec{x})$. We find, for all $k \in \N_0$ and all $t \in [-m,m]$:
		\[f_{\vec{x}}^{(k)}(t) = \frac{\d^kf_{\vec{x}}}{\d t^k}(t) = \sum_{\alpha \in [d]^k} (\partial_{\alpha}f)(t\vec{x}) \cdot \vec{x}^{\alpha}.\]
		Hence,
		\[\left|f_{\vec{x}}^{(k)}(t)\right| \leq \sum_{\alpha \in [d]^k} \left|(\partial_{\alpha}f)(t\vec{x})\right| \cdot \norm{\vec{x}}_{\infty}^k \leq \frac12 (rcd)^k (k!)^{\sigma}.\]
		If $\sigma = 1$,
		\[r = \frac{2^{\sigma}}{2emcd^{\sigma}} \cdot \left(\frac{2^{\sigma}\varepsilon'}{272\pi emcd^{\sigma} }\right)^{\frac{1}{2m}} = \frac{1}{emcd} \cdot \left(\frac{\varepsilon'}{136\pi emcd}\right)^{\frac{1}{2m}} < \frac{1}{mcd},\]
		so the Taylor series of $f_{\vec{x}}$ converges at least on $[-m,m]$. We find, for all $t \in [-m,m]$,
		\[f(t\vec{x}) = f_{\vec{x}}(t) = \sum_{k=0}^{\infty} \frac{f_{\vec{x}}^{(k)}(0)}{k!} t^k = \sum_{k=0}^{\infty} \frac{t^k}{k!} \sum_{\alpha \in [d]^k} (\partial_{\alpha}f)(\vec{0}) \cdot \vec{x}^{\alpha},\]
		and hence
		\[f_{(2m)}(\vec{x}) = \sum_{\ell=-m}^m a_{\ell}^{(2m)}f(\ell\vec{x}) = \sum_{\ell=-m}^m a_{\ell}^{(2m)} \sum_{k=0}^{\infty} \frac{\ell^k}{k!} \sum_{\alpha \in [d]^k} (\partial_{\alpha}f)(\vec{0}) \cdot \vec{x}^{\alpha} = \sum_{k=0}^{\infty} \frac{1}{k!} \sum_{\alpha \in [d]^k} (\partial_{\alpha}f)(\vec{0}) \cdot \vec{x}^{\alpha} \sum_{\ell=-m}^m a_{\ell}^{(2m)}\ell^k.\]
		Using \autoref{thm:central_difference_scheme_properties},
		\[f_{(2m)}(\vec{x}) = f(\vec{0}) + \nabla f(\vec{0}) \cdot \vec{x} + \sum_{k=2m+1}^{\infty} \frac{1}{k!}\sum_{\alpha \in [d]^k} (\partial_{\alpha}f)(\vec{0}) \cdot \vec{x}^{\alpha} \cdot \sum_{\ell=-m}^m a_{\ell}^{(2m)} \ell^k.\]
		Thus, we obtain:
		\begin{align*}
			\left|f_{(2m)}(\vec{x}) - f(\vec{0}) - \nabla f(\vec{0}) \cdot \vec{x}\right| &\leq \sum_{k=2m+1}^{\infty} \frac{1}{k!} \left|\sum_{\alpha \in [d]^k} (\partial_{\alpha}f)(\vec{0}) \cdot \vec{x}^{\alpha}\right| \cdot \left|\sum_{\ell=-m}^m a_{\ell}^{(2m)} \ell^k\right| \\
			&\leq 2\sum_{k=2m+1}^{\infty} \frac{m^k}{k!} \left|\sum_{\alpha \in [d]^k} (\partial_{\alpha}f)(\vec{0}) \cdot \vec{x}^{\alpha}\right|, \\
		\end{align*}
		where the last inequality follows from $m \geq 2$ and \autoref{thm:central_difference_scheme_properties}. By squaring both sides, we obtain:
		\begin{align*}
			\left|f_{(2m)}(\vec{x}) - f(\vec{0}) - \nabla f(\vec{0}) \cdot \vec{x}\right|^2 &\leq \left(2\sum_{k=2m+1}^{\infty} \frac{m^{k-1}}{k!} \left|\sum_{\alpha \in [d]^k} (\partial_{\alpha}f)(\vec{0}) \cdot \vec{x}^{\alpha}\right|\right)^2 \\
			&\leq 4\sum_{k,\ell = 2m+1}^{\infty} \frac{m^{k+\ell}}{k!\ell!} \left|\sum_{\alpha \in [d]^k} \sum_{\beta \in [d]^{\ell}} (\partial_{\alpha}f)(\vec{0}) \cdot (\partial_{\beta}f)(\vec{0}) \cdot \vec{x}^{\alpha} \cdot \vec{x}^{\beta}\right| \\
			&= \sum_{k,\ell=2m+1}^{\infty} \frac{(mcr)^{k+\ell}}{(k!\ell!)^{1-\sigma}} \left|\sum_{\alpha \in [d]^k} \sum_{\beta \in [d]^{\ell}} \frac{(\partial_{\alpha}f)(\vec{0})}{\frac12c^k(k!)^{\sigma}} \cdot \frac{(\partial_{\beta}f)(\vec{0})}{\frac12c^\ell(\ell!)^{\sigma}} \cdot \frac{\vec{x}^{\alpha}}{r^k} \cdot \frac{\vec{x}^{\beta}}{r^{\ell}}\right|.
		\end{align*}
		Now, we define some abbreviations. First of all, for any $\alpha \in [d]^k$ and $\beta \in [d]^{\ell}$, we define:
		\[H_{\alpha,\beta} = \frac{(\partial_{\alpha}f)(\vec{0})}{\frac12c^k(k!)^{\sigma}} \cdot \frac{(\partial_{\beta}f)(\vec{0})}{\frac12c^{\ell}(\ell!)^{\sigma}}.\]
		Note that $|H_{\alpha,\beta}| \leq 1$ for all $\alpha \in [d]^k$ and $\beta \in [d]^{\ell}$, because $f \in \G_{d,c,\sigma,[-mr,mr]^d}$. Moreover, $G/r \subseteq [-1/2,1/2]^d$, thus
		\begin{align}
			\underset{\vec{x} \sim U(G)}{\E} \left[\left|f_{(2m)}(\vec{x}) - f(\vec{0}) - \nabla f(\vec{0}) \cdot \vec{x}\right|^2\right] &\leq \underset{\vec{x} \sim U\left(\frac{G}{r}\right)}{\E} \left[\sum_{k,\ell=2m+1}^{\infty} \frac{(mcr)^{k+\ell}}{(k!\ell!)^{1-\sigma}} \left|\sum_{\alpha \in [d]^k} \sum_{\beta \in [d]^{\ell}} H_{\alpha,\beta} \vec{x}^{\alpha} \vec{x}^{\beta}\right|\right] \nonumber \\
			&= \sum_{k,\ell=2m+1}^{\infty} \frac{(mcr)^{k+\ell}}{(k!\ell!)^{1-\sigma}} \underset{\vec{x} \sim U\left(\frac{G}{r}\right)}{\E} \left[\left|\sum_{\alpha \in [d]^k} \sum_{\beta \in [d]^{\ell}} H_{\alpha,\beta} \vec{x}^{\alpha} \vec{x}^{\beta}\right|\right]. \label{eq:linearity_bound}
		\end{align}
		Now, we focus on the innermost expectation. Using $\E[X^2] = \E[X]^2 + \Var[X] \geq \E[X]^2$, we obtain:
		\begin{align}
			\left[\underset{\vec{x} \sim U\left(\frac{G}{r}\right)}{\E} \left|\sum_{\alpha \in [d]^k} \sum_{\beta \in [d]^{\ell}} H_{\alpha,\beta} \vec{x}^{\alpha} \vec{x}^{\beta}\right|\right]^2 &\leq \underset{\vec{x} \sim U\left(\frac{G}{r}\right)}{\E} \left[\left(\sum_{\alpha \in [d]^k} \sum_{\beta \in [d]^{\ell}} H_{\alpha,\beta} \vec{x}^{\alpha} \vec{x}^{\beta}\right)^2\right] \nonumber \\
			&= \underset{\vec{x} \sim U\left(\frac{G}{r}\right)}{\E} \left[\sum_{\alpha \in [d]^k} \sum_{\beta \in [d]^{\ell}}  \sum_{\gamma \in [d]^k} \sum_{\delta \in [d]^{\ell}} H_{\alpha,\beta} H_{\gamma,\delta} \vec{x}^{\alpha} \vec{x}^{\beta} \vec{x}^{\gamma} \vec{x}^{\delta}\right] \nonumber \\
			&= \sum_{\alpha \in [d]^k} \sum_{\beta \in [d]^{\ell}}  \sum_{\gamma \in [d]^k} \sum_{\delta \in [d]^{\ell}} H_{\alpha,\beta} H_{\gamma,\delta} \underset{\vec{x} \sim U\left(\frac{G}{r}\right)}{\E} \left[\vec{x}^{\alpha} \vec{x}^{\beta} \vec{x}^{\gamma} \vec{x}^{\delta}\right] \nonumber \\
			& \leq \sum_{\alpha \in [d]^k} \sum_{\beta \in [d]^{\ell}}  \sum_{\gamma \in [d]^k} \sum_{\delta \in [d]^{\ell}} \underset{\vec{x} \sim U\left(\frac{G}{r}\right)}{\E} \left[\vec{x}^{\alpha} \vec{x}^{\beta} \vec{x}^{\gamma} \vec{x}^{\delta}\right] \label{eq:second_order_moment} \\
			&= \underset{\vec{x} \sim U\left(\frac{G}{r}\right)}{\E} \left[ \sum_{\alpha \in [d]^k} \sum_{\beta \in [d]^{\ell}}  \sum_{\gamma \in [d]^k} \sum_{\delta \in [d]^{\ell}} \vec{x}^{\alpha} \vec{x}^{\beta} \vec{x}^{\gamma} \vec{x}^{\delta}\right] \nonumber \\
			&= \underset{\vec{x} \sim U\left(\frac{G}{r}\right)}{\E} \left[\sum_{\alpha \in [d]^{2(k+\ell)}} \vec{x}^{\alpha}\right] = \underset{\vec{x} \sim U\left(\frac{G}{r}\right)}{\E} \left[\left(\sum_{j=1}^d x_j\right)^{2(k+\ell)}\right]. \nonumber
		\end{align}
		The inequality in \autoref{eq:second_order_moment} needs some justification. Observe that for all $\alpha,\gamma \in [d]^k$ and $\beta,\delta \in [d]^{\ell}$, the expression $\vec{x}^{\alpha} \vec{x}^{\beta} \vec{x}^{\gamma} \vec{x}^{\delta}$ is a product $x_1^{\zeta_1} \cdots x_d^{\zeta_d}$ with non-negative integer values for $\zeta_1, \dots, \zeta_d$. The shape of the grid ensures that the variables $x_1, \dots, x_d$ are independent, and hence the expectation of the product is a product of the expectations. As the grid is placed symmetrically around the origin, for all $j \in [d]$ the expectation of any odd power of $x_j$ vanishes. On the other hand, the expectation of any even power of $x_j$ is clearly positive. Thus:
		\[\underset{\vec{x} \sim U\left(\frac{G}{r}\right)}{\E} \left[\vec{x}^{\alpha} \vec{x}^{\beta} \vec{x}^{\gamma} \vec{x}^{\delta}\right] = \underset{\vec{x} \sim U\left(\frac{G}{r}\right)}{\E} \left[x_1^{\zeta} \cdots x_d^{\zeta_d}\right] = \prod_{j=1}^d \underset{x_j \sim U\left(\frac{\{-2^{n-1}+\frac12, \dots, 2^{n-1}-\frac12\}}{2^n}\right)}{\E} \left[x_j^{\zeta_j}\right] \geq 0.\]
		This justifies the inequality in \autoref{eq:second_order_moment}. Now, we use \autoref{lem:bounding_second_moment_tensors} with $q = 2(1-\sigma)$ to obtain:
		\begin{align}
			\left[\underset{\vec{x} \sim U\left(\frac{G}{r}\right)}{\E} \left|\sum_{\alpha \in [d]^k} \sum_{\beta \in [d]^{\ell}} H_{\alpha,\beta} \vec{x}^{\alpha} \vec{x}^{\beta}\right|\right]^2 &\leq \underset{\vec{x} \sim U\left(\frac{G}{r}\right)}{\E} \left[\left(\sum_{j=1}^d x_j\right)^{2(k+\ell)}\right] \leq \left[2\left(\frac{d}{2}\right)^{k+\ell}(k+\ell)!\right]^q \cdot \left[\left(\frac{d}{2}\right)^{2(k+\ell)}\right]^{1-q} \nonumber \\
			&= 2^q \left(\frac{d}{2}\right)^{(k+\ell)(q + 2(1-q))} \cdot \left[(k+\ell)!\right]^q \nonumber \\
			&= 2^{2(1-\sigma)} \left(\frac{d}{2}\right)^{(k+\ell) \cdot (2-2(1-\sigma))} \cdot \left[(k+\ell)!\right]^{2(1-\sigma)} \nonumber \\
			&= 2^{2(1-\sigma)} \cdot \left(\frac{d}{2}\right)^{(k+\ell) \cdot 2\sigma} \cdot \left[(k+\ell)!\right]^{2(1-\sigma)}. \label{eq:second_order_moment_bound}
		\end{align}
		Taking a square root and substituting \autoref{eq:second_order_moment_bound} into \autoref{eq:linearity_bound} yields:
		\begin{align*}
			\underset{\vec{x} \sim U(G)}{\E} \left[\left|f_{(2m)}(\vec{x}) - f(\vec{0}) - \nabla f(\vec{0}) \cdot \vec{x}\right|^2\right] &\leq \sum_{k,\ell=2m+1}^{\infty} \frac{(mcr)^{k+\ell}}{(k!\ell!)^{1-\sigma}} \cdot 2^{1-\sigma} \cdot \left(\frac{d}{2}\right)^{(k+\ell)\sigma} \cdot \left[(k+\ell)!\right]^{1-\sigma} \\
			&\leq 2^{1-\sigma} \sum_{k,\ell=2m+1}^{\infty} \left(\frac{mcrd^{\sigma}}{2^{\sigma}}\right)^{k+\ell} \cdot \left(\frac{(k+\ell)!}{k!\ell!}\right)^{1-\sigma}.
		\end{align*}
		Using Stirling's approximation, see \autoref{thm:stirling} in \autoref{sec:misc_results},
		\[\frac{(k+\ell)!}{k!\ell!} \leq \frac{(k+\ell)^{k+\ell+\frac12}e^{-k-\ell}e}{k^{k+\frac12} \ell^{\ell+\frac12}e^{-k}e^{-\ell}2\pi} = \left(\frac{k+\ell}{k}\right)^k \cdot \left(\frac{k+\ell}{\ell}\right)^{\ell} \cdot \sqrt{\frac{k+\ell}{k\ell}} \cdot \frac{e}{2\pi} \leq \left(1 + \frac{\ell}{k}\right)^k \left(1 + \frac{k}{\ell}\right)^{\ell} \leq e^{k+\ell}.\]
		We end up with
		\[\underset{\vec{x} \sim U(G)}{\E} \left[\left|f_{(2m)}(\vec{x}) - f(\vec{0}) - \nabla f(\vec{0}) \cdot \vec{x}\right|^2\right] \leq 2^{1-\sigma}\sum_{k,\ell=2m+1}^{\infty} \left(\frac{emrcd^{\sigma}}{2^{\sigma}}\right)^{k+\ell} = 2^{1-\sigma}\left[\sum_{k=2m+1}^{\infty} \left(\frac{emrcd^{\sigma}}{2^{\sigma}}\right)^k\right]^2.\]
		Plugging in $r$ and using that $\varepsilon' \leq \varepsilon < c$ yields
		\[\frac{emrcd^{\sigma}}{2^{\sigma}} = \frac{emcd^{\sigma}}{2^{\sigma}} \cdot \frac{2^{\sigma}}{2emcd^{\sigma}} \left(\frac{2^{\sigma}\varepsilon'}{272\pi emcd^{\sigma}}\right)^{\frac{1}{2m}} \leq \frac12.\]
		Using that $S \geq 1$, as proven in \autoref{eq:Sgeq1}, we find that
		\[\sum_{k=2m+1}^{\infty} \left(\frac{emrcd^{\sigma}}{2^{\sigma}}\right)^k \leq \frac{emrcd^{\sigma}}{2^{\sigma}} \cdot \left(\frac{emrcd^{\sigma}}{2^{\sigma}}\right)^{2m} \cdot \sum_{k=0}^{\infty} \left(\frac12\right)^k = \frac{2emrcd^{\sigma}}{2^{\sigma} \cdot 2^{2m}} \cdot \frac{2^{\sigma}\varepsilon'}{272\pi emcd^{\sigma}} = \frac{r\varepsilon'}{272\pi \cdot 2^{2m}} \leq \frac{1}{17S}.\]
		Putting it all together, we find
		\[\underset{\vec{x} \sim U(G)}{\E} \left[\left|f_{(2m)}(\vec{x}) - f(\vec{0}) - \nabla f(\vec{0}) \cdot \vec{x}\right|^2\right] \leq 2^{1-\sigma} \left[\sum_{k=2m+1}^{\infty} \left(\frac{emrcd^{\sigma}}{2^{\sigma}}\right)^k\right]^2 \leq \frac{2}{(17S)^2} < \frac{1}{144S^2}.\]
		This completes the proof.
	\end{proof}

	Now that we have shown that the smoothing of $f$ is close to linear, we can use this to deduce that the state that we ideally would like to obtain from our algorithm is sufficiently well approximated. This is the objective of \autoref{lem:two_states_are_close}. It is an adapted version of the last part of Lemma 20 in \cite{Gilyen18}.

	\begin{lemma}{Norm error induced by non-linearity}
		\label{lem:two_states_are_close}
		Let $\ket{\psi}$ be the $(nd + N')$-qubit state directly after step (b) in \autoref{alg:QGE} and let
		\[\ket{\widetilde{\psi}} = \frac{1}{\sqrt{2^{nd}}} \sum_{\vec{k} \in \{-2^{n-1}, \dots, 2^{n-1}-1\}^d} e^{i\frac{Sr}{2^n}\nabla f(\vec{0}) \cdot \vec{k}} \ket{\vec{x}_{\vec{k}}}_G\]
		be an $nd$-qubit state. Then
		\[\min_{\phi \in \R}\norm{e^{i\phi}\ket{\psi} - \ket{\widetilde{\psi}} \otimes \ket{0}^{\otimes N'}} \leq \frac16.\]
	\end{lemma}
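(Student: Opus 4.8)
The plan is to route the error through the \emph{ideal} state
\[\ket{\mu} = \frac{1}{\sqrt{2^{nd}}}\sum_{\vec{x} \in G} e^{iSf_{(2m)}(\vec{x})}\ket{\vec{x}}_G,\]
i.e.\ the $nd$-qubit state one would obtain by applying a \emph{perfect} phase oracle $O_{f_{(2m)},G}$ exactly $S$ times to the uniform superposition over $G$. I would bound $\norm{\ket{\psi} - \ket{\mu}\otimes\ket{0}^{\otimes N'}}$ and $\min_{\phi}\norm{e^{i\phi}\ket{\mu} - \ket{\widetilde\psi}}$ separately, and then combine them by the triangle inequality, moving the global phase onto the first tensor factor and using that $\ket{0}^{\otimes N'}$ is a unit vector, so that
\[\min_{\phi}\norm{e^{i\phi}\ket{\psi} - \ket{\widetilde\psi}\otimes\ket{0}^{\otimes N'}} \leq \norm{\ket{\psi} - \ket{\mu}\otimes\ket{0}^{\otimes N'}} + \min_{\phi}\norm{e^{i\phi}\ket{\mu} - \ket{\widetilde\psi}}.\]

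For the first term, recall that in step (b) of \autoref{alg:QGE} each of the $S$ invocations of $O_{f_{(2m)},G}$ is realised by the block-encoding $U$ from \autoref{thm:smoothing_phase_oracle}, run on a fresh $\ket{0}^{\otimes N'}$ ancilla with precision $\delta = 1/(12\sqrt{2}S)$. As discussed right after \autoref{thm:fraction_phase_oracle} (and using that $O_{f_{(2m)},G}$ is a genuine unitary, so the constructed $U$ can be taken to act, up to error $\delta$ in Euclidean norm, as $\ket{0}^{\otimes N'}\otimes O_{f_{(2m)},G}\ket{\phi}$ on every input $\ket{0}^{\otimes N'}\otimes\ket{\phi}$), each application of $U$ incurs error at most $\delta$. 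Since $U$ is a genuine unitary it is norm-preserving and hence does not amplify the error already accumulated on the preceding steps, so a telescoping estimate over the $S$ repetitions yields $\norm{\ket{\psi} - \ket{\mu}\otimes\ket{0}^{\otimes N'}} \leq S\delta = 1/(12\sqrt{2})$.

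For the second term, I would absorb the two $\vec{k}$-independent phases into $\phi$: using $\vec{x}_{\vec{k}} = \frac{r}{2^n}(\vec{k} + \vec{\frac12})$, the state $\ket{\widetilde\psi}$ equals, up to a global phase, $\frac{1}{\sqrt{2^{nd}}}\sum_{\vec{x} \in G} e^{iS(f(\vec{0}) + \nabla f(\vec{0}) \cdot \vec{x})}\ket{\vec{x}}_G$. Choosing $\phi$ to match this phase, applying $|e^{ia} - e^{ib}| \leq |a - b|$ termwise, and invoking orthonormality of the $\ket{\vec{x}}_G$,
\[\min_{\phi}\norm{e^{i\phi}\ket{\mu} - \ket{\widetilde\psi}}^2 \leq \frac{S^2}{2^{nd}}\sum_{\vec{x} \in G}\left|f_{(2m)}(\vec{x}) - f(\vec{0}) - \nabla f(\vec{0}) \cdot \vec{x}\right|^2 = S^2 \cdot \underset{\vec{x} \sim U(G)}{\E}\left[\left|f_{(2m)}(\vec{x}) - f(\vec{0}) - \nabla f(\vec{0}) \cdot \vec{x}\right|^2\right] \leq \frac{1}{144},\]
where the last step is exactly \autoref{lem:justification_linearity}. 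Combining the two bounds gives $\min_{\phi}\norm{e^{i\phi}\ket{\psi} - \ket{\widetilde\psi}\otimes\ket{0}^{\otimes N'}} \leq \frac{1}{12\sqrt{2}} + \frac{1}{12} \leq \frac16$.

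I expect the main obstacle to be the first term: turning ``each oracle call is $\delta$-accurate'' into ``$S$ calls are $S\delta$-accurate'' rigorously, since the ancilla register is only approximately returned to $\ket{0}^{\otimes N'}$ after each call. The resolution is to lean on the fact that the implemented $U$ is exactly unitary (hence norm-preserving) together with the per-call guarantee recorded after \autoref{thm:fraction_phase_oracle}, and then telescope. Everything else — the phase bookkeeping for $\ket{\widetilde\psi}$ and the termwise $|e^{ia}-e^{ib}| \leq |a-b|$ estimate — is routine, and the numerics $\frac{1}{12\sqrt{2}} + \frac{1}{12} \approx 0.142 \leq 0.167$ leave a little slack.
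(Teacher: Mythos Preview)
Your approach is essentially the paper's: split through the ideal state $\ket{\mu}$ (the paper calls it $\ket{\chi}$), bound the implementation error by telescoping over the $S$ calls, bound the linearisation error via \autoref{lem:justification_linearity} after absorbing the constant phase $\phi' = -S(f(\vec{0}) + \tfrac{r}{2^n}\nabla f(\vec{0})\cdot\vec{\tfrac12})$, and add.

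One small correction on the first term: the block-encoding guarantee in \autoref{thm:smoothing_phase_oracle} bounds $\norm{(\bra{0}^{\otimes N'}\otimes I)U(\ket{0}^{\otimes N'}\otimes I) - O_{f_{(2m)},G}} \leq \delta$, and the remark after \autoref{thm:fraction_phase_oracle} says the amplitude leaking into the orthogonal complement of $\ket{0}^{\otimes N'}$ is also at most $\delta$. These two contributions are orthogonal, so the per-call Euclidean error is $\sqrt{\delta^2 + \delta^2} = \sqrt{2}\,\delta$, not $\delta$. The telescoping then gives $\norm{\ket{\psi} - \ket{\mu}\otimes\ket{0}^{\otimes N'}} \leq \sqrt{2}S\delta = \tfrac{1}{12}$, which is exactly what the paper records. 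Your final sum becomes $\tfrac{1}{12} + \tfrac{1}{12} = \tfrac16$ on the nose, so the ``little slack'' you noted is an artifact of dropping that $\sqrt{2}$; the argument still closes.
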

	
	\begin{proof}
		Define
		\[\ket{\chi} = \frac{1}{\sqrt{2^{nd}}} \sum_{\vec{x} \in G} e^{iSf_{(2m)}(\vec{x})} \ket{\vec{x}}.\]
		Recall that in step (b) of \autoref{alg:QGE}, we implemented the phase oracle evaluating $f_{(2m)}$ on $G$ up to precision $\delta = 1/(12\sqrt{2}S)$ in the operator norm with an operator $U$. Via an inductive argument on $S$ where in every step of the induction we use the triangle inequality, we obtain that
		\[\norm{\ket{\psi} - \ket{\chi} \otimes \ket{0}^{\otimes N'}} \leq \sqrt{2}S\norm{\left(I_{2^{nd}} \otimes \bra{0}^{\otimes N'}\right) U \left(I_{2^{nd}} \otimes \ket{0}^{\otimes N'}\right) - O_{f_{(2m)},G}} \leq \sqrt{2} \cdot S \cdot \frac{1}{12\sqrt{2}S} = \frac{1}{12}.\]
		It remains to prove that there is a $\phi' \in \R$ such that $\norm{e^{i\phi'}\ket{\chi} - \ket{\widetilde{\psi}}} \leq \frac{1}{12}$, as then
		\[\min_{\phi \in \R} \norm{e^{i\phi}\ket{\psi} \otimes \ket{0}^{\otimes N'} - \ket{\widetilde{\psi}} \otimes \ket{0}^{\otimes N'}} \leq \norm{e^{i\phi'}\ket{\psi} - e^{i\phi'}\ket{\chi}} + \norm{e^{i\phi'}\ket{\chi} - \ket{\widetilde{\psi}}} \leq \frac{1}{12} + \frac{1}{12} = \frac16.\]
		Choose $\phi' = -S(r/2^n \cdot \nabla f(\vec{0}) \cdot \vec{1/2} + f(\vec{0}))$. Then,
		\begin{align*}
		\norm{e^{i\phi'}\ket{\chi} - \ket{\widetilde{\psi}}}^2 &= \frac{1}{2^{nd}}\sum_{\vec{k} \in \{-2^{n-1}, \dots, 2^{n-1}-1\}^d} \left|e^{i(Sf_{(2m)}(\vec{x}_{\vec{k}}) + \phi')} - e^{i\frac{Sr}{2^n}\nabla f(\vec{0}) \cdot \vec{k}}\right|^2 \\
		&= \frac{1}{2^{nd}}\sum_{\vec{k} \in \{-2^{n-1}, \dots, 2^{n-1}-1\}^d} \left|e^{i(Sf_{(2m)}(\vec{x}_{\vec{k}}) - \frac{Sr}{2^n} \cdot \nabla f(\vec{0}) \cdot \vec{\frac12} - Sf(\vec{0}) - \frac{Sr}{2^n}\nabla f(\vec{0}) \cdot \vec{k})} - 1\right|^2 \\
		&\leq \frac{1}{2^{nd}}\sum_{\vec{k} \in \{-2^{n-1}, \dots, 2^{n-1}-1\}^d} \left|Sf_{(2m)}(\vec{x}_{\vec{k}}) - \frac{Sr}{2^n} \cdot \nabla f(\vec{0}) \cdot \vec{\frac12} - Sf(\vec{0}) - \frac{Sr}{2^n}\nabla f(\vec{0}) \cdot \vec{k}\right|^2 \\
		&= \frac{S^2}{2^{nd}}\sum_{\vec{k} \in \{-2^{n-1}, \dots, 2^{n-1}-1\}^d} \left|f_{(2m)}(\vec{x}_{\vec{k}}) - f(\vec{0}) - \nabla f(\vec{0}) \cdot \frac{r}{2^n}\left(\vec{k} + \vec{\frac12}\right)\right|^2 \\
		&= \frac{S^2}{2^{nd}}\sum_{\vec{k} \in \{-2^{n-1}, \dots, 2^{n-1}-1\}^d} \left|f_{(2m)}(\vec{x}_{\vec{k}}) - f(\vec{0}) - \nabla f(\vec{0}) \cdot \vec{x}_{\vec{k}}\right|^2 \\
		&= \frac{S^2}{2^{nd}} \sum_{\vec{x} \in G} \left|f_{(2m)}(\vec{x}) - f(\vec{0}) - \nabla f(\vec{0}) \cdot \vec{x}\right|^2 \\
		&= S^2 \underset{\vec{x} \in U(G)}{\E} \left[(f_{(2m)}(\vec{x}) - f(\vec{0}) - \nabla f(\vec{0}) \cdot \vec{x})^2\right] \\
		&\leq S^2 \cdot \frac{1}{144S^2} = \frac{1}{144},
		\end{align*}
		where we used \autoref{lem:justification_linearity} in the last line. This completes the proof.
	\end{proof}
	
	Now that we have shown that the smoothing of $f$ is sufficiently close to linear that the resulting state is not too far off from the state that we would have if $f$ were linear, we can move on and show that the algorithm recovers the slope of linear functions. To that end, recall a well-known robustness result of the quantum Fourier transform.
	
	\begin{theorem}{Robustness of the quantum Fourier transform}
		\label{thm:robustness_QFT}
		Let $n \geq 4$, $a \in [-2\pi/3,2\pi/3]$ and
		\[\ket{\phi} = \QFT_{2^n}^{\dagger} \left(\frac{1}{\sqrt{2^n}} \sum_{k = -2^{n-1}}^{2^{n-1}-1} e^{iak} \ket{k}\right).\]
		Suppose we perform a computational basis measurement on $\ket{\phi}$ and denote the outcome by $b \in \{-2^{n-1}, \dots, 2^{n-1}-1\}$. Then
		\[\P\left[\left|b-\frac{2^na}{2\pi}\right| \leq 4\right] \geq \frac56.\]
	\end{theorem}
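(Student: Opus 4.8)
The plan is to compute the measurement distribution explicitly and then bound its tail. First I would expand the inverse quantum Fourier transform using $\QFT_{2^n}^{\dagger}\ket{k} = \frac{1}{\sqrt{2^n}}\sum_{b=-2^{n-1}}^{2^{n-1}-1} e^{-2\pi i kb/2^n}\ket{b}$, so that
\[\ket{\phi} = \frac{1}{2^n}\sum_{b=-2^{n-1}}^{2^{n-1}-1}\left(\sum_{k=-2^{n-1}}^{2^{n-1}-1} e^{ik\left(a - \frac{2\pi b}{2^n}\right)}\right)\ket{b},\]
in which the inner sum is a geometric series (a Dirichlet-type kernel). Writing $\beta = 2^n a/(2\pi)$, evaluating it and taking squared moduli gives, for every outcome $b$,
\[\P[\text{outcome } b] = \frac{1}{2^{2n}}\cdot\frac{\sin^2\!\left(\pi(\beta-b)\right)}{\sin^2\!\left(\pi(\beta-b)/2^n\right)} = \frac{1}{2^{2n}}\cdot\frac{\sin^2(\pi\xi)}{\sin^2\!\left(\pi(\beta-b)/2^n\right)},\]
where $b_0 = \round(\beta)$, $\xi = \beta - b_0 \in [-\tfrac12,\tfrac12]$, and I used that $\pi(\beta-b)$ and $\pi\xi$ differ by an integer multiple of $\pi$, so the numerators agree in absolute value.

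For the tail bound, set $\ell = b - b_0$, reduced modulo $2^n$ so that $|\ell| \le 2^{n-1}$ (harmless, since the distribution above is $2^n$-periodic in $b$). For $|\ell| \le 2^{n-1}-1$ the argument $\pi|\beta-b|/2^n = \pi|\xi-\ell|/2^n$ lies in $[0,\pi/2]$, so $|\sin(\cdot)| \ge \frac{2}{\pi}\cdot\frac{\pi|\xi-\ell|}{2^n} = \frac{2|\xi-\ell|}{2^n}$, giving $\P[\text{outcome } b_0+\ell] \le \frac{\sin^2(\pi\xi)}{4(\xi-\ell)^2} \le \frac{1}{4(\xi-\ell)^2}$; the single remaining value $\ell = 2^{n-1}$ contributes at most $\frac{1}{2^{2n}\cos^2(\pi/2^{n+1})} \le \frac{1}{2^{2n}\cos^2(\pi/32)}$, which is negligible for $n\ge 4$. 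The event $\{|b-\beta|>4\}$ is $\{|\xi-\ell|>4\}$, which forces $|\ell|\ge 4$; by the symmetry $b\mapsto -b$, $a\mapsto -a$ I may assume $\xi\in[0,\tfrac12]$, in which case the offending outcomes are exactly $\ell\ge 5$ and $\ell\le -4$. Then I would bound
\[\sum_{\ell\ge 5}\frac{1}{4(\ell-\xi)^2} + \sum_{\ell\le -4}\frac{1}{4(\ell-\xi)^2} \le \frac14\sum_{\ell\ge 5}\frac{1}{(\ell-\tfrac12)^2} + \frac14\sum_{\ell\ge 4}\frac{1}{\ell^2},\]
and evaluate the two sums via $\sum_{\ell\ge 1}(\ell-\tfrac12)^{-2} = \pi^2/2$ and $\sum_{\ell\ge 1}\ell^{-2} = \pi^2/6$; this yields a total below $0.14$, and adding the negligible $\ell = 2^{n-1}$ term keeps $\P[|b-\beta|>4]$ strictly below $1/6$, hence $\P[|b-\beta|\le 4] \ge 5/6$.

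The only mildly delicate point is this final numerical estimate: the crude step $\sin^2(\pi\xi)\le 1$ combined with the tail $p$-series lands only just under $1/6$, so one must be careful about which $\ell$ are genuinely in the bad set (for instance $\ell = \pm 4$ behave differently depending on the sign of $\xi$) and about isolating the wrap-around term $\ell = 2^{n-1}$ where the sine in the denominator is near its maximum rather than near zero. Everything else is routine geometric-series and $p$-series manipulation; this is essentially Equation 5.34 of \cite{Nielsen00} with the constants tracked, which I would cite for the standard estimate while spelling out only the steps needed to extract the explicit bound $5/6$.
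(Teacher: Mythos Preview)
Your proposal is correct and follows exactly the standard Dirichlet-kernel argument that the paper defers to: the paper's entire proof is the one-line citation ``See \cite{Nielsen00}, Equation 5.34,'' and you have simply unpacked that reference with the constants tracked to reach the explicit $5/6$. There is nothing to compare---you and the paper are pointing at the same computation, only you carry it out in full.
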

	\begin{proof}
		See \cite{Nielsen00}, Equation 5.34.
	\end{proof}
	
	The next step is to show that every coordinate is independently approximated well with high probability. This is the object of \autoref{lem:inner_loop}, which is an adapted version of Theorem 21 in \cite{Gilyen18}.
	
	\begin{lemma}{Inner loop of step 1 of \autoref{alg:QGE}}
		\label{lem:inner_loop}
		Let $j \in [d]$ be arbitrary, and let $\vec{g} \in \R^d$ be a vector produced by the inner loop of step 1 of \autoref{alg:QGE}. Then
		\[\P\left[|g_j - \nabla f(\vec{0})_j| \leq \varepsilon'\right] \geq \frac23.\]
	\end{lemma}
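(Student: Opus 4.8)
The plan is to compare the state the algorithm actually produces with the idealized state $\ket{\widetilde\psi}$ from \autoref{lem:two_states_are_close}, invoke the robustness of the quantum Fourier transform on the latter, and transfer the conclusion back using a trace-distance estimate. Write $a_j = \frac{Sr}{2^n}\nabla f(\vec{0})_j$ for each $j \in [d]$. Since $e^{i\frac{Sr}{2^n}\nabla f(\vec{0})\cdot\vec{k}} = \prod_{j=1}^d e^{ia_j k_j}$, the state $\ket{\widetilde\psi}$ is a product across the $d$ registers, its $j$-th tensor factor being $\frac{1}{\sqrt{2^n}}\sum_{k_j=-2^{n-1}}^{2^{n-1}-1} e^{ia_j k_j}\ket{k_j}$, which is exactly the form to which \autoref{thm:robustness_QFT} applies.

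First I would verify the two hypotheses of \autoref{thm:robustness_QFT} on the $j$-th register. From $n = \lceil\log(12c/\varepsilon')\rceil$ and $\varepsilon' \leq \varepsilon < c$ we get $2^n \geq 12c/\varepsilon' > 12$, hence $n \geq 4$. For the range condition, use that $f \in \G_{d,c,\sigma,[-mr,mr]^d}$ gives $|\nabla f(\vec{0})_j| = |\partial_j f(\vec{0})| \leq c/2$, that $2^n \geq 12c/\varepsilon'$, and that $S = \lceil 8\pi/(r\varepsilon')\rceil \leq 16\pi/(r\varepsilon')$ (valid since $8\pi/(r\varepsilon') \geq 1$ by \autoref{eq:Sgeq1}), so that
\[
|a_j| = \frac{Sr}{2^n}\,|\nabla f(\vec{0})_j| \leq \frac{16\pi/\varepsilon'}{12c/\varepsilon'}\cdot\frac{c}{2} = \frac{2\pi}{3},
\]
as needed.

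Next, applying \autoref{thm:robustness_QFT} to the $j$-th register of $\ket{\widetilde\psi}$ — the other registers and the $N'$ ancilla qubits being irrelevant to the outcome $h_j$ — the registerwise inverse quantum Fourier transform $\widetilde{\QFT}^\dagger$ followed by a computational-basis measurement yields $h_j$ with $\P[|h_j - \frac{2^n a_j}{2\pi}| \leq 4] \geq 5/6$. The state actually present after step (b) is $\ket{\psi}$, and \autoref{lem:two_states_are_close} supplies a phase $\phi$ with $\norm{e^{i\phi}\ket{\psi} - \ket{\widetilde\psi}\otimes\ket{0}^{\otimes N'}} \leq 1/6$; applying $\widetilde{\QFT}^\dagger\otimes I$ and discarding the global phase does not change this norm, and since the trace distance of two pure states (and hence the total-variation distance of the outcome distributions of $\vec{h}$, even after tracing out the ancilla) is at most the Euclidean norm of their difference, the probability of $\{|h_j - \frac{2^n a_j}{2\pi}| \leq 4\}$ under the algorithm is at least $5/6 - 1/6 = 2/3$. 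Finally, using $\frac{2^n a_j}{2\pi} = \frac{Sr}{2\pi}\nabla f(\vec{0})_j$, $g_j = \frac{2\pi}{Sr}h_j$, and $S \geq 8\pi/(r\varepsilon')$, on this event
\[
|g_j - \nabla f(\vec{0})_j| = \frac{2\pi}{Sr}\left|h_j - \frac{2^n a_j}{2\pi}\right| \leq \frac{8\pi}{Sr} \leq \varepsilon',
\]
which gives the claim.

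I expect the main subtlety to be bookkeeping rather than anything deep: carefully justifying that $\widetilde{\QFT}^\dagger$ acts registerwise so the product structure of $\ket{\widetilde\psi}$ lets one invoke \autoref{thm:robustness_QFT} on a single coordinate while the others remain in arbitrary states, and that discarding the $N'$ ancilla qubits together with the unimportant global phase leaves the change in the outcome distribution of $h_j$ bounded by $1/6$. Everything else is a direct substitution of the parameter choices of \autoref{alg:QGE}.
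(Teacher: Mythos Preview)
Your proposal is correct and follows essentially the same approach as the paper: verify the hypotheses of \autoref{thm:robustness_QFT} on the $j$-th register of the product state $\ket{\widetilde\psi}$ (the bound $|a_j|\le 2\pi/3$ is computed exactly as in the paper), transfer the $5/6$ probability to the actual state $\ket{\psi}$ via the $1/6$ norm bound from \autoref{lem:two_states_are_close} using robustness of measurements, and convert the event on $h_j$ into the event on $g_j$ using $S\ge 8\pi/(r\varepsilon')$. The only cosmetic differences are that you phrase the transfer step in terms of trace distance rather than citing \autoref{thm:measurement_robustness}, and you explicitly check $n\ge 4$, which the paper leaves implicit.
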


	\begin{proof}
		If we apply the operator $\widetilde{\QFT}^{\dagger}$ from \autoref{eq:inverseQFT} to the state $\ket{\widetilde{\psi}}$, defined in \autoref{lem:two_states_are_close}, we obtain
		\begin{align*}
			\widetilde{\QFT}^{\dagger}\ket{\widetilde{\psi}} &= \frac{1}{\sqrt{2^{nd}}}\sum_{\vec{k} \in \{-2^{n-1}, \cdots, 2^{n-1}-1\}} e^{i\frac{Sr}{2^n}\nabla f(\vec{0}) \cdot \vec{k}} \QFT^{\dagger}\ket{\vec{x}_{\vec{k}}}_G \\
			&= \frac{1}{2^{nd}} \sum_{\vec{k} \in \{-2^{n-1}, \dots, 2^{n-1}-1\}^d} e^{i\frac{Sr}{2^n} \nabla f(\vec{0}) \cdot \vec{k}} \cdot \sum_{\vec{h} \in \{-2^{n-1}, \dots, 2^{n-1}-1\}^d} e^{-\frac{2\pi i\vec{k} \cdot \vec{h}}{2^n}} \ket{\vec{h}}.
		\end{align*}
		Using that $\ket{\vec{h}} = \ket{h_1} \cdots \ket{h_d}$, we observe that this state is a product state:
		\begin{align*}
			\widetilde{\QFT}^{\dagger}\ket{\widetilde{\psi}} &= \bigotimes_{j=1}^d \frac{1}{2^n} \sum_{k_j = -2^{n-1}}^{2^{n-1}-1} e^{i\frac{Sr}{2^n} \nabla f(\vec{0})_jk_j} \cdot \sum_{h_j = -2^{n-1}}^{2^{n-1}-1} e^{-\frac{2\pi ih_jk_j}{2^n}}\ket{h_j} \\
			&= \bigotimes_{j=1}^d \QFT_{2^n}^{\dagger} \left(\frac{1}{\sqrt{2^n}} \sum_{k_j = 2^{n-1}}^{2^{n-1}-1} e^{i\frac{Sr}{2^n}\nabla f(\vec{0})_j k_j} \ket{k_j}\right).
		\end{align*}
		Suppose we measure the $j$th register of this state in the computational basis, and denote the measurement outcome by $\widetilde{h}_j$. Furthermore, recall that we defined $\vec{g} = (2\pi)/(Sr)\vec{h}$ in \autoref{alg:QGE}. From the robustness of measurements, see \autoref{thm:measurement_robustness}, we obtain that
		\begin{align*}
			\P\left(\left|g_j - \nabla f(\vec{0})_j\right| \leq \varepsilon'\right) &= \P\left(\left|h_j - \frac{Sr}{2\pi}\nabla f(\vec{0})_j\right| \leq \frac{Sr\varepsilon'}{2\pi}\right) \\
			&\geq \P\left(\left|\widetilde{h}_j - \frac{Sr}{2\pi}\nabla f(\vec{0})_j\right| \leq \frac{Sr\varepsilon'}{2\pi}\right) - \min_{\phi \in \R}\norm{e^{i\phi}\left(\widetilde{\QFT}^{\dagger} \otimes I_{2^{N'}}\right)\ket{\psi} - \widetilde{\QFT}^{\dagger}\ket{\widetilde{\psi}} \otimes \ket{0}^{\otimes N'}}.
		\end{align*}
		Note that
		\[\frac{Sr\varepsilon'}{2\pi} \geq \frac{8\pi}{r\varepsilon'} \cdot \frac{r\varepsilon'}{2\pi} = 4,\]
		and using \autoref{eq:Sgeq1} and the Gevrey condition
		\[\norm{\frac{Sr}{2^n}\nabla f(\vec{0})}_{\infty} \leq 2 \cdot \frac{8\pi}{r\varepsilon'} \cdot \frac{r}{2^n} \norm{\nabla f(\vec{0})}_{\infty} \leq 2 \cdot \frac{8\pi}{2^n\varepsilon'} \cdot \frac12c \leq \frac{8\pi c}{2^{\log\left(\frac{12c}{\varepsilon'}\right)}\varepsilon'} = \frac{8c\pi\varepsilon'}{12c\varepsilon'} = \frac{2\pi}{3}.\]
		We employ \autoref{lem:two_states_are_close} and \autoref{thm:robustness_QFT} to obtain
		\begin{align*}
			\P\left(\left|g_j - \nabla f(\vec{0})_j\right| \leq \varepsilon'\right) &\geq \P\left(\left|\widetilde{h}_j - \frac{Sr}{2\pi}\nabla f(\vec{0})_j\right| \leq 4\right) - \min_{\phi \in \R}\norm{e^{i\phi}\ket{\psi} - \ket{\widetilde{\psi}} \otimes \ket{0}^{\otimes N'}} \\
			&\geq \frac56 - \frac16 = \frac23.
		\end{align*}
		This completes the proof.
	\end{proof}

	Finally, we show that only a logarithmic number of repetitions of the quantum routine allow for estimating the gradient sufficiently accurately with the required success probability. This is achieved in \autoref{thm:success_probability}, and also forms the culmination of the argument.

	\begin{theorem}{Success probability of \autoref{alg:QGE}}
		\label{thm:success_probability}
		\autoref{alg:QGE} succeeds with probability at least $2/3$.
	\end{theorem}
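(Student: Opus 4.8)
The plan is to bootstrap \autoref{lem:inner_loop} into a whole-vector guarantee via a standard repetition-and-amplification argument. First I would reduce the claim to a coordinate-wise statement: it suffices to show that, with probability at least $\frac{2}{3}$, the returned vector $\vec{v}$ satisfies $|v_j - \nabla f(\vec{0})_j| \le \varepsilon'$ for \emph{every} $j \in [d]$ simultaneously, since in that event
\[\norm{\vec{v} - \nabla f(\vec{0})}_p \le \left(\sum_{j=1}^d (\varepsilon')^p\right)^{\frac{1}{p}} = d^{\frac{1}{p}}\varepsilon' = \varepsilon\]
(with the usual convention $d^{1/\infty} = 1$), which is exactly the success event of \autoref{alg:QGE}. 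This is the step where the choice $\varepsilon' = \varepsilon/d^{1/p}$ among the derived constants gets used.

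Next, fix a coordinate $j \in [d]$. Step~1 of \autoref{alg:QGE} runs its inner loop $N$ times, independently, and by \autoref{lem:inner_loop} each run outputs a vector $\vec{g}$ with $|g_j - \nabla f(\vec{0})_j| \le \varepsilon'$ with probability at least $\frac{2}{3}$. Let $X_j$ be the number of these $N$ runs for which this holds; then $\E[X_j] \ge \frac{2N}{3}$ by linearity, and, the runs being independent, Hoeffding's inequality (\autoref{thm:statistics}) gives
\[\P\left[X_j \le \frac{N}{2}\right] \le \P\left[X_j \le \E[X_j] - \frac{N}{6}\right] \le e^{-2(N/6)^2/N} = e^{-N/18}.\]
I would then aggregate the $N$ vectors by taking the coordinate-wise \emph{median} (the ``median trick'' of \autoref{subsec:median_trick_substitution}), which is robust: whenever $X_j > \frac{N}{2}$, strictly more than half of the $N$ sampled $j$-th coordinates are $\le \nabla f(\vec{0})_j + \varepsilon'$ and strictly more than half are $\ge \nabla f(\vec{0})_j - \varepsilon'$, so the median $v_j$ lies in $[\nabla f(\vec{0})_j - \varepsilon', \nabla f(\vec{0})_j + \varepsilon']$. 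Contrapositively, the bad event $\{|v_j - \nabla f(\vec{0})_j| > \varepsilon'\}$ is contained in $\{X_j \le N/2\}$, of probability at most $e^{-N/18}$.

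Finally I would union-bound over the $d$ coordinates: the probability that some coordinate is off by more than $\varepsilon'$ is at most $d\, e^{-N/18}$, and since $\log$ denotes $\log_2$ we have $N = \lceil 18\log(3d)\rceil \ge 18\ln(3d)$, so this is at most $d\, e^{-\ln(3d)} = \frac{1}{3}$; combined with the first paragraph, \autoref{alg:QGE} succeeds with probability at least $1 - \frac{1}{3} = \frac{2}{3}$. I do not expect a real obstacle here: all the analytic difficulty has been front-loaded into \autoref{lem:inner_loop}, which in turn rests on the second-moment estimate of \autoref{lem:justification_linearity} and on robustness of the quantum Fourier transform (\autoref{thm:robustness_QFT}); what is left is bookkeeping. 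The two points that need a little care are the $\ell^{\infty}$-to-$\ell^p$ passage through $\varepsilon' = \varepsilon/d^{1/p}$, and the fact that the union bound over the coordinates loses a factor $d$ --- which is exactly why the number of repetitions $N$ must be taken to grow like $\log d$ rather than being an absolute constant.
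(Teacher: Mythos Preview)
Your proposal is correct and follows essentially the same route as the paper's proof: invoke \autoref{lem:inner_loop} for a per-coordinate $2/3$ guarantee, amplify via $N$ independent repetitions and Hoeffding to get a per-coordinate failure probability $\le e^{-N/18}\le 1/(3d)$, take the coordinate-wise median, union-bound over $j\in[d]$, and finally pass from $\ell^\infty$ to $\ell^p$ via $\varepsilon'=\varepsilon/d^{1/p}$. You even handle the $\log=\log_2$ versus $\ln$ point more carefully than the paper does; note also that step~2 of \autoref{alg:QGE} says ``mean'' where both you and the paper's proof (correctly) use the median.
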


	\begin{proof}
		Let $j \in [d]$ and let $\vec{v}$ be a vector produced by \autoref{alg:QGE}. As $v_j$ is defined to be the median of the $g_j$'s obtained over consecutive independent runs of the loop in step 1 of the algorithm, we observe that $|v_j - \nabla f(\vec{0})_j| > \varepsilon'$ only if at least half of the $g_j$'s produced by step 1 satisfy $|g_j - \nabla f(\vec{0})_j| > \varepsilon'$.
		
		For all $k \in [N]$, let $B_k$ be the Bernoulli random variable that is $1$ if on the $k$th run of the loop in step 1 of the algorithm, $|g_j - \nabla f(\vec{0})_j| \leq \varepsilon'$. We find by \autoref{lem:inner_loop} that
		\[P = \P\left(B_k = 1\right) \geq \frac23 \qquad \text{and} \qquad \E\left[\sum_{k=1}^N B_k\right] = NP.\]
		Using Hoeffding's inequality, see \autoref{thm:statistics}, we find that
		\begin{align*}
			\P\left[\sum_{k=1}^N B_k \leq \frac{N}{2}\right] &= \P\left[\sum_{k=1}^N B_k - NP \leq N\left(\frac12 - P\right)\right] \leq e^{-\frac{2N^2\left(P-\frac12\right)^2}{N}} \leq e^{-\frac{N}{18}}.
		\end{align*}
		Hence,
		\[\P\left[\left|v_j - \nabla f(\vec{0})_j\right| > \varepsilon'\right] \leq \P\left[\sum_{k=1}^N B_k \leq \frac{N}{2}\right] \leq e^{-\frac{N}{18}} \leq e^{-\frac{1}{18} \cdot 18 \cdot \log(3d)} = \frac{1}{3d}.\]
		This relation holds regardless of our choice of $j$. Hence, using the union bound, see \autoref{thm:union_bound},
		\[\P\left[\norm{\vec{v} - \nabla f(\vec{0})}_{\infty} \leq \varepsilon'\right] \geq 1 - \sum_{j=1}^d \P\left[\left|v_j - \nabla f(\vec{0})_j\right| > \varepsilon'\right] \geq 1 - d \cdot \frac{1}{3d} = \frac23.\]
		Since for any $\vec{x} \in \R^d$,
		\[\norm{\vec{x}}_p = \left(\sum_{j=1}^d |x_j|^p\right)^{\frac1p} \leq d^{\frac1p}\norm{\vec{x}}_{\infty},\]
		we conclude that
		\[\P\left[\norm{\vec{v} - \nabla f(\vec{0})}_p \leq \varepsilon\right] \geq \P\left[d^{\frac1p}\norm{\vec{v} - \nabla f(\vec{0})}_{\infty} \leq \varepsilon\right] = \P\left[\norm{\vec{v} - \nabla f(\vec{0})}_{\infty} \leq \varepsilon'\right] \geq \frac23.\]
		This completes the proof.
	\end{proof}

	We have now proven that \autoref{alg:QGE} is an $\varepsilon$-precise $\ell^p$-approximate quantum gradient estimation algorithm for $\G_{d,c,\sigma,\R^d}$ on $G$ with success probability lower bounded by $2/3$, as defined in \autoref{def:quantum_gradient_estimation_algorithms}. Note that if we choose $N = \lceil 18\log(d/(1-P)) \rceil$ in \autoref{alg:QGE} instead, we obtain a success probability that is lower bounded by $P \in (1/2,1)$.
	
	There are many moving parts to the proofs in this section, and it is not directly obvious that all these parts connect tightly. However, despite considerable effort it seems like this argument cannot be improved in the case where $p = \infty$ and $\sigma \in (1/2,1]$ without introducing some fundamentally new ideas. This is a very interesting topic of further research, as we will further explain in \autoref{sec:conclusion}.

	\section{Quantum gradient estimation query complexity lower bounds}
	\label{sec:lower_bound}
	
	In this section, we introduce a new lower bound on the query complexity of the quantum gradient estimation problem as presented in \autoref{def:quantum_gradient_estimation_problem}. First, in \autoref{subsec:lower_bound_proof}, we prove a lower bound on the query complexity under the assumption that $p = 1$, $\sigma = 0$ and $P = 17/18$. Subsequently, in \autoref{subsec:general_lower_bound_proof}, we present reduction arguments with which we can obtain lower bounds on the query complexity for $p \in [1,\infty]$, $\sigma \geq 0$ and $P \in (1/2,1]$.
	
	\subsection{Lower bound for the case $p = 1$, $\sigma = 0$ and $P = 17/18$}
	\label{subsec:lower_bound_proof}
	
	In this section, we will formalize the ideas that were presented in \autoref{subsec:instance_selection} and \autoref{subsec:claw_selection}. We start by introducing some functions that we refer to as \textit{test functions}. These functions can intuitively be thought of as being very close to each other w.r.t.\ the supremum norm, but nonetheless having very different gradients.

	\begin{definition}{Test functions}
		\label{def:test_functions}
		Let $d \in \N$, $c > 0$ and $\varepsilon > 0$. We define, for all $\vec{b} \in \{-1,1\}^d$ and $\vec{x} \in \R^d$:
		\[f_{d,c,\varepsilon,\vec{b}} : \R^d \to \R, \qquad f_{d,c,\varepsilon,\vec{b}}(\vec{x}) = \sum_{j=1}^d \frac{73\varepsilon b_j}{cd} \sin\left(cx_j\right) \cdot \prod_{\underset{k \neq j}{k=1}}^d \cos\left(cx_k\right).\]
		Furthermore, we define the class of test functions as follows:
		\[\F_{d,c,\varepsilon} = \left\{f_{d,c,\varepsilon,\vec{b}} : \vec{b} \in \{-1,1\}^d\right\}.\]
	\end{definition}

	Let's first quantify the smoothness of these functions.

	\begin{lemma}{Test functions are Gevrey functions}
		\label{lem:test_functions_smoothness}
		Let $d \in \N$, $c > 0$ and $\varepsilon \in (0,c/146)$. Then $\F_{d,c,\varepsilon} \subseteq \G_{d,c,0,\R^d}$.
	\end{lemma}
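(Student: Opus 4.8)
The plan is to verify directly that every test function $f_{d,c,\varepsilon,\vec{b}}$ satisfies the Gevrey bound $|\partial_\alpha f_{d,c,\varepsilon,\vec{b}}(\vec{x})| \le \frac12 c^k (k!)^0 = \frac12 c^k$ for all $\vec{x}\in\R^d$, all $k\in\N_0$ and all $\alpha\in[d]^k$, uniformly over $\vec{b}\in\{-1,1\}^d$. Since $\sigma=0$ the factorial factor is absent, so the whole content is a clean bound on the size of an arbitrary iterated partial derivative. Fix $\vec{b}$ and write $f = f_{d,c,\varepsilon,\vec{b}} = \sum_{j=1}^d \frac{73\varepsilon b_j}{cd}\, g_j(\vec{x})$ where $g_j(\vec{x}) = \sin(cx_j)\prod_{k\neq j}\cos(cx_k)$. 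The key structural observation is that each summand $g_j$ is a product of $d$ univariate factors, one per coordinate, each of the form $\sin(cx_i)$ or $\cos(cx_i)$; hence $\partial_\alpha g_j$ factorizes as $\prod_{i=1}^d h_i^{(k_i)}(cx_i)\cdot c^{k_i}$ where $k_i$ is the number of times coordinate $i$ appears in $\alpha$ (so $\sum_i k_i = k$), and each $h_i^{(k_i)}$ is again $\pm\sin$ or $\pm\cos$, so $|h_i^{(k_i)}(cx_i)|\le 1$. Therefore $|\partial_\alpha g_j(\vec{x})| \le c^{\sum_i k_i} = c^k$ pointwise, for every $j$ and every $\vec{x}$.

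From here I would just assemble the bound. By the triangle inequality,
\[
|\partial_\alpha f(\vec{x})| \;\le\; \sum_{j=1}^d \frac{73\varepsilon |b_j|}{cd}\,|\partial_\alpha g_j(\vec{x})| \;\le\; \sum_{j=1}^d \frac{73\varepsilon}{cd}\, c^k \;=\; \frac{73\varepsilon}{c}\, c^k.
\]
Now use the hypothesis $\varepsilon < c/146$, which gives $\frac{73\varepsilon}{c} < \frac{73}{146} = \frac12$, so $|\partial_\alpha f(\vec{x})| \le \frac12 c^k$, which is exactly the Gevrey condition of \autoref{def:gevrey_functions} with parameters $c$ and $\sigma = 0$ on $\Omega = \R^d$. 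Since this holds for every $\vec{b}\in\{-1,1\}^d$, we conclude $\F_{d,c,\varepsilon}\subseteq\G_{d,c,0,\R^d}$. I should also note at the outset that $f$ is visibly smooth (a finite sum of finite products of entire functions), so the differentiability requirement in \autoref{def:gevrey_functions} is automatic.

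The only mildly delicate point — the ``main obstacle,'' though it is minor — is writing the factorization of $\partial_\alpha g_j$ cleanly: one must observe that iterated partial differentiation of a product in separated variables commutes across the factors and reduces to ordinary differentiation coordinate-by-coordinate, and that differentiating $\sin$ or $\cos$ any number of times yields $\pm\sin$ or $\pm\cos$ (evaluated at $cx_i$, with a factor $c$ per derivative), hence is bounded by $1$ in modulus regardless of order. Once that is spelled out the rest is the two-line estimate above. It is worth double-checking the constant chain $73/146 = 1/2$ and that the $1/d$ in the definition exactly cancels the $d$ summands, which is clearly the reason the radius $73/d$ (rather than, say, a constant) was chosen in \autoref{def:test_functions}.
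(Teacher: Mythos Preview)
Your proof is correct and follows essentially the same approach as the paper: factorize each summand as a product of univariate sines and cosines, observe that $\partial_\alpha$ distributes coordinate-wise producing a factor $c^k$ times derivatives of $\sin$/$\cos$ bounded by $1$, then sum the $d$ terms and use $\varepsilon < c/146$ to get the $\tfrac12$. The paper's proof is the same computation, with your $k_i$ written as $a_j$.
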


	\begin{proof}
		Let $\ell \in \N_0$, and $\alpha \in [d]^{\ell}$. For all $j \in [d]$, let $a_j$ denote the number of occurrences of $j$ in $\alpha$. Let $\vec{b} \in \{-1,1\}^d$. We find, for all $\vec{x} \in \R^d$:
		\begin{equation}
			\label{eq:partial_derivative_test_functions}
			\partial_{\alpha}f_{d,c,\varepsilon,\vec{b}}(\vec{x}) = \sum_{j=1}^d \frac{73\varepsilon b_j}{cd} c^{a_j}\sin^{(a_j)}(cx_j) \cdot \prod_{\underset{k \neq j}{k=1}}^d c^{a_k} \cos^{(a_k)}(cx_k).
		\end{equation}
		Since all derivatives of cosines and sines are again cosines and sines, they are bounded by $1$ in absolute value. Hence,
		\[|\partial_{\alpha}f_{d,c,\varepsilon,\vec{b}}(\vec{x})| \leq \sum_{j=1}^d \frac{73\varepsilon |b_j|}{cd} c^{a_j} \cdot \prod_{\underset{k \neq j}{k=1}}^d c^{a_k} = \frac{73\varepsilon}{cd} \cdot d \cdot c^{\sum_{j=1}^d a_j} = \frac{73\varepsilon}{c} \cdot c^{\ell} \leq \frac12c^{\ell},\]
		where we used $\varepsilon < c/146$ in the last step. We find that $f_{d,c,\varepsilon,\vec{b}} \in \G_{d,c,0,\R^d}$. As this holds for any $\vec{b} \in \{-1,1\}^d$, $\F_{d,c,\varepsilon} \subseteq \G_{d,c,0,\R^d}$.
	\end{proof}

	Hence the test functions are members of Gevrey classes with parameter $\sigma = 0$. This indicates that we might be able to use these functions to obtain lower bounds on the query complexity of the gradient estimation problem with $\sigma \geq 0$, and that they are useless for proving lower bounds when $\sigma < 0$.
	
	Let us calculate the gradient of each test function.

	\begin{lemma}{Gradient of test functions}
		\label{lem:gradient_test_functions}
		Let $d \in \N$, $c > 0$, $\varepsilon > 0$ and $\vec{b} \in \{-1,1\}^d$. Then
		\[\nabla f_{d,c,\varepsilon,\vec{b}}(\vec{0}) = \frac{73\varepsilon}{d}\vec{b}.\]
	\end{lemma}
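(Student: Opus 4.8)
The plan is to compute the first-order partial derivatives of $f_{d,c,\varepsilon,\vec{b}}$ at the origin directly, reusing the differentiation formula already established in the proof of \autoref{lem:test_functions_smoothness}. Fix $i \in [d]$ and apply \autoref{eq:partial_derivative_test_functions} with $\ell = 1$ and the multi-index $\alpha = (i)$, so that the occurrence counts are $a_i = 1$ and $a_k = 0$ for all $k \neq i$; concretely, writing $\delta$ for the Kronecker delta,
\[\partial_i f_{d,c,\varepsilon,\vec{b}}(\vec{x}) = \sum_{j=1}^d \frac{73\varepsilon b_j}{cd}\, c^{\delta_{ji}} \sin^{(\delta_{ji})}(cx_j) \cdot \prod_{\underset{k \neq j}{k=1}}^d c^{\delta_{ki}} \cos^{(\delta_{ki})}(cx_k).\]

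Next I would evaluate this sum term by term at $\vec{x} = \vec{0}$, using $\sin(0) = 0$ and $\cos(0) = 1$. For a summand with $j \neq i$ the coordinate-$j$ factor is the undifferentiated $\sin(cx_j)$, which vanishes at $x_j = 0$, so every such summand contributes $0$. Only the summand $j = i$ survives: there the coordinate-$i$ factor becomes $c\sin'(cx_i) = c\cos(cx_i)$ and all remaining factors are undifferentiated cosines $\cos(cx_k)$, $k \neq i$. Evaluating at the origin makes every cosine equal to $1$, so this summand equals $\tfrac{73\varepsilon b_i}{cd}\cdot c = \tfrac{73\varepsilon b_i}{d}$. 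Hence $\partial_i f_{d,c,\varepsilon,\vec{b}}(\vec{0}) = \tfrac{73\varepsilon b_i}{d}$ for every $i \in [d]$, and collecting the coordinates yields $\nabla f_{d,c,\varepsilon,\vec{b}}(\vec{0}) = \tfrac{73\varepsilon}{d}\vec{b}$.

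There is no real obstacle here: the function is a finite sum of finite products of smooth functions, so differentiation and evaluation are entirely elementary, and the only point requiring a little care is the bookkeeping of which trigonometric factor in each summand is differentiated by $\partial_i$. If one prefers to avoid invoking \autoref{eq:partial_derivative_test_functions}, an equivalent route is to note that each summand of $f_{d,c,\varepsilon,\vec{b}}$ indexed by $j$ is, up to the constant $\tfrac{73\varepsilon b_j}{cd}$, a product in which the $j$-th variable enters through $\sin(cx_j)$ and every other variable through a cosine; by the product rule, $\partial_i$ of this summand evaluated at $\vec{0}$ is nonzero only when the differentiation hits the unique factor that does not already vanish at the origin, i.e.\ only when $i = j$, and then it produces exactly $c$, giving the same conclusion.
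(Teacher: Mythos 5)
Your proof is correct and follows essentially the same route as the paper: both invoke \autoref{eq:partial_derivative_test_functions} with a first-order multi-index, observe that every summand with $j \neq i$ contains an undifferentiated $\sin(cx_j)$ that vanishes at the origin, and read off the surviving $j = i$ term as $\tfrac{73\varepsilon b_i}{d}$.
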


	\begin{proof}
		Let $\ell \in [d]$. We will use \autoref{eq:partial_derivative_test_functions} with $\alpha = (\ell)$ and $\vec{x} = \vec{0}$. Observe that whenever $j \neq \ell$, we are evaluating $\sin(cx_j)$ at $x_j = 0$ in the $j$th term of the summation, so this term vanishes. The only term that remains is the $\ell$th term, hence
		\[\partial_{\ell}f_{d,c,\varepsilon,\vec{b}}(\vec{0}) = \frac{73\varepsilon b_j}{cd} c\cos(cx_{\ell}) \cdot \prod_{\underset{k\neq\ell}{k=1}}^d \cos(cx_k) = \frac{73\varepsilon b_j}{cd} \cdot c = \frac{73\varepsilon b_j}{d}.\]
		Putting all these partial derivatives in a vector yields
		\[\nabla f_{d,c,\varepsilon,\vec{b}}(\vec{0}) = \frac{73\varepsilon}{d}\vec{b},\]
		completing the proof.
	\end{proof}

	Note that if we represented each of these gradients in $\R^d$ with a vertex, then we would obtain a Hamming cube with inradius $73\varepsilon/d$. This highlights the connection with \autoref{fig:Hamming_cube}.
	
	Now, we are ready to start the proof of the lower bound. We start from the assumption that we have any $\varepsilon$-precise $\ell^1$-approximate quantum gradient estimation algorithm $\A$ for $\G_{d,c,0,\R^d}$ on $G$ with success probability lower bounded by $17/18$. The following two lemmas deduce some properties that such an $\A$ must inevitably satisfy.

	\begin{lemma}{Vertex approximation}
		\label{lem:vertex_approximation}
		Let $d \in \N$, $c > 0$, $\varepsilon \in (0,c/146)$ and $G \subseteq \R^d$. Suppose $\A$ is an $\varepsilon$-precise $\ell^1$-approximate quantum gradient estimation algorithm for $\G_{d,c,0,\R^d}$ on $G$ with success probability lower bounded by $17/18$. Then there exists a $J \subseteq [d]$, $|J| \geq \frac34 \cdot d$ such that for all $j \in J$, there exists a set $F_j \subseteq \F_{d,c,\varepsilon}$ of size $|F_j| \geq \frac23 \cdot 2^d$ such that for all $f \in F_j$:
		\[\P\left[\left|\A(f)_j - \nabla f(\vec{0})_j\right| \leq \frac{72\varepsilon}{d}\right] \geq \frac23.\]
	\end{lemma}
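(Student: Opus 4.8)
The plan is to extract the claimed sets $J$ and $F_j$ from the single hypothesis that $\A$ succeeds with probability at least $17/18$ on every test function, using a double-counting (averaging) argument over the $2^d$ test functions and the $d$ coordinates. First I would fix any $\vec{b} \in \{-1,1\}^d$ and write $f = f_{d,c,\varepsilon,\vec{b}}$. By \autoref{lem:test_functions_smoothness} we have $f \in \G_{d,c,0,\R^d}$, so by hypothesis $\P[\norm{\A(f) - \nabla f(\vec{0})}_1 \leq \varepsilon] \geq 17/18$. On the success event, since $\norm{\A(f) - \nabla f(\vec{0})}_1 = \sum_{j=1}^d |\A(f)_j - \nabla f(\vec{0})_j| \leq \varepsilon$ and each individual coordinate error is nonnegative, a Markov-type counting argument shows that at most $d/72$ coordinates can have error exceeding $72\varepsilon/d$ (otherwise the $\ell^1$-error would exceed $(d/72)\cdot(72\varepsilon/d) = \varepsilon$). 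So on the success event, at least $d - d/72 = (71/72)d$ coordinates $j$ satisfy $|\A(f)_j - \nabla f(\vec{0})_j| \leq 72\varepsilon/d$.

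Next I would double count. For each test function $f \in \F_{d,c,\varepsilon}$ and each coordinate $j \in [d]$, call the pair $(f,j)$ \emph{good} if $\P[|\A(f)_j - \nabla f(\vec{0})_j| \leq 72\varepsilon/d] \geq 2/3$. I want to lower bound the number of good pairs. For a fixed $f$, let $q_j = \P[|\A(f)_j - \nabla f(\vec{0})_j| \leq 72\varepsilon/d]$. The argument in the previous paragraph, combined with the $17/18$ success probability, gives $\E[\#\{j : |\A(f)_j - \nabla f(\vec{0})_j| \leq 72\varepsilon/d\}] \geq (17/18)\cdot(71/72)d$; but this expectation is exactly $\sum_{j=1}^d q_j$. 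Writing $\sum_j q_j \geq (17/18)(71/72)d$ and using $q_j \leq 1$, another Markov-type bound forces the number of indices $j$ with $q_j \geq 2/3$ to be at least a constant fraction of $d$ — one checks the arithmetic gives at least $(3/4)d$ good coordinates per $f$ for these particular constants (the slack between $(17/18)(71/72) \approx 0.931$ and the threshold $2/3$ is comfortable; more carefully, if fewer than $\lambda d$ coordinates had $q_j \geq 2/3$ then $\sum_j q_j < \lambda d + (1-\lambda)d\cdot \tfrac23$, and setting this below $(17/18)(71/72)d$ bounds $\lambda$ away from below). Hence the total number of good pairs is at least $(3/4)d \cdot 2^d$.

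Now I would swap the order of counting: the good pairs are distributed among the $d$ coordinates, so by averaging there must exist at least $(3/4)d$ coordinates $j$ — this is the set $J$ — for which the number of test functions $f$ with $(f,j)$ good is at least $(3/4)\cdot 2^d$; more carefully one argues that the set $J = \{j : |\{f : (f,j)\text{ good}\}| \geq (2/3)2^d\}$ has size at least $(3/4)d$, since if $|J| < (3/4)d$ then the total good-pair count would be at most $|J|\cdot 2^d + (d-|J|)\cdot(2/3)2^d$, and requiring this to be at least $(3/4)d\,2^d$ again pins down $|J| \geq (3/4)d$. For each such $j \in J$, take $F_j = \{f \in \F_{d,c,\varepsilon} : (f,j)\text{ good}\}$, which has size at least $(2/3)2^d$ by construction, and for every $f \in F_j$ the defining property of a good pair is exactly $\P[|\A(f)_j - \nabla f(\vec{0})_j| \leq 72\varepsilon/d] \geq 2/3$, as required.

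The main obstacle is getting the two nested averaging inequalities to close with the specific constants $3/4$, $2/3$, $72\varepsilon/d$ and the success probability $17/18$; the proof is not conceptually hard but it is a chain of three Markov-style counting steps (coordinate-error budget within one function, then good-coordinates-per-function, then good-functions-per-coordinate), and one must be careful that the constants compose correctly — in particular that $(17/18)\cdot(71/72)$ stays far enough above $2/3$ and that each averaging step does not lose more than the budgeted slack. I would double-check each arithmetic step explicitly rather than hand-waving, since the whole point of the somewhat unusual constant $73$ in \autoref{def:test_functions} is presumably to make exactly this computation go through.
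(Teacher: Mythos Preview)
Your approach is in the same spirit as the paper's --- a double-counting argument over test functions and coordinates --- but the specific order in which you cascade the Markov/pigeonhole steps loses too much, and the final claim $|J|\geq \tfrac34 d$ does not follow. Concretely: from the total good-pair count $\geq \tfrac34 d\cdot 2^d$ (or even the sharper $\approx 0.794\,d\cdot 2^d$), the pigeonhole you write down, namely $|J|\cdot 2^d + (d-|J|)\cdot\tfrac23\cdot 2^d \geq \tfrac34 d\cdot 2^d$, yields only $|J|\geq \tfrac14 d$, not $\tfrac34 d$. (Check: $|J|/3 + 2d/3 \geq 3d/4$ gives $|J|\geq d/4$.) The two consecutive ``threshold at $2/3$'' steps each cost a factor, and composing them you end up with roughly $9\alpha-8\approx 0.38$ where $\alpha=(17/18)(71/72)$, well short of $3/4$. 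Your proof would establish a weaker version of the lemma, enough for the downstream hybrid-method argument with a smaller constant, but not the lemma as stated.

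The paper avoids this loss by \emph{not} thresholding the coordinate errors early. Instead it keeps the conditional expectation $\E[|\A(f)_j-\nabla f(\vec 0)_j|\mid S_f]$ intact, averages it over $f$ and $j$ (total $\leq\varepsilon$), and does a single pigeonhole on $j$ with budget $4\varepsilon/d$ to get $|J|\geq \tfrac34 d$ directly. Only then does it apply Markov's inequality once, with the large ratio $(72\varepsilon/d)/(4\varepsilon/d)=18$, to bound the conditional probability by $1/18$; unconditioning adds another $1/18$, and a final pigeonhole on $f$ with threshold $1/3$ gives $|F_j|\geq\tfrac23\cdot 2^d$. The point is that one Markov step with ratio $18$ is much tighter than two nested Markov steps each with ratio $3$; your early conversion of ``$\ell^1$-error $\leq\varepsilon$'' into ``at most $d/72$ bad coordinates'' discards exactly the information needed to make the constants close.
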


	\begin{proof}
		Let $f \in \G_{d,c,0,\R^d}$ be arbitrary. From \autoref{def:quantum_gradient_estimation_algorithms} we infer that the quantum gradient estimation algorithm $\A$ has the following property:
		\[\P\left[\norm{\A(f) - \nabla f(\vec{0})}_1 \leq \varepsilon\right] \geq \frac{17}{18}.\]
		For every $f \in \G_{d,c,0,\R^d}$, we denote the event that the algorithm succeeds by $S_f$. Hence, the above line reduces to $\P[S_f] \geq 17/18$. Now, we find that
		\[\E\left[\norm{\A(f) - \nabla f(\vec{0})}_1 \mid S_f\right] = \E\left[\norm{\A(f) - \nabla f(\vec{0})}_1 \mid \norm{\A(f) - \nabla f(\vec{0})}_1 \leq \varepsilon\right] \leq \varepsilon.\]
		Note that we can rewrite the left-hand side as follows:
		\[\sum_{j=1}^d \E\left[\left|\A(f)_j - \nabla f(\vec{0})_j\right| \mid S_f\right] = \E\left[\left.\sum_{j=1}^d \left|\A(f)_j - \nabla f(\vec{0})_j\right| \right| S_f\right] = \E\left[\norm{\A(f) - \nabla f(\vec{0})}_1 \mid S_f\right] \leq \varepsilon.\]
		The above relation holds for all $f \in \G_{d,c,0,\R^d}$, so the result still holds if we average over the elements in $\F_{d,c,\varepsilon}$:
		\[\frac{1}{2^d} \sum_{f \in \F_{d,c,\varepsilon}} \sum_{j=1}^d \E\left[\left|\A(f)_j - \nabla f(\vec{0})_j\right| \mid S_f\right] \leq \varepsilon.\]
		As both summations are finite, we can swap them to obtain
		\[\sum_{j=1}^d \frac{1}{2^d} \sum_{f \in \F_{d,c,\varepsilon}} \E\left[\left|\A(f)_j - \nabla f(\vec{0})_j\right| \mid S_f\right] \leq \varepsilon.\]
		As all terms are non-negative, we now argue using a pigeonhole principle argument that at least $3d/4$ of the terms of the outer summation are upper bounded by $4\varepsilon/d$. Suppose there were more than $d/4$ terms that were not bounded by $4\varepsilon/d$. Then the resulting summation would exceed $\varepsilon$. But this is a contradiction, so there must be a set $J \subseteq [d]$ with $|J| \geq 3d/4$, such that for all $j \in J$,
		\[\frac{1}{2^d} \sum_{f \in \F_{d,c,\varepsilon}} \E\left[\left|\A(f)_j - \nabla f(\vec{0})_j\right| \mid S_f\right] \leq \frac{4\varepsilon}{d}.\]
		Let $j \in J$ arbitrarily. We now rewrite the above relation using Markov's inequality to obtain
		\[\frac{1}{2^d} \sum_{f \in \F_{d,c,\varepsilon}} \P\left[\left.\left|\A(f)_j - \nabla f(\vec{0})_j\right| \geq \frac{72\varepsilon}{d} \right| S_f\right] \leq \frac{1}{2^d} \sum_{f \in \F_{d,c,\varepsilon}} \frac{\E\left[\left|\A(f)_j - \nabla f(\vec{0})_j\right| \mid S_f\right]}{\frac{72\varepsilon}{d}} \leq \frac{\frac{4\varepsilon}{d}}{\frac{72\varepsilon}{d}} = \frac{1}{18}.\]
		And so, using Bayes' rule, we obtain
		\begin{align*}
			&\frac{1}{2^d} \sum_{f \in \F_{d,c,\varepsilon}} \P\left[\left|\A(f)_j - \nabla f(\vec{0})_j\right| \geq \frac{72\varepsilon}{d}\right] \\
			&= \frac{1}{2^d} \sum_{f \in \F_{d,c,\varepsilon}} \left(\P\left[\left.\left|\A(f)_j - \nabla f(\vec{0})_j\right| \geq \frac{72\varepsilon}{d} \right| S_f\right] \cdot \P(S_f) + \P\left[\left.\left|\A(f)_j - \nabla f(\vec{0})_j\right| \geq \frac{72\varepsilon}{d} \right| S_f^c\right] \cdot \P(S_f^c)\right) \\
			&\leq \frac{1}{2^d} \sum_{f \in \F_{d,c,\varepsilon}} \left(\P\left[\left.\left|\A(f)_j - \nabla f(\vec{0})_j\right| \geq \frac{72\varepsilon}{d} \right| S_f\right] \cdot 1 + 1 \cdot \frac{1}{18}\right) \\
			&= \frac{1}{2^d} \sum_{f \in \F_{d,c,\varepsilon}} \P\left[\left. \left|\A(f)_j - \nabla f(\vec{0})_j\right| \geq \frac{72\varepsilon}{d} \right| S_f \right] + \frac{1}{2^d} \cdot 2^d \cdot \frac{1}{18} \leq \frac{1}{18} + \frac{1}{18} = \frac19.
		\end{align*}
		Now, we can employ a similar pigeonhole principle argument to show that there must be a set $F_j \subseteq \F_{d,c,\varepsilon}$ such that $|F_j| \geq \frac23 \cdot 2^d$ and for all $f \in F_j$,
		\[\P\left[\left|\A(f)_j - \nabla f(\vec{0})_j\right| \geq \frac{72\varepsilon}{d}\right] \leq \frac13.\]
		But from here, we infer that for any $f \in F_j$,
		\[\P\left[\left|\A(f)_j - \nabla f(\vec{0})_j\right| \leq \frac{72\varepsilon}{d}\right] \geq \frac23.\]
		As we chose $j \in J$ and $f \in F_j$ arbitrarily, the above relation holds for all $j \in J$ and $f \in F_j$. This completes the proof.
	\end{proof}

	Let us pause here and develop some intuition for what the result of the previous lemma entails. Recall from \autoref{lem:gradient_test_functions} that all functions $f_{d,c,\varepsilon,\vec{b}} \in \F_{d,c,\varepsilon}$ have a gradient evaluated at $\vec{0}$ given by $\nabla f_{d,c,\varepsilon,\vec{b}}(\vec{0}) = (73\varepsilon/d)\vec{b}$. Hence, by taking the gradient at $\vec{0}$, we can relate the function $f_{d,c,\varepsilon,\vec{b}}$ to a vertex of the $d$-dimensional Hamming cube with inradius $73\varepsilon/d$ centered around the origin in $\R^d$. We say that a vertex $(73\varepsilon/d)\vec{b}$ of the Hamming cube is \textit{marked by the $j$th coordinate}, if the gradient of the corresponding test function $f_{d,c,\varepsilon,\vec{b}}$ is well-approximated in the $j$th coordinate by algorithm $\A$, in the sense that
	\begin{equation}
		\label{eq:marked}
		\P\left[\left|\A(f)_j - \nabla f(\vec{0})_j\right| \leq \frac{72\varepsilon}{d}\right] \geq \frac23.
	\end{equation}
	The previous lemma can now be very concisely rephrased. We have learned that at least three quarters of all coordinates mark at least two thirds of all vertices of the Hamming cube. Even more generically, we can say that the majority of all coordinates mark the majority of all vertices.
	
	Now, let $\vec{b}^{(1)},\vec{b}^{(2)} \in \{-1,1\}^d$ be such that $\vec{b}^{(1)} - \vec{b}^{(2)} = 2\vec{e}_j$, i.e., the vectors $\vec{b}^{(1)}$ and $\vec{b}^{(2)}$ differ only in the $j$th coordinate with $b^{(1)}_j = 1$ and $b^{(2)}_j = -1$. Another way to look at this is that $(73\varepsilon/d)\vec{b}^{(1)}$ and $(73\varepsilon/d)\vec{b}^{(2)}$ are adjacent vertices in the Hamming cube, with an adjoining edge pointed in the $j$th direction. Suppose that both vertices $\vec{b}^{(1)}$ and $\vec{b}^{(2)}$ are marked by the $j$th coordinate. Using \autoref{eq:marked} and \autoref{lem:gradient_test_functions}, we find that,
	\begin{equation}
		\label{eq:marked_intervals}
		\P\left[\A(f_{d,c,\varepsilon,\vec{b}^{(1)}})_j \in \left[\frac{\varepsilon}{d}, \frac{145\varepsilon}{d}\right]\right] \geq \frac23 \qquad \text{and} \qquad \P\left[\A(f_{d,c,\varepsilon,\vec{b}^{(2)}})_j \in \left[-\frac{145\varepsilon}{d}, -\frac{\varepsilon}{d}\right]\right] \geq \frac23.
	\end{equation}
	Note that these events are disjoint. In other words, if both $\vec{b}^{(1)}$ and $\vec{b}^{(2)}$ are marked by the $j$th coordinate, then any $\varepsilon$-precise $\ell^1$-approximate quantum gradient estimation algorithm with success probability at least $17/18$ can distinguish between the functions corresponding to these two vertices. If we can find many of these pairs, then we can take the vertex that is part of as many such pairs as possible, and apply the hybrid method with this vertex as the central instance.
	
	Since \autoref{lem:vertex_approximation} shows that there are many coordinates that select many vertices, it is intuitively clear that there must be many of these pairs. The aim of the following lemma is to make this intuition rigorous.

	\begin{lemma}{Edge separation}
		\label{lem:edge_separation}
		Let $d \in \N$, $c > 0$, $\varepsilon \in (0,c/146)$ and $G \subseteq \R^d$. Suppose that $\A$ is an $\varepsilon$-precise $\ell^1$-approximate quantum gradient estimation algorithm for $\G_{d,c,0,\R^d}$ on $G$ with success probability lower bounded by $17/18$. Then, there exists a $\vec{b}^* \in \{-1,1\}^d$ and a set $U \subseteq [d]$ of size $|U| \geq d/4$ such that for all $j \in U$,
		\[\P\left[\left|\A(f_{d,c,\varepsilon,\vec{b}^*})_j - \nabla f_{d,c,\varepsilon,\vec{b}^*}(\vec{0})_j\right| \leq \frac{72\varepsilon}{d}\right] \geq \frac23,\]
		and
		\[\P\left[\left|\A(f_{d,c,\varepsilon,\vec{b}^{\pm j}})_j - \nabla f_{d,c,\varepsilon,\vec{b}^{\pm j}}(\vec{0})_j\right| \leq \frac{72\varepsilon}{d}\right] \geq \frac23 \qquad \text{where} \qquad \vec{b}^{\pm j} = \vec{b}^* - 2b^*_j\vec{e}_j,\]
		i.e., $\vec{b}^{\pm j}$ differs from $\vec{b}^*$ solely in the $j$th entry.
	\end{lemma}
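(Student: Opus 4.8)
The plan is to run a double-counting/averaging argument over the vertices of the Hamming cube $\{-1,1\}^d$, fed by the structural information from \autoref{lem:vertex_approximation}. First I would invoke that lemma to obtain the set $J \subseteq [d]$ with $|J| \geq \tfrac34 d$ and, for each $j \in J$, the family $F_j \subseteq \F_{d,c,\varepsilon}$ of size $|F_j| \geq \tfrac23 \cdot 2^d$ on which $\A$ estimates the $j$th coordinate of the gradient to within $72\varepsilon/d$ with probability at least $\tfrac23$ (the condition \autoref{eq:marked}). Identifying test functions with vertices of the cube via $f_{d,c,\varepsilon,\vec{b}} \leftrightarrow \vec{b}$, I call a vertex $\vec{b}$ \emph{good for $j$} if both $f_{d,c,\varepsilon,\vec{b}} \in F_j$ and $f_{d,c,\varepsilon,\vec{b}^{\pm j}} \in F_j$, where $\vec{b}^{\pm j} = \vec{b} - 2b_j\vec{e}_j$ is the neighbour of $\vec{b}$ in direction $j$.

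Next I would count, for a fixed $j \in J$, how many vertices are good for $j$. The direction-$j$ edges of the cube form a perfect matching of the $2^d$ vertices into $2^{d-1}$ pairs, and such a pair fails to have both endpoints in $F_j$ only if one of its endpoints lies in the complement of $F_j$, which has size at most $\tfrac13 \cdot 2^d$. Hence at most $\tfrac13 \cdot 2^d$ of the $2^{d-1}$ direction-$j$ edges are bad, leaving at least $2^{d-1} - \tfrac13 \cdot 2^d = \tfrac16 \cdot 2^d$ edges with both endpoints in $F_j$; each such edge contributes its two endpoints to the set of vertices good for $j$ (and, since the good-for-$j$ vertices are exactly the endpoints of these edges, this count is exact up to the matching structure), so there are at least $\tfrac13 \cdot 2^d$ vertices good for $j$.

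Then I would double count the pairs $(\vec{b},j)$ with $j \in J$ and $\vec{b}$ good for $j$. Summing the per-coordinate bound over $j \in J$ gives a total of at least $|J| \cdot \tfrac13 \cdot 2^d \geq \tfrac34 d \cdot \tfrac13 \cdot 2^d = \tfrac14 d\,2^d$, so by averaging over the $2^d$ vertices there is a vertex $\vec{b}^* \in \{-1,1\}^d$ that is good for at least $\tfrac14 d$ coordinates $j \in J$; let $U$ be this set, so $U \subseteq [d]$ and $|U| \geq d/4$. For every $j \in U$, goodness of $\vec{b}^*$ means $f_{d,c,\varepsilon,\vec{b}^*} \in F_j$ and $f_{d,c,\varepsilon,\vec{b}^{\pm j}} \in F_j$, and applying the defining property of $F_j$ from \autoref{lem:vertex_approximation} to each of these two functions, together with \autoref{lem:gradient_test_functions} to identify the gradients, yields exactly the two displayed probability bounds in the statement.

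There is no genuine obstacle here; the argument is a clean pigeonhole/averaging chain. The only thing to watch is the bookkeeping of the constants: one must use the complement bound $|\F_{d,c,\varepsilon} \setminus F_j| \leq \tfrac13 \cdot 2^d$ against the direction-$j$ matching of size $2^{d-1}$ so that a fraction $\tfrac13$ of the direction-$j$ edges survive, hence a fraction $\tfrac13$ of all vertices are good for $j$; combined with $|J| \geq \tfrac34 d$ this clears the threshold $|U| \geq d/4$ with essentially no slack, which is presumably why \autoref{lem:vertex_approximation} is tuned to deliver precisely these fractions.
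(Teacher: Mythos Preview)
Your proposal is correct and follows essentially the same approach as the paper's proof: both invoke \autoref{lem:vertex_approximation}, count for each $j\in J$ the direction-$j$ edges with both endpoints in $F_j$ (obtaining at least $\tfrac13\cdot 2^{d-1}$ such edges), and then run a pigeonhole/averaging argument over the cube. The only cosmetic difference is that the paper phrases the final step in terms of marked edges and the regularity of the hypercube, whereas you double-count vertex--coordinate incidences directly; these are the same argument.
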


	\begin{proof}
		Let $\vec{b} \in \{-1,1\}^d$ and $j \in [d]$. Recall from \autoref{eq:marked} that vertex $\vec{b}$ of the Hamming cube $\{-1,1\}^d$ is marked by the $j$th coordinate if algorithm $\A$ approximates the $j$th coordinate of $\nabla f_{d,c,\varepsilon,\vec{b}}(\vec{0})$ sufficiently well. We say that edge $(\vec{b}_1,\vec{b}_2) \in (\{-1,1\}^d)^2$ is marked if for some $j \in [d]$, we have $\vec{b}_1 - \vec{b}_2 = \pm2\vec{e}_j$ and both $\vec{b}_1$ and $\vec{b}_2$ are marked by the $j$th coordinate.
		
		Let $J$ and $F_j$ be as in \autoref{lem:vertex_approximation}. Then for every $j \in J$, there are at least $2/3 \cdot 2^d$ vertices $\vec{b} \in \{-1,1\}^d$ that are marked by the $j$th coordinate. Moreover, there are a total of $2^{d-1}$ edges pointing in the $j$th direction in the $d$-dimensional Hamming cube, and the sets of  endpoints of these edges partition the set $\{-1,1\}^d$ into $2^{d-1}$ disjoint subsets of size $2$. As there are at least $2/3 \cdot 2^d = \frac43 \cdot 2^{d-1}$ marked vertices, we find, by the pigeonhole principle, that at least $\frac13 \cdot 2^{d-1}$ of these subsets satisfy the property that both vertices are marked. Thus, at least $\frac13 \cdot 2^{d-1}$ of the edges that point in the $j$th direction are marked.
		
		The above argument holds for all $j \in J$, and as $|J| \geq \frac34 \cdot d$, the total number of marked edges in the $d$-dimensional Hamming cube is at least $\frac34 \cdot d \cdot \frac13 \cdot 2^{d-1} = \frac14 \cdot d \cdot 2^{d-1}$. Moreover, there are a total of $d2^{d-1}$ edges in the $d$-dimensional Hamming cube, hence at least a quarter of them are marked. But this implies, again by the pigeonhole principle and because all vertices have equal degree, that there must be a vertex that has at least $d/4$ adjacent edges that are marked. Call this vertex $\vec{b}^*$, and call the directions in which the adjacent marked edges are pointing $U$. The result follows.
	\end{proof}

	Now, we just have to apply the hybrid method with the vertex $\vec{b}^*$ that we found in the previous lemma corresponding to the central instance, and with the neighboring vertices in the directions specified by $U$ corresponding to the peripheral instances.
	
	\begin{theorem}{Lower bound proof}
		\label{thm:lower_bound_proof}
		Let $d \in \N$, $c > 0$, $\varepsilon \in (0,c/146)$ and $G \subseteq \R^d$. Suppose $\A$ is an $\varepsilon$-precise $\ell^1$-approximate quantum gradient estimation algorithm for $\G_{d,c,0,\R^d}$ on $G$ with success probability lower bounded by $17/18$. Then, on every input $f \in \G_{d,c,0,\R^d}$, the resulting query complexity to (controlled) phase oracles $O_{f,G}$, denoted by $T_{\A}(f)$, satisfies
		\[\max_{f \in \G_{d,c,0,\R^d}} T_{\A}(f) \geq \frac{cd^{\frac32}}{876\varepsilon}.\]
	\end{theorem}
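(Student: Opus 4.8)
The plan is to apply the hybrid method with the \emph{central instance} $f_0 := f_{d,c,\varepsilon,\vec{b}^*}$ and the \emph{peripheral instances} $f_j := f_{d,c,\varepsilon,\vec{b}^{\pm j}}$ for $j \in U$, where $\vec{b}^*$ and $U$ are the objects produced by \autoref{lem:edge_separation}. All of these functions lie in $\G_{d,c,0,\R^d}$ by \autoref{lem:test_functions_smoothness}, so they are legitimate inputs for $\A$, and hence $\max_{f \in \G_{d,c,0,\R^d}} T_{\A}(f) \geq T_{\A}(f_0)$. The conclusion of \autoref{lem:edge_separation} together with \autoref{lem:gradient_test_functions} (as already recorded in \autoref{eq:marked_intervals}) tells us that, for each $j \in U$, the output of $\A$ in coordinate $j$ lands with probability at least $2/3$ in the interval $[\varepsilon/d, 145\varepsilon/d]$ when run on $f_0$ (up to the global sign $b^*_j$) and in the disjoint interval $[-145\varepsilon/d, -\varepsilon/d]$ when run on $f_j$. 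Consequently the event ``coordinate $j$ of the output lies in the first of these intervals'' has probability $\geq 2/3$ under $f_0$ and $\leq 1/3$ under $f_j$, so the total variation distance between the two output distributions is at least $1/3$; since a measurement cannot increase distinguishability, the final states $\ket{\Psi^{f_0}}$ and $\ket{\Psi^{f_j}}$ of $\A$ on these two oracles satisfy $\norm{\ket{\Psi^{f_0}} - \ket{\Psi^{f_j}}} \geq 1/3$.

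Next I would run the standard BBBV hybrid (telescoping) argument. Writing $\ket{\psi_t}$ for the state of $\A$ just before its $(t+1)$-th query when run on $f_0$, with $t = 0, \dots, T-1$ and $T = T_{\A}(f_0)$, the hybrid inequality gives $\norm{\ket{\Psi^{f_0}} - \ket{\Psi^{f_j}}} \leq \sum_{t=0}^{T-1} \norm{(O_{f_0,G} - O_{f_j,G})\ket{\psi_t}}$. The crucial point is that $f_0$ and $f_j$ differ only in the sign of the $j$-th summand, so $f_0(\vec{x}) - f_j(\vec{x}) = \frac{146\varepsilon b^*_j}{cd}\sin(cx_j)\prod_{l \neq j}\cos(cx_l)$; since $O_{f_0,G} - O_{f_j,G}$ acts diagonally on the grid with entries $e^{if_0(\vec{x})} - e^{if_j(\vec{x})}$ of modulus at most $|f_0(\vec{x}) - f_j(\vec{x})|$, expanding $\ket{\psi_t}$ in the grid basis yields
\[\norm{(O_{f_0,G} - O_{f_j,G})\ket{\psi_t}}^2 \leq \left(\frac{146\varepsilon}{cd}\right)^2 \sum_{\vec{x} \in G} p_{\vec{x},t}\, \sin^2(cx_j) \prod_{l \neq j}\cos^2(cx_l),\]
where $p_{\vec{x},t}$ is the probability weight that $\ket{\psi_t}$ places on grid point $\vec{x}$ (these sum to at most $1$). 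Summing over $j \in U$ and using the elementary identity $\sum_{j=1}^d \sin^2(cx_j)\prod_{l \neq j}\cos^2(cx_l) \leq \prod_{l=1}^d \left(\sin^2(cx_l) + \cos^2(cx_l)\right) = 1$ (one simply drops the nonnegative higher-order terms in the expansion of the product), we obtain $\sum_{j \in U}\norm{(O_{f_0,G} - O_{f_j,G})\ket{\psi_t}}^2 \leq (146\varepsilon/(cd))^2$, and Cauchy--Schwarz upgrades this to $\sum_{j \in U}\norm{(O_{f_0,G} - O_{f_j,G})\ket{\psi_t}} \leq \sqrt{|U|}\cdot \frac{146\varepsilon}{cd}$.

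Assembling the pieces, $\frac{|U|}{3} \leq \sum_{j \in U}\norm{\ket{\Psi^{f_0}} - \ket{\Psi^{f_j}}} \leq \sum_{t=0}^{T-1}\sum_{j \in U}\norm{(O_{f_0,G} - O_{f_j,G})\ket{\psi_t}} \leq T\sqrt{|U|}\cdot \frac{146\varepsilon}{cd}$, which rearranges to $T \geq \frac{cd\sqrt{|U|}}{438\varepsilon} \geq \frac{cd\sqrt{d/4}}{438\varepsilon} = \frac{cd^{3/2}}{876\varepsilon}$ using $|U| \geq d/4$ from \autoref{lem:edge_separation}; since $T = T_{\A}(f_0) \leq \max_f T_{\A}(f)$, the claimed bound follows. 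The main obstacle is the interchange-of-summation step together with the trigonometric identity that keeps the total ``query weight'' spread over the $|U| \geq d/4$ peripheral directions bounded independently of $d$: this is exactly the mechanism that upgrades the $\ell^\infty$-style bound $\Omega(cd/\varepsilon)$ to $\Omega(cd^{3/2}/\varepsilon)$, and it is where the specific product-of-cosines shape of the test functions in \autoref{def:test_functions} is essential. A secondary, routine point is that $T_{\A}(f_0)$ is a priori a random variable; this is handled in the usual way by first reducing to an algorithm making a fixed number of queries (truncating at the worst-case query complexity and absorbing the negligible truncation error), after which the hybrid argument applies verbatim.
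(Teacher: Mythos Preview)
Your proof is correct and follows essentially the same route as the paper: both use \autoref{lem:edge_separation} to select the central instance $f_{d,c,\varepsilon,\vec{b}^*}$ and the $|U|\geq d/4$ peripheral instances, establish disjointness of the output events via \autoref{lem:gradient_test_functions}, and then bound $\sum_{j\in U}\norm{(O_{f_0,G}-O_{f_j,G})\ket{\psi}}^2 \leq (146\varepsilon/(cd))^2$ by the same trigonometric identity (dominating $\sum_j \sin^2(cx_j)\prod_{l\neq j}\cos^2(cx_l)$ by the full expansion of $\prod_l(\sin^2+\cos^2)=1$). The only cosmetic difference is that the paper invokes the hybrid method as the black-box \autoref{thm:hybrid_method} and reads off $T_\A \geq \sqrt{|U|/(9\cdot(146\varepsilon/(cd))^2)} = cd^{3/2}/(876\varepsilon)$, whereas you unpack the same bound inline via the BBBV telescoping inequality plus Cauchy--Schwarz; the resulting constant is identical.
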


	\begin{proof}
		From \autoref{lem:edge_separation} we know that there exists a $\vec{b}^* \in \{-1,1\}^d$ and a $U \subseteq [d]$ of size $|U| \geq d/4$, such that for all $j \in U$,
		\begin{equation}
			\label{eq:disjoint_events}
			\P\left[\left|\A(f_{d,c,\varepsilon,\vec{b}^*})_j - \nabla f_{d,c,\varepsilon,\vec{b}^*}(\vec{0})_j\right| \leq \frac{72\varepsilon}{d}\right] \geq \frac23 \qquad \text{and} \qquad \P\left[\left|\A(f_{d,c,\varepsilon,\vec{b}^{(j)}})_j - \nabla f_{d,c,\varepsilon,\vec{b}^{(j)}}(\vec{0})_j\right| \leq \frac{72\varepsilon}{d}\right] \geq \frac23,
		\end{equation}
		where $\vec{b}^{(j)} \in \{-1,1\}^d$ differs from $\vec{b}^*$ only in the $j$th entry. From \autoref{lem:gradient_test_functions}, we obtain that
		\[\nabla f_{d,c,\varepsilon,\vec{b}^*}(\vec{0})_j = \frac{73\varepsilon b_j^*}{d} \qquad \text{and} \qquad \nabla f_{d,c,\varepsilon,\vec{b}^*}(\vec{0})_j = \frac{73\varepsilon b^{(j)}_j}{d}.\]
		As $b^*_j$ and $b^{(j)}_j$ are different by construction, $|b^*_j - b^{(j)}_j| = 2$, and hence the two events in \autoref{eq:disjoint_events} are disjoint. Thus, we can employ the hybrid method, as described in \autoref{thm:hybrid_method} to obtain
		\begin{align}
			\max_{f \in \G_{d,c,0,\R^d}} T_{\A}(f) &\geq \max\left(T_{\A}(f_{d,c,\varepsilon,\vec{b}^*}), \max_{j \in U} T_{\A}(f_{d,c,\varepsilon,\vec{b}^{(j)}})\right) \nonumber \\
			&\geq \sqrt{\frac{|U|}{\displaystyle 9 \max_{\underset{\norm{\ket{\psi}} = 1}{\ket{\psi} \in \C^{2^n}}}\sum_{j \in U} \norm{\left(O_{f_{d,c,\varepsilon,\vec{b}^*},G} - O_{f_{d,c,\varepsilon,\vec{b}^{(j)}},G}\right)\ket{\psi}}^2}}.
			\label{eq:hybrid_bound}
		\end{align}
		To bound the summation that appears in the denominator, take $\ket{\psi} \in \C^{2^n}$ such that $\norm{\ket{\psi}} = 1$ arbitrarily. Then,
		\begin{align*}
			\sum_{j \in U} \norm{\left(O_{f_{d,c,\varepsilon,\vec{b}^*},G} - O_{f_{d,c,\varepsilon,\vec{b}^{(j)}},G}\right)\ket{\psi}}^2 &= \sum_{j \in U} \norm{\left(\sum_{\vec{x} \in G} \ket{\vec{x}}\bra{\vec{x}}\right) \left(O_{f_{d,\varepsilon,\vec{b}^*},G} - O_{f_{d,\varepsilon,\vec{b}^{(j)}},G}\right) \left(\sum_{\vec{x'} \in G} \ket{\vec{x'}}\bra{\vec{x'}}\right) \ket{\psi}}^2 \\
			&= \sum_{j \in U} \sum_{\vec{x} \in G} \left|\sum_{\vec{x'} \in G} \bra{\vec{x}} \left(O_{f_{d,\varepsilon,\vec{b}^*},G} - O_{f_{d,\varepsilon,\vec{b}^{(j)}},G}\right) \ket{\vec{x'}}\braket{\vec{x'}}{\psi}\right|^2 \\
		\end{align*}
		As the phase oracles are diagonal operators, the innermost summation in the above expression is only non-zero if $\vec{x} = \vec{x'}$. Thus,
		\begin{align*}
			\sum_{j \in U} \norm{\left(O_{f_{d,c,\varepsilon,\vec{b}^*},G} - O_{f_{d,c,\varepsilon,\vec{b}^{(j)}},G}\right)\ket{\psi}}^2 &= \sum_{j \in U} \sum_{\vec{x} \in G} \left|\bra{\vec{x}} \left(O_{f_{d,\varepsilon,\vec{b}^*},G} - O_{f_{d,\varepsilon,\vec{b}^{(j)}},G}\right) \ket{\vec{x}}\right|^2 \cdot \left|\braket{\vec{x}}{\psi}\right|^2 \\
			&= \sum_{\vec{x} \in G} |\braket{\vec{x}}{\psi}|^2 \cdot \sum_{j \in U} \left|\bra{\vec{x}}\left(O_{f_{d,c,\varepsilon,\vec{b}^*},G} - O_{f_{d,c,\varepsilon,\vec{b}^{(j)}},G}\right)\ket{\vec{x}}\right|^2 \\
			&\leq \sum_{\vec{x} \in G} |\braket{\vec{x}}{\psi}|^2 \cdot \max_{\vec{x} \in G} \sum_{j \in U} \left|\bra{\vec{x}} \left(O_{f_{d,c,\varepsilon,\vec{b}^*},G} - O_{f_{d,c,\varepsilon,\vec{b}^{(j)}},G}\right) \ket{\vec{x}}\right|^2.
		\end{align*}
		The first factor is simply the norm of $\ket{\psi}$, which is $1$. Furthermore, from \autoref{def:phase_oracle} what the action of the phase oracles on the state $\vec{x}$ is. Hence, we obtain
		\begin{align*}
			\sum_{j \in U} \norm{\left(O_{f_{d,c,\varepsilon,\vec{b}^*},G} - O_{f_{d,c,\varepsilon,\vec{b}^{(j)}},G}\right)\ket{\psi}}^2 &\leq
			\max_{\vec{x} \in G} \sum_{j \in U} \left|e^{if_{d,c,\varepsilon,\vec{b}^*}(\vec{x})} - e^{if_{d,c,\varepsilon,\vec{b}^{(j)}}(\vec{x})}\right|^2 \\
			&\leq \max_{\vec{x} \in G} \sum_{j \in U} \left|f_{d,c,\varepsilon,\vec{b}^*}(\vec{x}) - f_{d,c,\varepsilon,\vec{b}^{(j)}}(\vec{x})\right|^2,
		\end{align*}
		where we used $|e^{ix} - e^{iy}| = |e^{i\frac{x+y}{2}}| \cdot |e^{i\frac{x-y}{2}} - e^{-i\frac{x-y}{2}}| = 2\sin\left|\frac{x-y}{2}\right| \leq |x-y|$ for all $x,y \in \R$ in the last line. By filling in the definition of the test functions, see \autoref{def:test_functions}, we obtain
		\begin{align*}
			\sum_{j \in U} \norm{\left(O_{f_{d,c,\varepsilon,\vec{b}^*},G} - O_{f_{d,c,\varepsilon,\vec{b}^{(j)}},G}\right)\ket{\psi}}^2 &\leq \max_{\vec{x} \in G} \sum_{j \in U} \left(\frac{73\varepsilon |b^*_j - b^{(j)}_j|}{cd}\right)^2 \cdot \sin^2(cx_j) \cdot \prod_{\underset{k \neq j}{k=1}} \cos^2(cx_k) \\
			&= \left(\frac{146\varepsilon}{cd}\right)^2 \cdot \max_{\vec{x} \in G} \sum_{j \in U} \sin^2(cx_j) \cdot \prod_{\underset{k\neq j}{k=1}}^d \cos^2(cx_k),
		\end{align*}
		where we used that $|b^*_j - b^{(j)}_j| = 2$. Adding non-negative terms to the right-hand side and relaxing our constraint on the choice of $\vec{x}$ from $G$ to $\R^d$, we obtain
		\begin{align*}
			\sum_{j \in U} \norm{\left(O_{f_{d,c,\varepsilon,\vec{b}^*},G} - O_{f_{d,c,\varepsilon,\vec{b}^{(j)}},G}\right)\ket{\psi}}^2 &\leq \left(\frac{146\varepsilon}{cd}\right)^2 \cdot \sup_{\vec{x} \in \R^d} \sum_{A \subseteq [d]} \prod_{j\in A} \sin^2(cx_j) \cdot \prod_{j \in [d] \setminus A} \cos^2(cx_j) \\
			&= \left(\frac{146\varepsilon}{cd}\right)^2 \cdot \sup_{\vec{x} \in \R^d} \prod_{j=1}^d \left(\sin^2(cx_j) + \cos^2(cx_j)\right) = \left(\frac{146\varepsilon}{cd}\right)^2.
		\end{align*}
		Since $|U| \geq d/4$, we find by plugging the above into \autoref{eq:hybrid_bound}:
		\[\max_{f \in \G_{d,c,0,\R^d}} T_{\A}(f) \geq \sqrt{\frac{d}{36}} \cdot \frac{cd}{146\varepsilon} = \frac{cd^{\frac32}}{876\varepsilon}.\]
		This completes the proof.
	\end{proof}

	\subsection{Lower bound for more general cases}
	\label{subsec:general_lower_bound_proof}
	
	In the previous section, we have proven a lower bound on the query complexity of the quantum gradient estimation problem for $p = 1$, $\sigma = 0$ and $P = 17/18$. The aim of the following theorem is to reduce the cases where $p \in [1,\infty]$, $\sigma \geq 0$ and $P \in (\frac12,1]$ to this single case, so that we can prove similar lower bounds for these cases as well.
	
	\begin{theorem}{Lower bound on gradient computation}
		\label{thm:general_lower_bound_proof}
		Let $d \in \N$, $c > 0$, $\varepsilon \in (0,c/(292d^{1-1/p}))$, $p \in [1,\infty]$, $\sigma \geq 0$, $P \in (\frac12,1]$ and $G \subseteq \R^d$. Suppose $\A$ is an $\varepsilon$-precise $\ell^p$-approximate quantum gradient estimation algorithm for $\G_{d,c,\sigma,\R^d}$ on $G$, with success probability lower bounded by $P$. Then
		\[\max_{f \in \G_{d,c,\sigma,\R^d}} T_{\A}(f) \geq \frac{cd^{\frac12 + \frac1p}}{1752N\varepsilon} \qquad \text{with} \qquad N = \left\lceil \frac{18(1-P)}{(P-\frac12)^2} \right\rceil.\]
	\end{theorem}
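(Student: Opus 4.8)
The plan is to reduce the general case $(p,\sigma,P)$ to the base case $p=1$, $\sigma=0$, $P=17/18$ handled by \autoref{thm:lower_bound_proof}, via two independent reductions composed together. The first reduction handles the success probability: starting from an $\varepsilon$-precise $\ell^p$-approximate algorithm $\A$ with success probability $P \in (\frac12,1]$, I would build a boosted algorithm $\B$ with success probability at least $17/18$ using the ``balls trick'' sketched in \autoref{subsec:median_trick_substitution}. Concretely, run $\A$ a total of $N = \lceil 18(1-P)/(P-\tfrac12)^2 \rceil$ times to get samples $\vec{g}^{(1)},\dots,\vec{g}^{(N)} \in \R^d$; draw the $\ell^p$-ball of radius $\varepsilon$ around each; and output any point lying in at least half of these balls (returning anything if no such point exists). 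By Hoeffding's inequality (\autoref{thm:statistics}), with probability at least $1-e^{-2N(P-1/2)^2}$ more than half the samples are within $\ell^p$-distance $\varepsilon$ of $\nabla f(\vec{0})$, so their balls have the true gradient in common; hence the returned point is within $2\varepsilon$ of $\nabla f(\vec{0})$ in $\ell^p$-norm. One checks $e^{-2N(P-1/2)^2} \le 1/18$ for the chosen $N$, giving $\B$ a $2\varepsilon$-precise $\ell^p$-approximate gradient estimator with success probability $\ge 17/18$ and query complexity $N \cdot T_{\A}$.

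The second reduction handles $p$ and $\sigma$. Since $\sigma \ge 0$, \autoref{thm:gevrey_functions_properties}(2) gives $\G_{d,c,0,\R^d} \subseteq \G_{d,c,\sigma,\R^d}$, so any algorithm that works for the larger class in particular works for all functions satisfying the $\sigma=0$ Gevrey condition — restricting the input class only makes the lower bound problem easier for the adversary, so it suffices to lower bound the query complexity on $\G_{d,c,0,\R^d}$. For the norm, I would use the standard inequality $\norm{\vec{x}}_1 \le d^{1-1/p}\norm{\vec{x}}_p$ for all $\vec{x} \in \R^d$ (which follows from power-mean / Hölder): if $\B$ is a $2\varepsilon$-precise $\ell^p$-approximate estimator, then it is automatically a $2d^{1-1/p}\varepsilon$-precise $\ell^1$-approximate estimator, since $\norm{\B(f)-\nabla f(\vec{0})}_1 \le d^{1-1/p}\norm{\B(f)-\nabla f(\vec{0})}_p \le 2d^{1-1/p}\varepsilon$ whenever the $\ell^p$-event succeeds. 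Composing the two reductions, from $\A$ I obtain an $\varepsilon''$-precise $\ell^1$-approximate estimator for $\G_{d,c,0,\R^d}$ with success probability $\ge 17/18$ and query complexity $N\cdot T_{\A}$, where $\varepsilon'' = 2d^{1-1/p}\varepsilon$. The hypothesis $\varepsilon < c/(292 d^{1-1/p})$ is exactly what guarantees $\varepsilon'' < c/146$, so \autoref{thm:lower_bound_proof} applies.

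Finally, applying \autoref{thm:lower_bound_proof} to this composed estimator yields
\[
N \cdot \max_{f \in \G_{d,c,\sigma,\R^d}} T_{\A}(f) \;\ge\; \max_{f \in \G_{d,c,0,\R^d}} T_{\B'}(f) \;\ge\; \frac{c d^{3/2}}{876\,\varepsilon''} \;=\; \frac{c d^{3/2}}{876 \cdot 2 d^{1-1/p}\varepsilon} \;=\; \frac{c d^{1/2 + 1/p}}{1752\,\varepsilon},
\]
and dividing by $N$ gives the claimed bound. I do not expect a serious obstacle here; the only point requiring care is verifying correctness of the boosting construction $\B$ — specifically that the common intersection of a majority of the $\varepsilon$-balls, when the majority are genuinely successful, is nonempty, contains $\nabla f(\vec{0})$, and has $\ell^p$-diameter at most $2\varepsilon$ (the last by the triangle inequality for $\norm{\cdot}_p$). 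One must also confirm $\B$ is still a legitimate quantum algorithm in the sense of \autoref{def:quantum_gradient_estimation_algorithms}, i.e. that the ball-intersection selection is a valid classical postprocessing step, which it is since it only manipulates the classical outputs of the $N$ runs of $\A$.
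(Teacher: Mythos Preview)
Your overall strategy is exactly the paper's: boost the success probability of $\A$ via the ``balls trick'' to get an algorithm $\B$ with success probability $\geq 17/18$, pass from $\ell^p$ to $\ell^1$ via $\norm{\vec{x}}_1 \le d^{1-1/p}\norm{\vec{x}}_p$, restrict to $\G_{d,c,0,\R^d} \subseteq \G_{d,c,\sigma,\R^d}$, and invoke \autoref{thm:lower_bound_proof}. All of that is fine and matches the paper essentially verbatim.

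There is, however, one concrete gap. You claim that with $N = \lceil 18(1-P)/(P-\tfrac12)^2 \rceil$ one has $e^{-2N(P-1/2)^2} \le 1/18$. This fails when $P$ is close to $1$: for instance, at $P = 0.99$ you get $N = 1$ and $e^{-2\cdot 1 \cdot (0.49)^2} \approx 0.62 \gg 1/18$. The problem is that Hoeffding's inequality ignores the variance, whereas the specific form of $N$ --- with the factor $(1-P)$ in the numerator --- is calibrated to exploit that $\Var(B_j) = P^*(1-P^*)$ is small when $P$ is near $1$. The paper therefore uses Chebyshev's inequality instead:
\[
\P\left[\sum_{j=1}^N B_j \le \frac{N}{2}\right] \le \frac{NP^*(1-P^*)}{N^2(P^*-\tfrac12)^2} \le \frac{1}{N}\cdot\frac{1-P}{(P-\tfrac12)^2} \le \frac{1}{18},
\]
which is exactly what the chosen $N$ is designed to make work. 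If you swap your Hoeffding step for this Chebyshev step (or, equivalently, use a variance-aware bound like Bernstein), the rest of your argument goes through unchanged and coincides with the paper's proof.
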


	\begin{proof}
		We will construct a new algorithm $\B$ as follows:
		\begin{enumerate}
			\setlength\itemsep{-.6em}
			\item Do $N$ independent runs of $\A$, and call the resulting vectors $\vec{g}_1, \dots, \vec{g}_N \in \R^d$.
			\item Search for a vector $\vec{g} \in \R^d$ such that there exists a $J \subseteq [N]$ of size $|J| > N/2$ and for all $j \in J$:
			\[\norm{\vec{g} - \vec{g}_j}_p \leq \varepsilon.\]
			If such a vector $\vec{g}$ exists, return $\vec{g}$. Otherwise, return $\vec{0}$.
		\end{enumerate}
		Now, we prove that $\B$ is a $2\varepsilon d^{1-1/p}$-precise $\ell^1$-approximate quantum gradient estimation algorithm for $\G_{d,c,\sigma,\R^d}$ on $G$ with success probability lower bounded by $17/18$. To that end, observe that every run of $\A$ with probability at least $P$ yields a vector $\vec{g}_j$ such that
		\[\norm{\vec{g}_j - \nabla f(\vec{0})}_p \leq \varepsilon.\]
		Let $B_j$ be the Bernoulli random variable that equals $1$ if and only if $\vec{g}_j$ satisfies the above property. We define
		\[P^* = \P\left[B_j = 1\right] \geq P > \frac12.\]
		Observe that
		\[\E\left[\sum_{j=1}^N B_j\right] = NP^* \qquad \text{and} \qquad \Var\left(\sum_{j=1}^N B_j\right) = NP^*(1-P^*).\]
		We find, using Chebyshev's inequality, as described in \autoref{thm:statistics}:
		\begin{align*}
			\P\left[\sum_{j=1}^N B_j \leq \frac{N}{2}\right] &= \P\left[\sum_{j=1}^N B_j - NP^* \leq N\left(\frac12 - P^*\right)\right] \leq \P\left[\left|\sum_{j=1}^N B_j - NP^*\right| \geq N\left(P^* -  \frac12\right)\right] \\
			&\leq \frac{\Var\left(\sum_{j=1}^N B_j\right)}{N^2\left(P^* - \frac12\right)^2} = \frac{NP^*(1-P^*)}{N^2(P^*-\frac12)^2} \leq \frac1N \cdot \frac{1-P}{(P-\frac12)^2} \leq \frac{(P-\frac12)^2}{18(1-P)} \cdot \frac{1-P}{(P-\frac12)^2} = \frac{1}{18}.
		\end{align*}
		Hence, with probability at least $17/18$, there exists a set $K \subseteq [N]$ such that $|K| > N/2$ and for all $j \in K$:
		\[\norm{\vec{g}_j - \nabla f(\vec{0})}_p \leq \varepsilon.\]
		Hence, with probability at least $17/18$, we can find at least one vector $\vec{g} \in \R^d$ such that there exists a set $J \subseteq [N]$ such that $|J| > N/2$ and for all $j \in J$ we have $\norm{\vec{g} - \vec{g}_j}_p \leq \varepsilon$, simply because $\nabla f(\vec{0})$ is such a vector. But as $|J| > N/2$ and $|K| > N/2$, we find by the pigeonhole principle that there must exist a $j \in J \cap K$. Hence,
		\[\norm{\nabla f(\vec{0}) - \vec{g}}_p \leq \norm{\nabla f(\vec{0}) - \vec{g}_j}_p + \norm{\vec{g}_j - \vec{g}}_p \leq \varepsilon + \varepsilon = 2\varepsilon.\]
		Thus, with probability at least $17/18$, the resulting vector $\vec{g}$ satisfies
		\[\norm{\nabla f(\vec{0}) - \vec{g}}_1 \leq d^{1-\frac1p} \cdot \norm{\nabla f(\vec{0}) - \vec{g}}_p \leq 2\varepsilon d^{1-\frac1p},\]
		so $\B$ is indeed a $2\varepsilon d^{1-1/p}$-precise $\ell^1$-approximate quantum gradient estimation algorithm for $\G_{d,c,0,\R^d} \subseteq \G_{d,c,\sigma,\R^d}$ on $G$ with success probability lower bounded by $17/18$. Moreover, by our choice of $\varepsilon$,
		\[2\varepsilon d^{1-\frac1p} < \frac{2d^{1-\frac1p}c}{292d^{1-\frac1p}} = \frac{c}{146}.\]
		Thus we can employ \autoref{thm:lower_bound_proof} to find
		\[\max_{f \in \G_{d,c,\sigma,\R^d}} T_{\B}(f) \geq \max_{f \in \G_{d,c,0,\R^d}} T_{\B}(f) \geq \frac{cd^{\frac32}}{876 \cdot 2\varepsilon d^{1-1/p}} = \frac{cd^{\frac12 + \frac1p}}{1752\varepsilon}.\]
		But by analyzing the construction of $\B$, we also find, for all $f \in \G_{d,c,\sigma,\R^d}$
		\[T_{\B}(f) = NT_{\A}(f).\]
		Hence,
		\[\max_{f \in \G_{d,c,\sigma,\R^d}} T_{\A}(f) \geq \frac{cd^{\frac12 + \frac1p}}{1752N\varepsilon}.\]
		This completes the proof.
	\end{proof}

	With this, we have reached the end of the lower bound proofs that are presented in this paper.
	
	\section{Conclusion and outlook}
	\label{sec:conclusion}
	
	In this paper, we have reached two main new results. First, we have generalized Gily\'en et al.'s quantum gradient estimation algorithm, so that it also works on Gevrey classes with parameter $\sigma \in (\frac12,1]$. Secondly, we have proved a lower bound on the query complexity of the quantum gradient estimation problem for $\sigma \in [0,\frac12]$ and $p \in [1,\infty]$.
	
	What seems most interesting is how the query complexity of the quantum gradient estimation problem scales with $d$ when we set some parameters $\sigma \in \R$ and $p \in [1,\infty]$. For the extremal values of $p$, we have drawn the currently best-known bounds on this query complexity in \autoref{fig:overview}. Note that simple coordinate-wise methods give linear in $d$ dependence when $p = \infty$ and quadratic in $d$ dependence when $p = 1$.
	
	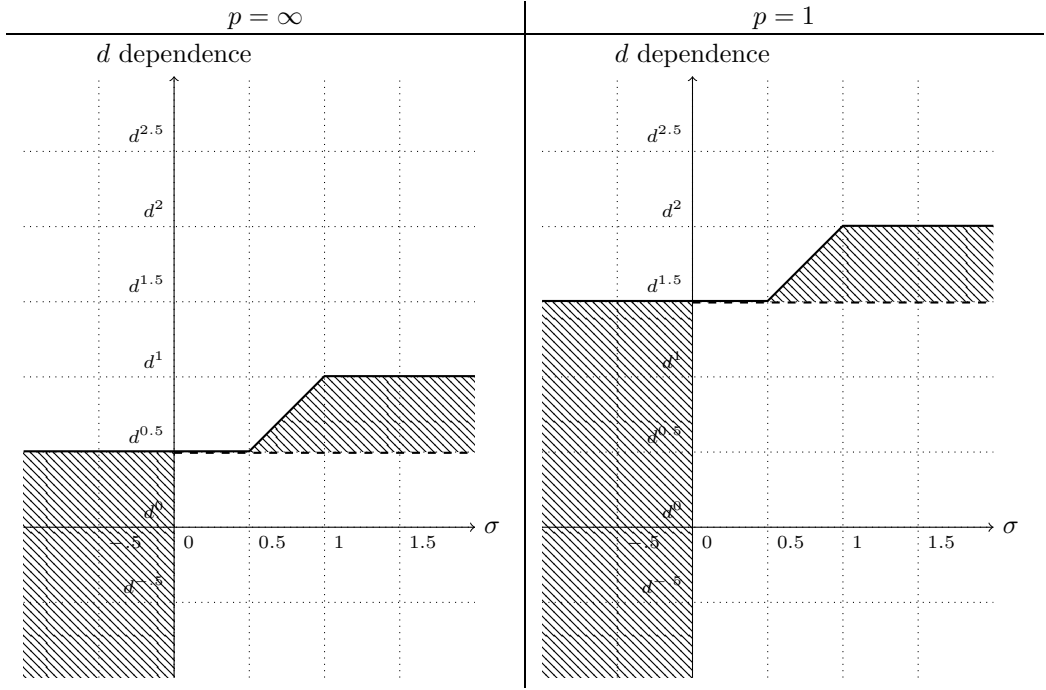
\begin{figure}[h!]
		\centering
		\begin{tabular}{c|c}
			$p = \infty$ & $p = 1$ \\\hline
			\begin{tikzpicture}[scale=2]
				\draw[->] (-1,0) -- (2,0) node[right] {$\sigma$};
				\draw[->] (0,-1) -- (0,3) node[above] {$d$ dependence};
				\fill[pattern=north west lines] (.5,.5) -- (1,1) -- (2,1) -- (2,.5) -- cycle;
				\fill[pattern=north west lines] (-1,.5) -- (0,.5) -- (0,-1) -- (-1,-1) -- cycle;
				\draw[thick,shift={(0,.005)}] (-1,.5) -- (.5,.5) -- (1,1) -- (2,1);
				\draw[thick,dashed,shift={(0,-.005)}] (0,.5) -- (2,.5);
				\foreach \x in {-.5,0,...,1.5}
				{
					\draw[dotted] (\x,-1) -- (\x,3);
					\node[below right] at (\x,0) {\scriptsize $\x$};
				}
				\foreach \y in {-.5,0,...,2.5}
				{
					\draw[dotted] (-1,\y) -- (2,\y);
					\node[above left] at (0,\y) {\scriptsize $d^{\y}$};
				}
			\end{tikzpicture} & 
			\begin{tikzpicture}[scale=2]
				\draw[->] (-1,0) -- (2,0) node[right] {$\sigma$};
				\draw[->] (0,-1) -- (0,3) node[above] {$d$ dependence};
				\fill[pattern=north west lines] (.5,1.5) -- (1,2) -- (2,2) -- (2,1.5) -- cycle;
				\fill[pattern=north west lines] (-1,1.5) -- (0,1.5) -- (0,-1) -- (-1,-1) -- cycle;
				\draw[thick,shift={(0,.005)}] (-1,1.5) -- (.5,1.5) -- (1,2) -- (2,2);
				\draw[thick,dashed,shift={(0,-.005)}] (0,1.5) -- (2,1.5);
				\foreach \x in {-.5,0,...,1.5}
				{
					\draw[dotted] (\x,-1) -- (\x,3);
					\node[below right] at (\x,0) {\scriptsize $\x$};
				}
				\foreach \y in {-.5,0,...,2.5}
				{
					\draw[dotted] (-1,\y) -- (2,\y);
					\node[above left] at (0,\y) {\scriptsize $d^{\y}$};
				}
			\end{tikzpicture}
		\end{tabular}
		\caption{Currently best-known bounds on the query complexity of the $\ell^p$-approximate quantum gradient estimation problem for Gevrey classes with parameter $\sigma \in \R$ for $p = \infty$ (left) and $p = 1$ (right). The solid line is the query complexity of the current best-known algorithm, and the dashed line is the current best-known lower bound on the query complexity. In this figure, we neglect multiplicative logarithmic growth factors.}
		\label{fig:overview}
	\end{figure}

	The hatched regions in the above figure represent the optimality gaps. Whenever the query complexity of the best-known quantum gradient estimation algorithm and the currently best-known lower bound do not match, the gap between them is hatched. While in the interval $[0,\frac12]$ optimality has been reached, there are quite some gaps remaining.
	
	Especially the gap at $\sigma = 1$ would be tempting to close. This is because the Gevrey class with parameter $\sigma = 1$ can be interpreted as the class of analytic functions, which is a very natural class of functions to consider, and pops up naturally in applications such as reinforcement learning.
	
	In this paper, we investigated lifting the gradient estimation problem in which the objective function is treated as a black box to the quantum domain. One could also consider lifting the problem from different settings to the quantum domain, for instance, a setting in which one has access to an oracle that calculates first order partial derivatives, or a setting in which the oracle circuit can be differentiated using techniques that are the quantum equivalent of automatic differentiation. These would all be interesting topics of further research.
	
	\section*{Acknowledgements}
	
	The research written down in this paper was part of the author's master's thesis~\cite{Cornelissen18}, which was part of the Applied Mathematics program at Delft University of Technology, and was conducted in cooperation with CWI. The author would first of all like to thank Ronald de Wolf for daily supervision of this project and many motivating and insightful discussions. Secondly, the author would like to thank Martijn Caspers for the daily supervision from the Delft University of Technology's side. Furthermore, the author would like to thank Andr\'as Gily\'en for very interesting conversations about the subject. Finally, the author would also like to thank Maris Ozols for doing a thorough review of this text and providing very useful comments on how the result is best presented.
	
	\bibliographystyle{alphaurl}
	\bibliography{mybib}

\begin{thebibliography}{ROAG17}

\bibitem[BLM12]{Boucheron12}
St\'ephane Boucheron, G\'abor Lugosi, and Pascal Massart.
\newblock {\em Concentration inequalities. A non-asymptotic theory of
  independence}.
\newblock Clarendon Press -- Oxford, 2012.

\bibitem[Cor18]{Cornelissen18}
Arjan Cornelissen.
\newblock Quantum gradient estimation and its application to reinforcement
  learning.
\newblock Master's thesis, Delft university of technology, September 2018.
\newblock http://resolver.tudelft.nl/uuid:26fe945f-f02e-4ef7-bdcb-0a2369eb867e.

\bibitem[FG14]{Farhi14}
Edward Farhi and Jeffrey Goldstone.
\newblock A quantum approximate optimization algorithm.
\newblock November 2014.
\newblock arXiv:1411.4028 [quant-ph].

\bibitem[GAW19]{Gilyen18}
Andr\'as Gily\'en, Srinivasan Arunachalam, and Nathan Wiebe.
\newblock Optimizing quantum optimization algorithms via faster quantum
  gradient computation.
\newblock {\em Proceedings of the 30th ACM-SIAM Symposium on Discrete
  Algorithms (SODA 2019)}, pages 1425--1444, January 2019.
\newblock arXiv:1711.00465 [quant-ph].

\bibitem[Gev18]{Gevrey18}
Maurice Gevrey.
\newblock {Sur le nature des \'equations aux d\'eriv\'ees partielles. Premier
  m\'emoire.}
\newblock {\em Annales Scientifiques de l'\'Ecole Normale Sup\'erieure},
  35:129--190, 1918.
\newblock http://www.numdam.org/item/ASENS\_1918\_3\_35\_\_129\_0/.

\bibitem[GSLW19]{Gilyen18b}
Andr\'as Gily\'en, Yuan Su, Guang~Hao Low, and Nathan Wiebe.
\newblock Quantum singular value transformation and beyond: exponential
  improvements for quantum matrix arithmetics.
\newblock {\em Proceedings of the 51st Annual ACM Symposium on the Theory of
  Computing (STOC 2019)}, June 2019.
\newblock arXiv:1806.01838v1 [quant-ph].

\bibitem[Jor05]{Jordan05}
Stephen~P. Jordan.
\newblock Fast quantum algorithm for numerical gradient estimation.
\newblock {\em Physical Review Letters}, 95:050501, January 2005.
\newblock arXiv:quant-ph/0405146.

\bibitem[KMT17]{Kandala17}
Abhinav Kandala, Antonio Mezzacapo, and Kristan Temme.
\newblock Hardware-efficient variational quantum eigensolvers for small
  molecules and quantum magnets.
\newblock {\em Nature 549, 242 (2017)}, October 2017.
\newblock arXiv:1704.05018 [quant-ph].

\bibitem[NC00]{Nielsen00}
Michael~A. Nielsen and Isaac~L. Chuang.
\newblock {\em Quantum Computation and Quantum Information}.
\newblock Cambridge University Press, 2000.

\bibitem[ROAG17]{Romero17}
Jonathan Romero, Jonathan~P. Olson, and Alan Aspuru-Guzik.
\newblock Quantum autoencoders for efficient compression of quantum data.
\newblock {\em Quantum Sci. Technol}, 2:045001, 2017.
\newblock arXiv:1612.02806 [quant-ph].

\bibitem[Wei]{Weisstein}
Eric~W. Weisstein.
\newblock Stirling's approximation.
\newblock From MathWorld -- A Wolfram Web Resource.
  http://mathworld.wolfram.com/StirlingsApproximation.html.

\end{thebibliography}
	
	\appendix
	
	\section{Miscellaneous results}
	\label{sec:misc_results}
	
	This appendix lists some results that are mainly included for reference. We also point to where one can find the proofs. The theorems listed here include results from probability theory, statistics, approximation theory, and some statements that are considered \textit{folklore} within the community of quantum computing.
	
	\begin{theorem}{Union bound}
		\label{thm:union_bound}
		Let $n \in \N$, $(\Omega,\Sigma,\P)$ be a probability space and $A_1, \dots, A_n \in \Sigma$ events. Then,
		\[\P\left[\bigcap_{j=1}^n A_j\right] \geq 1 - \sum_{j=1}^n \P\left[\Omega \setminus A_j\right].\]
	\end{theorem}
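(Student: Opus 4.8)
The plan is to reduce the statement to finite subadditivity of $\P$ via complementation. First I would set $B_j = \Omega \setminus A_j$ for each $j \in [n]$, and invoke De Morgan's law to write
\[\Omega \setminus \bigcap_{j=1}^n A_j = \bigcup_{j=1}^n B_j.\]
Since $\P$ is a probability measure on $(\Omega,\Sigma,\P)$, we have $\P[\bigcap_{j=1}^n A_j] = 1 - \P[\bigcup_{j=1}^n B_j]$, so the claim is equivalent to the finite subadditivity bound $\P[\bigcup_{j=1}^n B_j] \leq \sum_{j=1}^n \P[B_j]$.

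Next I would establish finite subadditivity by induction on $n$. The base case $n = 1$ is trivial. For the inductive step, write $\bigcup_{j=1}^{n+1} B_j = \left(\bigcup_{j=1}^n B_j\right) \cup B_{n+1}$ and use the inclusion–exclusion identity for two sets, $\P[C \cup D] = \P[C] + \P[D] - \P[C \cap D] \leq \P[C] + \P[D]$, valid because $\P[C \cap D] \geq 0$; then apply the inductive hypothesis to $\P[\bigcup_{j=1}^n B_j]$. Alternatively one can disjointify, setting $C_1 = B_1$ and $C_k = B_k \setminus \bigcup_{i<k} B_i$, so that the $C_k$ are pairwise disjoint with $\bigcup_k C_k = \bigcup_k B_k$ and $C_k \subseteq B_k$; finite additivity then gives $\P[\bigcup_k B_k] = \sum_k \P[C_k] \leq \sum_k \P[B_k]$ by monotonicity. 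Either route is a routine measure-theoretic computation.

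Finally I would substitute back: $\P[\bigcap_{j=1}^n A_j] = 1 - \P[\bigcup_{j=1}^n B_j] \geq 1 - \sum_{j=1}^n \P[B_j] = 1 - \sum_{j=1}^n \P[\Omega \setminus A_j]$, which is exactly the asserted inequality. There is no genuine obstacle here; the statement is folklore and the only point requiring the slightest care is the inductive verification of finite subadditivity (or, equivalently, the disjointification argument), which is where the non-negativity and additivity axioms of $\P$ enter.
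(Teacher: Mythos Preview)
Your proposal is correct and follows the standard textbook route: complement, apply De Morgan, and reduce to finite subadditivity of a probability measure, which you then verify by induction (or disjointification). The paper does not actually give a proof of this statement---it simply remarks that the result is elementary and defers to any standard probability text---so there is nothing to compare against beyond noting that your argument is precisely the one such a text would contain.
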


	\begin{proof}
		The proof is very elementary and can be found in any standard text on probability theory.
	\end{proof}
	
	\begin{theorem}{Results from statistics}
		\label{thm:statistics}
		Let $X$ be a real-valued random variable. We have:
		\[\Var\left[X\right] = \E\left[X^2\right] - \E\left[X\right]^2.\]
		If $X$ is non-negative and $t > 0$, we have:
		\[\P[X \geq t] \leq \frac{\E[X]}{t}. \qquad \text{(Markov's inequality)}\]
		Moreover, for all $t > 0$:
		\[\P\left[\left|X - \E[X]\right| > t\right] \leq \frac{\Var(X)}{t^2}. \qquad \text{(Chebyshev's inequatliy)}\]
		Finally, for all $t \geq 0$ and independently random variables $X_1, \dots, X_n$, where for every $i \in [n]$, $X_i$ is contained in the interval $[a_i,b_i]$ almost surely:
		\[\P\left(X_1 + \cdots + X_n - \E\left[X_1\right] - \cdots - \E\left[X_n\right] \leq -t\right) \leq e^{-\frac{2t^2}{\sum_{i=1}^n (b_i-a_i)^2}}. \qquad \text{(Hoeffding's inequality)}\]
	\end{theorem}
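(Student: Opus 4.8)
The plan is to verify each of the four displayed claims in order. Three of them are one-line consequences of the definitions, so the only real content is Hoeffding's inequality, which I would obtain by the Cram\'er--Chernoff (exponential moment) method.

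For the variance identity, expand the definition $\Var[X] = \E\big[(X - \E[X])^2\big]$ and use linearity of expectation together with the fact that $\E[X]$ is a constant, obtaining $\E[X^2] - 2\E[X]\,\E[X] + \E[X]^2 = \E[X^2] - \E[X]^2$. For Markov's inequality, assume $X \geq 0$ and $t > 0$, and observe $\E[X] \geq \E[X\,\mathbf{1}_{\{X \geq t\}}] \geq t\,\P[X \geq t]$; dividing by $t$ gives the bound. Chebyshev's inequality then follows by applying Markov to the non-negative random variable $(X - \E[X])^2$ at threshold $t^2$, since $\P[|X - \E[X]| > t] = \P[(X - \E[X])^2 > t^2] \leq \E[(X - \E[X])^2]/t^2 = \Var(X)/t^2$.

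For Hoeffding's inequality, set $Y_i = X_i - \E[X_i]$ and $S = \sum_{i=1}^n Y_i$, so that $\E[Y_i] = 0$ and $Y_i$ lies almost surely in an interval of length $b_i - a_i$. For any $\lambda > 0$, Markov's inequality applied to $e^{-\lambda S}$ gives $\P[S \leq -t] = \P\big[e^{-\lambda S} \geq e^{\lambda t}\big] \leq e^{-\lambda t}\,\E\big[e^{-\lambda S}\big] = e^{-\lambda t}\prod_{i=1}^n \E\big[e^{-\lambda Y_i}\big]$, where the last step uses independence. The crux is Hoeffding's lemma: for a mean-zero random variable $Y$ supported on an interval of length $L$ one has $\E[e^{-\lambda Y}] \leq e^{\lambda^2 L^2/8}$; I would prove this by noting that $x \mapsto e^{-\lambda x}$ is convex, so that $e^{-\lambda Y}$ is bounded above on the support by the corresponding chord, and then bounding the resulting cumulant generating function by its second-order Taylor expansion, whose second derivative is the variance of a random variable confined to an interval of length $L$ and hence at most $L^2/4$. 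Plugging this in yields $\P[S \leq -t] \leq \exp\!\big(-\lambda t + \tfrac{\lambda^2}{8}\sum_{i=1}^n (b_i - a_i)^2\big)$, and the choice $\lambda = 4t / \sum_{i=1}^n (b_i - a_i)^2$ minimizes the exponent, producing the claimed bound $\exp\!\big(-2t^2 / \sum_{i=1}^n (b_i - a_i)^2\big)$.

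The main obstacle is Hoeffding's lemma, i.e.\ controlling $\E[e^{-\lambda Y}]$ for a bounded mean-zero variable; everything else is bookkeeping. Since all four statements are entirely standard, in the write-up I would simply cite a standard text on probability theory and concentration of measure rather than reproduce these arguments in full.
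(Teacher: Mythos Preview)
Your proposal is correct and in fact more detailed than the paper's own treatment: the paper does not prove these claims at all but simply defers to a standard reference (Boucheron et al.), which is exactly what you suggest doing in your final sentence. Your sketched arguments for the variance identity, Markov, Chebyshev, and the Cram\'er--Chernoff derivation of Hoeffding (via Hoeffding's lemma and optimization over $\lambda$) are the standard ones and are sound.
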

	
	\begin{proof}
		The proofs of these claims can be found in standard texts on statistics. For example, one can find them in \cite{Boucheron12}.
	\end{proof}
	
	\begin{theorem}{Stirling's approximation}
		\label{thm:stirling}
		Let $n \in \N$. We have:
		\[n^{n+\frac12}e^{-n}\sqrt{2\pi} \leq n! \leq n^{n+\frac12}e^{-n}e.\]
	\end{theorem}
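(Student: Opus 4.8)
The plan is to pass to an additive formulation. For $n \in \N$ set $a_n = \ln(n!) - \left(n + \tfrac12\right)\ln n + n$; then the asserted double inequality is equivalent to $\ln\sqrt{2\pi} \leq a_n \leq 1$ for every $n$. I would follow the classical Robbins argument: show that $(a_n)_n$ is strictly decreasing while $\left(a_n - \tfrac{1}{12n}\right)_n$ is strictly increasing, deduce that the two sequences converge to one and the same limit $L$, read the upper bound off from $a_n \leq a_1 = 1$, and finally pin down $L = \ln\sqrt{2\pi}$ to obtain the lower bound.

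For the monotonicity, a direct computation gives $a_n - a_{n+1} = \left(n + \tfrac12\right)\ln\tfrac{n+1}{n} - 1$. Substituting $x = \tfrac{1}{2n+1}$, so that $n + \tfrac12 = \tfrac{1}{2x}$ and $\tfrac{n+1}{n} = \tfrac{1+x}{1-x}$, and invoking the series $\ln\tfrac{1+x}{1-x} = 2\sum_{k \geq 0} \tfrac{x^{2k+1}}{2k+1}$, one finds $a_n - a_{n+1} = \sum_{k \geq 1} \tfrac{x^{2k}}{2k+1} > 0$. Bounding this tail from above by $\tfrac13 \sum_{k \geq 1} x^{2k} = \tfrac13 \cdot \tfrac{x^2}{1-x^2}$ and using $\tfrac{x^2}{1-x^2} = \tfrac{1}{(2n+1)^2 - 1} = \tfrac{1}{4n(n+1)}$ gives $a_n - a_{n+1} < \tfrac{1}{12}\left(\tfrac1n - \tfrac1{n+1}\right)$, which rearranges to $a_{n+1} - \tfrac{1}{12(n+1)} > a_n - \tfrac{1}{12n}$. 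Hence $(a_n)_n$ decreases and $\left(a_n - \tfrac{1}{12n}\right)_n$ increases; both are bounded, so they converge, and since $\tfrac{1}{12n} \to 0$ they share a limit $L$ with $L < a_n \leq a_1 = 1$ for all $n$. The right inequality is exactly $n! \leq n^{n+\frac12}e^{-n}e$.

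It then remains to identify $L$. Writing $n! = e^{a_n} n^{n+\frac12} e^{-n}$ and substituting into Wallis' product in the form $\lim_{n\to\infty} \tfrac{2^{4n}(n!)^4}{((2n)!)^2(2n+1)} = \tfrac{\pi}{2}$, the powers of $n$ and of $2$ cancel and one is left with $\tfrac{e^{4a_n}}{e^{2a_{2n}}} \cdot \tfrac{n}{2(2n+1)} \to e^{2L} \cdot \tfrac14$, so $e^{2L} = 2\pi$ and $L = \ln\sqrt{2\pi}$; since $a_n > L$ this yields $n! > n^{n+\frac12}e^{-n}\sqrt{2\pi}$, in particular the claimed lower bound. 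The only non-elementary ingredient is Wallis' product itself, which one obtains by repeated integration by parts of $\int_0^{\pi/2} \sin^k\theta \, \d\theta$ (or, alternatively, one may import the value of the Gaussian integral). I expect this identification of the limiting constant to be the one genuinely delicate point; the monotonicity estimates are entirely mechanical.
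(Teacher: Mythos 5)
Your proof is correct and complete; it is the classical Robbins argument. The paper itself does not prove the inequality — it simply points to \cite{Weisstein} — so your contribution here is a self-contained derivation that the paper omits. All the steps check out: the telescoping identity $a_n - a_{n+1} = (n+\tfrac12)\ln\tfrac{n+1}{n} - 1 = \sum_{k\geq 1}\tfrac{x^{2k}}{2k+1}$ with $x = \tfrac{1}{2n+1}$, the upper bound $\tfrac13\cdot\tfrac{x^2}{1-x^2} = \tfrac{1}{12n(n+1)} = \tfrac{1}{12}(\tfrac1n-\tfrac1{n+1})$ giving the monotonicity of the bracketing sequences, the observation $a_1 = 1$ for the right-hand inequality, and the Wallis substitution $\tfrac{2^{4n}(n!)^4}{((2n)!)^2(2n+1)} = e^{4a_n-2a_{2n}}\cdot\tfrac{n}{2(2n+1)} \to \tfrac{e^{2L}}{4} = \tfrac{\pi}{2}$ to identify $L = \ln\sqrt{2\pi}$ and hence the left-hand inequality from $a_n > L$. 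Two small points worth noting explicitly if you write this up in full: (i) the boundedness needed to conclude convergence follows from the two sequences sandwiching each other, i.e.\ $a_1 - \tfrac{1}{12} \leq a_n - \tfrac{1}{12n} < a_n \leq a_1$; and (ii) your argument in fact yields the strictly sharper $L < a_n < L + \tfrac{1}{12n}$ for all $n$, which is essentially the ``slightly stronger result'' the paper alludes to in its citation — the stated theorem only needs the weaker $a_n \leq a_1 = 1$, which your chain recovers at $n = 1$.
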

	
	\begin{proof}
		The proof of this result can be found in many places, a slightly stronger result can for instance be found in \cite{Weisstein}.
	\end{proof}
	
	\begin{theorem}{Robustness of measurements}
		\label{thm:measurement_robustness}
		Let $n \in \N$ and $\delta,\varepsilon > 0$. Suppose we have two $n$-qubit states $\ket{\psi}$ and $\ket{\phi}$, such that $\norm{\ket{\psi} - \ket{\phi}} \leq \varepsilon$. Furthermore, suppose that a measurement with corresponding outcome set $\Omega$ performed on $\ket{\psi}$ yields an outcome in $S \subseteq \Omega$ with probability lower bounded by $\delta$. Then, performing the same measurement on $\ket{\phi}$ yields an element from $S$ with probability lower bounded by $\delta - \varepsilon$.
	\end{theorem}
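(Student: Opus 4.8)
The plan is to prove the robustness-of-measurements statement by reducing everything to a single fact about how close quantum states produce close measurement statistics, which in turn follows from the standard relationship between the Euclidean distance of state vectors and their trace distance. Concretely, let $P_S = \sum_{s \in S} \ket{s}\bra{s}$ (or more generally the positive-semidefinite measurement operator associated with obtaining an outcome in $S$; in the projective case this is an orthogonal projector). Then by the Born rule, the probability of measuring an element of $S$ on a state $\ket{\chi}$ is $\bra{\chi} P_S \ket{\chi}$. The hypothesis says $\bra{\psi} P_S \ket{\psi} \geq \delta$, and we want $\bra{\phi} P_S \ket{\phi} \geq \delta - \varepsilon$.

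The key step is to bound $\left| \bra{\psi} P_S \ket{\psi} - \bra{\phi} P_S \ket{\phi} \right|$ by $\varepsilon$. First I would write the difference as a telescoping sum, $\bra{\psi} P_S \ket{\psi} - \bra{\phi} P_S \ket{\phi} = \bra{\psi} P_S (\ket{\psi} - \ket{\phi}) + (\bra{\psi} - \bra{\phi}) P_S \ket{\phi}$. Applying the Cauchy--Schwarz inequality to each term, and using that $P_S$ has operator norm at most $1$ (it is a projector, or more generally a POVM element), each term is bounded in absolute value by $\norm{\ket{\psi} - \ket{\phi}}$, since $\norm{\ket{\psi}} = 1$ and $\norm{P_S \ket{\phi}} \leq \norm{\ket{\phi}} = 1$. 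This gives $\left| \bra{\psi} P_S \ket{\psi} - \bra{\phi} P_S \ket{\phi} \right| \leq 2\norm{\ket{\psi} - \ket{\phi}} \leq 2\varepsilon$, which is a factor of $2$ weaker than claimed. To recover the stated bound without the factor $2$, I would instead use the cleaner estimate obtained by noting $\bra{\psi}P_S\ket{\psi} - \bra{\phi}P_S\ket{\phi} = \operatorname{Re}\!\left[(\bra{\psi}+\bra{\phi})P_S(\ket{\psi}-\ket{\phi})\right] - \operatorname{Re}\!\left[\bra{\phi}P_S\ket{\phi} - \bra{\psi}P_S\ket{\phi} + \cdots\right]$; a more reliable route is to invoke the standard inequality that for unit vectors $\ket{\psi},\ket{\phi}$ with $\norm{\ket{\psi}-\ket{\phi}} \leq \varepsilon$, the trace distance satisfies $\frac12\norm{\ket{\psi}\bra{\psi} - \ket{\phi}\bra{\phi}}_1 \leq \norm{\ket{\psi}-\ket{\phi}}$, and the trace distance upper-bounds the difference in any measurement probability. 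Either way, we conclude $\bra{\phi} P_S \ket{\phi} \geq \bra{\psi} P_S \ket{\psi} - \varepsilon \geq \delta - \varepsilon$, which is exactly the claim.

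The main obstacle, such as it is, is purely bookkeeping: getting the constant exactly right (the clean statement has $\delta - \varepsilon$, not $\delta - 2\varepsilon$), which requires the sharp form of the state-distance-to-trace-distance bound rather than a naive triangle-inequality argument. Since this is a folklore result included only for reference, I would simply cite a standard source such as Nielsen and Chuang \cite{Nielsen00} for the inequality relating $\norm{\ket{\psi}-\ket{\phi}}$ to the trace distance $\frac12\norm{\ket{\psi}\bra{\psi}-\ket{\phi}\bra{\phi}}_1$, and then note that measurement probabilities differ by at most the trace distance, giving the result in two lines. No genuinely hard step is involved; the proof is a direct consequence of the Born rule and elementary operator inequalities.
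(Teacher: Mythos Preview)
Your proposal is correct, and in fact goes further than the paper itself: the paper does not prove this statement at all but simply cites it as folklore and points to the author's thesis (\cite{Cornelissen18}, Lemma~B.1) for a proof. Your trace-distance argument is the standard one and gives exactly the claimed constant: from $\tfrac12\norm{\ket{\psi}\bra{\psi}-\ket{\phi}\bra{\phi}}_1 \leq \norm{\ket{\psi}-\ket{\phi}}\leq\varepsilon$ and $|\mathrm{tr}(P_S(\rho-\sigma))|\leq \tfrac12\norm{\rho-\sigma}_1$ for any $0\leq P_S\leq I$, one immediately gets $\bra{\phi}P_S\ket{\phi}\geq \bra{\psi}P_S\ket{\psi}-\varepsilon\geq \delta-\varepsilon$.

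One small remark on presentation: your intermediate attempt at a direct telescoping identity (the line beginning ``$\operatorname{Re}[(\bra{\psi}+\bra{\phi})P_S(\ket{\psi}-\ket{\phi})]-\ldots$'') trails off and is not needed; you can drop it entirely and go straight to the trace-distance bound, which is both clean and sharp. Citing Nielsen--Chuang for the inequality $\tfrac12\norm{\ket{\psi}\bra{\psi}-\ket{\phi}\bra{\phi}}_1\leq\norm{\ket{\psi}-\ket{\phi}}$ is appropriate and matches the spirit of the paper's own treatment.
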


	\begin{proof}
		This is a well-known result in quantum information theory. A proof can for instance be found in \cite{Cornelissen18}, Lemma B.1.
	\end{proof}

	\begin{theorem}{Hybrid method}
		\label{thm:hybrid_method}
		Let $n,N \in \N$. Suppose that $O_0, \dots, O_N$ are unitary operators acting on $n$ qubits. Let $\A$ be a quantum algorithm which is given access to the input oracle $O_j$ for one $j \in \{0,1,\dots, N\}$. For all $j \in \{1, \dots, N\}$, let $R_j$ and $R_j^*$ be disjoint sets. Suppose that for all $j \in \{1, \dots, N\}$,
		\[\P\left[\A(j) \in R_j\right] \geq \frac23 \qquad \text{and} \qquad \P\left[\A(0) \in R_j^*\right] \geq \frac23,\]
		where for all $j \in \{0,1,\dots,N\}$, $\A(j)$ is the measurement result of the algorithm run on input oracle $O_j$. Then the worst-case query complexity of $\A$ to the input oracle satisfies:
		\[T_{\A} \geq \sqrt{\frac{N}{\displaystyle 9 \max_{\underset{\norm{\ket{\psi}}=1}{\ket{\psi} \in \C^{2^n}}}\sum_{j=1}^N \norm{\left(O_j - O_0\right)\ket{\psi}}^2}}.\]
	\end{theorem}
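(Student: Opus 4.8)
The plan is to combine a standard telescoping hybrid argument with the distinguishing hypotheses and the robustness of measurements, \autoref{thm:measurement_robustness}. Since $T_{\A}$ bounds the number of queries on every input almost surely, I would first bring $\A$ into a canonical form: by deferring all intermediate measurements to the end and padding shorter runs with queries to the identity, we may assume $\A$ consists of a fixed sequence of unitaries $U_0, O, U_1, O, \dots, O, U_{T}$ applied to a fixed initial state, followed by a single measurement, where $T = T_{\A}$ and $O$ denotes whichever input oracle $O_j$ is supplied (the $O_j$ acting as the identity on the workspace registers). For $j \in \{0,1,\dots,N\}$ write $\ket{\psi^{(j)}}$ for the final state produced when the oracle is $O_j$, and for $t \in \{1,\dots,T\}$ write $\ket{\phi_t}$ for the state in the $O_0$-run immediately before the $t$-th query; crucially $\ket{\phi_t}$ does not depend on $j$. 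The goal is to lower- and upper-bound $\sum_{j=1}^N \norm{\ket{\psi^{(0)}} - \ket{\psi^{(j)}}}^2$ and compare the two bounds.

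For the lower bound I would fix $j \in \{1,\dots,N\}$ and note that disjointness of $R_j$ and $R_j^*$ together with $\P[\A(0) \in R_j^*] \geq 2/3$ forces $\P[\A(0) \in R_j] \leq 1/3$. If we had $\norm{\ket{\psi^{(0)}} - \ket{\psi^{(j)}}} < 1/3$, then \autoref{thm:measurement_robustness}, applied to the final measurement with success set $R_j$ (and $\ket{\psi}=\ket{\psi^{(j)}}$, $\ket{\phi}=\ket{\psi^{(0)}}$), would give $\P[\A(0) \in R_j] \geq \P[\A(j) \in R_j] - 1/3 \geq 2/3 - 1/3 = 1/3$, contradicting $\P[\A(0)\in R_j]\le 1/3$. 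Hence $\norm{\ket{\psi^{(0)}} - \ket{\psi^{(j)}}} \geq 1/3$ for every $j$, so $\sum_{j=1}^N \norm{\ket{\psi^{(0)}} - \ket{\psi^{(j)}}}^2 \geq N/9$.

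For the upper bound I would run the usual hybrid telescoping: interpolating between the $O_0$-run and the $O_j$-run by replacing one query at a time, consecutive hybrids differ only by applying $O_0$ versus $O_j$ to the (common) state $\ket{\phi_t}$ and then composing the same norm-preserving tail, so the triangle inequality gives
\[
\norm{\ket{\psi^{(0)}} - \ket{\psi^{(j)}}} \leq \sum_{t=1}^{T} \norm{\left(O_0 - O_j\right)\ket{\phi_t}}.
\]
Squaring and Cauchy--Schwarz give $\norm{\ket{\psi^{(0)}} - \ket{\psi^{(j)}}}^2 \leq T \sum_{t=1}^T \norm{\left(O_0 - O_j\right)\ket{\phi_t}}^2$; summing over $j$, exchanging the two finite sums, and bounding $\sum_{j=1}^N \norm{\left(O_j - O_0\right)\ket{\phi_t}}^2 \leq \max_{\norm{\ket{\psi}}=1}\sum_{j=1}^N \norm{\left(O_j - O_0\right)\ket{\psi}}^2$ for each $t$ yields
\[
\sum_{j=1}^N \norm{\ket{\psi^{(0)}} - \ket{\psi^{(j)}}}^2 \leq T_{\A}^2 \max_{\norm{\ket{\psi}}=1}\sum_{j=1}^N \norm{\left(O_j - O_0\right)\ket{\psi}}^2.
\]
Chaining this with $N/9 \leq \sum_{j=1}^N \norm{\ket{\psi^{(0)}} - \ket{\psi^{(j)}}}^2$ and rearranging gives the stated bound.

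The main obstacle is bookkeeping rather than conceptual: one has to be careful that $\A$ can genuinely be normalized to a fixed-length unitary circuit without altering any of the probabilities $\P[\A(j)\in R_j]$ and $\P[\A(0)\in R_j^*]$, i.e.\ that data-dependent stopping and intermediate measurements are legitimately absorbed by deferred measurement and padding within the worst-case budget $T_{\A}$, and that the embedding of the query oracles into the full state space (with identity action on ancillae) is handled consistently so that the telescoping step is literally an identity of states. Everything past that point is the standard hybrid computation.
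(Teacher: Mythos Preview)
Your argument is correct and is precisely the standard hybrid-method proof; the paper itself does not prove this statement in-text but defers to \cite{Gilyen18} and \cite{Cornelissen18}, where essentially the argument you give appears. One small exposition point: in the lower-bound step, apply \autoref{thm:measurement_robustness} with $\varepsilon$ equal to the actual distance $\norm{\ket{\psi^{(0)}}-\ket{\psi^{(j)}}}<1/3$, which yields the \emph{strict} inequality $\P[\A(0)\in R_j]>1/3$ and hence a genuine contradiction with $\P[\A(0)\in R_j]\le 1/3$; as written with $\varepsilon=1/3$ you only obtain $\ge 1/3$.
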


	\begin{proof}
		One can find a similar result in the proof in \cite{Gilyen18}, Theorem 2. The full statement with all the details is proven in \cite{Cornelissen18}, Appendix C.
	\end{proof}
\end{document}